\def\Nat{\mathbb{N}}
\def\reals{\mathcal{R}^{\geq 0}}
\title{THE CHURCH SYNTHESIS PROBLEM OVER CONTINUOUS TIME}
\author[A.~Rabinovich]{Alexander Rabinovich\lmcsorcid{0000-0002-1460-2358}} 
\author[D.~Fattal]{Daniel Fattal}
\address{Tel-Aviv University}
\email{rabinoa@tauex.tau.ac.il, danielf3141@gmail.com}
\thanks{Supported in part by Len Blavatnik and the Blavatnik Family foundation. D. Fattal was supported in part by  the Israel Science Foundation (grant number 5166651).}
\newcommand{\cA}{\mathcal{A}}
\newcommand{\MSO}{\mathsf{MSO}}
\newcommand{\MLO}{\mathsf{MLO}}
\def\om{\omega}
\def\nat{\mathbb{N}}
\def\O{$\mathcal{O}~$}
\def\Op{$\mathcal{O}$}
\def\I{$\mathcal{I}~$}
\def\Ip{$\mathcal{I}$}
\def\cG{\mathcal{G}}
\def\MLO{\mathit{MSO[<]}}
\def\Sig{\mathit{FVsig}}
\def\reals{\mathcal{R}^{\geq 0}}
\def\XI{X^{\Sigma_{in}}}
\def\Xi2{X^{\Sigma^2_{in}}}
\def\Yi2{Y^{\Sigma^2_{out}}}
\def\Yo {Y^{\Sigma_{out}}}
\def\Xs{X^\Sigma}
\def\Xsig2{X^{\Sigma\times \Sigma}}
\def\Inf{\mathit{Inf}}
\def\sf{\mathit{sf}}
\def\new{\mathit{new}}
\begin{document}

\maketitle

\begin{abstract}
Church’s Problem asks for the construction of a procedure which, given a logical specification
$\varphi(I, O)$ between input $\omega$-strings $I $ and output $\omega$-strings $O$, determines whether there exists an operator
$F$ that implements the specification in the sense that $\varphi(I, F(I))$  holds for all inputs $I$. Büchi and
Landweber provided a procedure to solve Church’s problem for the  specifications in  Monadic Second-Order  logic  and operators
computable by finite-state automata.
We investigate  a generalization of the Church synthesis problem to  the continuous time domain of the non-negative reals.
 We show that
in the  continuous time domain there are phenomena  which are  very different from
the canonical discrete time domain of
the natural numbers.
%seminal  Büchi and Landweber  theorem.
\end{abstract}

\section{Introduction}
\subsection{Church Synthesis Problem}
Let $Spec$ be a specification language and $Pr$ be an implementation language. The following problem is known as the Synthesis problem.

\begin{center}
\fbox{%
  \parbox{0.98\textwidth}{%
  \textbf{Synthesis problem}\\
Input: A specification $S(I,O)\in Spec$.\\
Question: Is there a program $P\in Pr$ which implements it, i.e., $\forall I.\ S(I,P(I))$?
  }%
}
\end{center}

We first discuss the \textbf{classical Church synthesis problem over discrete time $\omega=(\Nat,<)$},  and then explain its generalization to continuous time.

\emph{The specification language} for the Church Synthesis problem is the  Monadic Second-Order Logic of Order ($\MLO$).
Monadic second-order logic is the extension of  first-order logic by monadic predicates and quantification over them.
We use lower case letters $x,y,z,\dots$ to denote first-order variables and  upper case letters $X,Y,Z,\dots$ to denote second-order variables.
The atomic formulas of $\MLO$ are $x<y$ and  $X(y)$. The  formulas are constructed from the atomic formulas by boolean connectives and the first-order and the  second-order quantifiers.
A formula $\varphi(x_0,\dots,x_{k-1},X_0,\dots,X_{l-1})$ is a formula with free variables among $x_0,\dots,x_{k-1},X_0,\dots,X_{l-1}$.

In an $\MLO$ formula $\varphi(X,Y)$, free variables   $X$ and $Y$  range over  monadic predicates over $\Nat$.
Since each such monadic predicate  can be identified with its characteristic $\omega$-string, $\varphi$ defines a binary relation on
  $\omega$-strings. A function $F$ from $\omega$-strings to $\omega$-strings is definable by $\varphi(X,Y)$ iff
  $\forall X \forall Y(Y=F(X) \leftrightarrow\varphi(X,Y))$.
%\rightsquigarrow

The \emph{implementation} languages defines functions from  $\omega$-strings  to $\omega$-strings. Such functions are called operators. A machine that computes an operator
at every moment $t\in \mathbb{N}$ reads an input symbol $X(t)\in \{0, 1\}$, updates its internal state, and produces an output
symbol $Y (t) \in \{0, 1\}$. Hence, the output $Y(t)$ produced at $t$ depends only on inputs
symbols $X(0), X(1),\dots, X(t)$. Such operators are called causal operators (C-operators);   if
the output $Y(t)$ produced at $t$ depends only on inputs symbols $X(0), \dots, X(t-1)$,
the corresponding operator is called a strongly causal (SC-operator).

Another property of interest is that the machine computing $F$ is  finite-state.
In  light of Büchi’s proof \cite{Bu62}  of the expressive equivalence of $\MLO$  and finite automata,
$F$  is finite-state if and only if it is $\MLO$-definable.

The following problem is known as the Church Synthesis problem over the discrete time of  naturals $\omega:=(\Nat,<)$.
\begin{center}
\fbox{%
  \parbox{0.98\textwidth}{%
  \textbf{Church Synthesis problem}\\
Input: an $\MLO$ formula $\Psi(X,Y)$.\\
Question: Is there  a C-operator $F$ such that
$ \forall X\Psi(X,F(X))$ holds over $\omega:=(\Nat,<)$?
  }%
}
\end{center}
%\bigskip
B\"{u}chi and Landweber \cite{BL} proved that the Church synthesis problem is
computable. Their main theorem can be stated as: % follows:
\begin{thm}
\label{BLthm}
Let $\Psi(X,Y)$ be an $\MLO$ formula.
\begin{enumerate}
    \item Determinacy: exactly one of the following holds for $\Psi$:
    \begin{enumerate}
        \item There is a C-operator $F$ such that  $\omega\models \forall X.\ \Psi(X,F(X))$.
        \item There is a SC-operator $G$ such that $\omega\models \forall Y.\ \neg \Psi(G(Y),Y)$.
    \end{enumerate}
    \item Decidability: it is decidable whether 1 (a) or 1 (b) holds.
    \item Definability:
    \begin{enumerate}
        \item If 1 (a) holds, then there is an $\MLO$ formula $U$ that defines a C-operator which implements $\Psi$.
        %\item Similarly for 1 (b).
        \item If 1 (b) holds, then there is an $\MLO$ formula $U$ that defines a SC-operator $G$ such that $\omega\models \forall Y.\ \neg \Psi(G(Y),Y)$.
    \end{enumerate}
    \item Computability: There is an algorithm such that for each $\MLO$ formula $\Psi(X,Y)$:
    \begin{enumerate}
        \item If 1 (a) holds, constructs an $\MLO$ formula $\varphi(X,Y)$ that defines $F$.
       % \item Similarly for 1 (b).
       \item If 1 (b) holds, constructs an $\MLO$ formula $\varphi(X,Y)$ that defines $G$.
    \end{enumerate}
\end{enumerate}
\end{thm}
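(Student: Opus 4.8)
The plan is to reformulate the synthesis question as an infinite two-player game of perfect information on a finite arena, and then to apply the determinacy theory of $\om$-regular games.

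\emph{Step 1: from logic to a deterministic automaton.} By Büchi's theorem~\cite{Bu62}, the binary relation defined by $\Psi(X,Y)$ over $\om=(\Nat,<)$ is recognized by a nondeterministic Büchi automaton $\cA_0$ running over the product alphabet $\Sigma=\{0,1\}\times\{0,1\}$, a letter of which encodes a pair $(X(t),Y(t))$. Using McNaughton's determinization (or Safra's construction), turn $\cA_0$ into a \emph{deterministic} automaton $\cA$ over $\Sigma$ with a Muller acceptance condition (equivalently, after a standard step, a Rabin or parity condition); $\cA$ accepts an $\om$-word over $\Sigma$ iff the corresponding pair $(X,Y)$ satisfies $\Psi$. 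All of this is effective in $\Psi$.

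\emph{Step 2: the game.} Consider the two-player game played in rounds $t=0,1,2,\dots$: in round $t$, player $\mathcal{I}$ (the ``input'' player) first chooses a bit $X(t)\in\{0,1\}$, and then player $\mathcal{O}$ (the ``output'' player) chooses a bit $Y(t)\in\{0,1\}$; the running state of $\cA$ is updated along the letter $(X(t),Y(t))$. An infinite play produces an $\om$-word over $\Sigma$, and $\mathcal{O}$ is declared the winner iff that word is accepted by $\cA$, i.e.\ iff $\Psi(X,Y)$ holds. Because $\mathcal{I}$ commits to $X(t)$ before $\mathcal{O}$ answers with $Y(t)$, a strategy for $\mathcal{O}$ is precisely a function mapping $X(0),\dots,X(t)$ to $Y(t)$ --- a C-operator; symmetrically, a strategy for $\mathcal{I}$ maps $Y(0),\dots,Y(t-1)$ to $X(t)$ --- an SC-operator. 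The current state of $\cA$ together with finite memory is a sufficient statistic for each player.

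\emph{Step 3: determinacy, finite memory, effectiveness.} Invoke the central game-theoretic fact: a game on a finite arena with a Muller winning condition is determined, and the winning player has a winning strategy realized by a finite-memory machine --- this is the combinatorial heart of Büchi and Landweber's argument (later refined, via the passage to parity conditions, to \emph{memoryless} strategies, with Gurevich and Harrington supplying an alternative route). Applying this to our game yields exactly one of the two alternatives of part~1: either $\mathcal{O}$ wins, so there is a finite-state C-operator $F$ with $\om\models\forall X.\ \Psi(X,F(X))$, or $\mathcal{I}$ wins, so there is a finite-state SC-operator $G$ with $\om\models\forall Y.\ \neg\Psi(G(Y),Y)$; these are mutually exclusive because a single play cannot be winning for both players, which is the determinacy claim. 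Since a finite-state operator is $\MLO$-definable (Büchi's theorem in the synchronous, letter-to-letter direction), we obtain part~3, definability. Finally, deciding the winner of a Muller or parity game on a finite graph and extracting a finite-memory winning strategy are algorithmic tasks, as are all the logic--automata translations used above; composing these algorithms gives the decision procedure of part~2 and the construction of part~4.

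\emph{Main obstacle.} The real work is concentrated in Step~3: proving determinacy of $\om$-regular (Muller or Rabin) games on finite arenas \emph{together with} the existence and computability of finite-memory winning strategies. Plain Borel determinacy would already give the determinacy of part~1, but it is useless for the finite-state definability and effectiveness claims. The second technically demanding ingredient is Step~1, the determinization of Büchi automata, which is what makes the winning condition in the game tractable. By contrast, the bookkeeping of Step~2 relating the order of moves to causality, and the re-translation of finite automata back to $\MLO$, are routine.
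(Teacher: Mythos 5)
Your outline is the standard Büchi--Landweber argument, and it matches the route the paper itself points to (it states Theorem \ref{BLthm} as a cited result of \cite{BL}, remarking only that the proof rests on the connections between logic, games and automata): translate $\Psi$ to a deterministic $\om$-automaton, play the letter-by-letter game on its finite arena where the move order makes \O-strategies C-operators and \I-strategies SC-operators, and invoke effective finite-memory determinacy of Muller/parity games plus the automata-to-$\MLO$ direction of Büchi's theorem for definability and computability. The proposal is correct as a proof sketch, with the heavy lifting (determinization and finite-memory determinacy) correctly identified and delegated to the classical theorems, exactly as the paper does.
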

Recall that a formula $U(X,Y)$ defines an operator  $F$ if  $\forall X\forall Y (Y=F(X) \leftrightarrow U(X,Y))$.
In the above theorem  we use ``C-operator definable by an $\MLO$ formula'' instead of an equivalent ``operator computable by a finite state transducer (automaton).'' This  allows us to lift this theorem to continuous time where finite state transducers were not defined.

The proof of Theorem \ref{BLthm} is based on the fundamental connections between Logic, Games and Automata.
\par
Classical  Automata theory results show an equivalence between three fundamental
formalisms: logical nets, automata and $\MLO$
logic.
Trakhtenbrot \cite{Tr95} suggested a program  to extend the classical results from discrete time 
  to other time domains.
%Pardo, Rabinovich and
%Trakhtenbrot  \cite{DAT04}  obtained some initial results in this
%programme.

In this work we investigate the Church synthesis problem over continuous time.

$\MLO$ has a natural interpretation over the non-negative reals $(\reals,<)$.

Shelah \cite{She75} proved that $\MLO$  is undecidable over the reals if we allow quantification over
arbitrary predicates. In Computer Science, however, it is natural to restrict predicates  to finitely variable (non-Zeno)
predicates.

A signal  is a function $f$ from $\reals$  into a finite set $\Sigma$  such that  $f$ satisfies the finite variability (non-Zeno) property: there exists an unbounded increasing sequence
$\tau_0=0<\tau_1<\cdots$  such that $f$ is constant on every open interval $(\tau_i,\tau_{i+1})$.
In other words, $f$
is finitely variable if it   has finitely
many discontinuities in any bounded sub-interval of $\reals$.

 We denote the structure for $\MLO$ over $(\reals,<)$ with the finite variability  predicates  as the interpretation for the monadic variables,
by $\Sig$; $\MLO$ over $\Sig$ is decidable. The  $C$-operators   and $SC$-operators from  signals to signals are defined naturally: $F$ is a $C$ (respectively, SC) operator
if the output $F(X)$ at $t\in \reals$  depends only on $X$ in the interval $[0,t]$ (respectively, $[0,t)$). We investigate the following synthesis problem:
\begin{center}
\fbox{%
  \parbox{0.98\textwidth}{%
  \textbf{Church Synthesis problem for continuous time}\\
Input: an $\MLO$ formula $\Psi(X,Y)$.\\
Question: Is there   a C-operator $F$ such that
$ \forall X.\ \Psi(X,F(X))$ holds in $\Sig$?
  }%
}
\end{center}
\subsection{Results}
It turns out that  the discrete case and the case for continuous time are very different.
Here are our main results.
\begin{restatable}[Indeterminacy]{thm}{indeter}
\label{indeterminancy}  The synthesis problem for continuous time is indeterminate.\\
There exists  an $\MLO$ formula $\Psi(X,Y)$ such that:
\begin{enumerate}
    \item There is no C-operator $F$ such that  $\Sig\models \forall X.\ \Psi(X,F(X))$.
    \item There is no SC-operator $G$ such that  $\Sig\models \forall Y.\ \neg\Psi(G(Y),Y)$.
\end{enumerate}
\end{restatable}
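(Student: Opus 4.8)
The plan is to exhibit a single $\MLO$ formula $\Psi(X,Y)$ whose behaviour genuinely exploits the density of $(\reals,<)$, so that neither player can have a winning strategy in the associated Gale–Stewart game. The intuition: in discrete time, after reading $X$ up to time $t$ the Controller must commit to $Y(t)$ using only finitely much information, and Büchi–Landweber guarantees one of the two players wins. Over the reals, I want a specification where the Controller is asked, in effect, to react to the input "infinitely fast" near some point — e.g. to make the first discontinuity of $Y$ occur strictly before the first discontinuity of $X$, or to produce an output whose value immediately after time $0$ anticipates something about $X$ on an open interval $(0,\varepsilon)$. Concretely, I would try a formula expressing something like: "$Y$ has a discontinuity in $(0,t)$ iff $X$ has a discontinuity in $(0,t)$, and moreover the first discontinuity of $Y$ strictly precedes the first discontinuity of $X$." A causal $F$ cannot place a discontinuity of $Y$ before it has seen any discontinuity of $X$ (strong causality of the detection), yet it is required to; symmetrically, the dual strongly-causal strategy for the Environment is asked to defeat $Y$ by controlling $X$ strictly-causally, and the same density phenomenon blocks it. The finite-variability (non-Zeno) restriction is crucial and must be used: both the "there is a first discontinuity" phrasing and the impossibility arguments rely on signals being locally finitely variable.

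The key steps, in order. First, fix the candidate formula $\Psi$ and verify it is a legitimate $\MLO$ sentence over $\Sig$ — all the notions used ("has a discontinuity in an interval", "the first discontinuity", "constant on an interval") are $\MLO$-definable for finitely variable signals, so this is routine but should be spelled out with explicit formulas. Second, prove part (1): assume a C-operator $F$ implements $\Psi$, and derive a contradiction by a two-input argument. Take $X_0 \equiv 0$ (the constant signal) and let $Y_0 = F(X_0)$; since $\Psi$ forces some behaviour of $Y_0$ tied to a discontinuity of $X_0$ that does not exist, we get an immediate contradiction — or, if $\Psi$ is set up so the constant input is vacuously satisfied, instead take two inputs $X_1, X_2$ agreeing on $[0,t]$ but differing right after $t$ in where their first discontinuity sits, use causality to conclude $F(X_1)$ and $F(X_2)$ agree on $[0,t]$, and show $\Psi$ cannot be satisfied for both. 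Third, prove part (2) by the dual argument: assume a SC-operator $G$ with $\Sig \models \forall Y.\ \neg \Psi(G(Y),Y)$, feed it two outputs $Y_1, Y_2$ agreeing on $[0,t)$, use strong causality to get $G(Y_1)$ and $G(Y_2)$ agreeing on $[0,t)$, and exhibit one of the two $Y_i$ for which $\Psi(G(Y_i),Y_i)$ nevertheless holds. The two halves should be near-mirror images once $\Psi$ is chosen symmetrically enough.

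The main obstacle is the design of $\Psi$ itself: it must be simultaneously (a) expressible in $\MLO$ over signals, (b) such that the Controller's obligation is "strictly causal" in a way a genuinely causal operator still cannot meet — which is delicate, since over continuous time a causal operator already has access to $X$ on the closed interval $[0,t]$, including the value at the instant $t$, so one must pin the impossibility on an open-interval / "strictly before" phenomenon rather than on a one-step delay as in discrete time — and (c) dually unsatisfiable-refutable by any strongly causal Environment strategy. Getting all three at once is the crux; I expect the right formula to hinge on comparing the \emph{infima} of the discontinuity sets of $X$ and of $Y$ and demanding a strict inequality in one direction for $\Psi$ and the impossibility of the reverse arrangement, so that neither "$\inf$ of $X$-jumps $<$ $\inf$ of $Y$-jumps" nor its negation can be forced. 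Once $\Psi$ is correctly engineered, steps 2 and 3 are short diagonal/continuity arguments; the finite-variability hypothesis is what makes "first discontinuity" (or its infimum being attained or not) well-behaved and must be invoked explicitly in both.
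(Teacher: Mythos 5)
Your overall strategy is the right one, and your steps 2 and 3 (the causality argument via two inputs agreeing on an initial segment, and its strongly-causal dual) are essentially what the paper does. But the gap is exactly at the point you flag as the crux: the choice of $\Psi$. The candidate you lean towards --- ``$Y$ has a discontinuity iff $X$ has one, and moreover the first discontinuity of $Y$ strictly precedes the first discontinuity of $X$'' --- fails part (2) of the theorem. The biconditional hands the Environment an escape hatch: define the SC-operator $G$ by $G(Y)(t)=0$ for $t<1$, and for $t\ge 1$ let $G(Y)(t)=1$ if $Y$ has no discontinuity in $(0,1)$ and $G(Y)(t)=0$ otherwise. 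This is strongly causal (the value at $t\ge 1$ depends only on $Y\restriction(0,1)\subseteq Y\restriction[0,t)$), and it refutes your $\Psi$ for every $Y$: if $Y$ never jumps, $G(Y)$ jumps at $1$ and the biconditional fails; if $Y$ first jumps at some $s<1$, then $G(Y)$ never jumps and the biconditional fails; if $Y$ first jumps at some $s\ge 1$, then $G(Y)$ jumps at $1\le s$ and the strict inequality fails. So $\Sig\models\forall Y.\,\neg\Psi(G(Y),Y)$, contradicting what you need. (Your alternative phrasing via infima in the last paragraph also has the inequality oriented the wrong way and is left unresolved.)

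The fix is to drop the biconditional and make the obligation on $Y$ unconditional. The paper takes $\Psi(X,Y):=\exists t>0\,\bigl(X\text{ is constant on }(0,t]\ \wedge\ Y\text{ jumps at }t\bigr)$, i.e.\ $Y$ must jump at some moment at which $X$ has not yet moved --- with no escape clause when $X$ never jumps. Then part (1) goes through essentially as in your sketch (feed $F$ the signals $\delta_x$ that jump only at $x$; non-Zenoness gives a \emph{first} jump $t$ of $F(\delta_1)$ in $(0,1)$, and causality applied to $\delta_t$ yields the contradiction), and for part (2) the dual of $\Psi$ now forces $G(Y)$ to be non-constant on $(0,t]$ for \emph{every} jump point $t$ of $Y$; taking $Y=\delta_1$, letting $a$ be the first jump of $G(\delta_1)$, and then feeding $G$ the signal $\delta_{a/2}$ kills $G$ by strong causality. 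So the skeleton of your plan survives, but only after the formula is repaired; as written, the proposal's central object does not do the job.
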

\begin{restatable}[Dichotomy Fails]{thm}{dichfail} 
\label{dich}  %The synthesis problem for continuous time is indeterminate.\\
There exists an $\MLO$ formula  $\Psi(X,Y)$ such that:
\begin{enumerate}
    \item There is a C-operator $F$ such that  $\Sig\models \forall X.\ \Psi(X,F(X))$.
    \item There is a SC-operator $G$ such that  $\Sig\models \forall Y.\ \neg\Psi(G(Y),Y)$.
\end{enumerate}
\end{restatable}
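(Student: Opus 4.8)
The plan is to exhibit a single $\MLO$ formula $\Psi(X,Y)$ that simultaneously admits a causal implementation and a strongly causal ``counter-implementation'', which contradicts the determinacy clause of Theorem~\ref{BLthm}. It is worth recording first \emph{why} items~1 and~2 are not jointly contradictory over $\Sig$, although they are over $\omega$. Suppose $F$ is a C-operator with $\Sig\models\forall X.\,\Psi(X,F(X))$ and $G$ is an SC-operator with $\Sig\models\forall Y.\,\neg\Psi(G(Y),Y)$. If some signal $X_0$ were a fixpoint of the composition $G\circ F$, i.e.\ $X_0=G(F(X_0))$, then $\Psi(X_0,F(X_0))$ and $\neg\Psi(G(F(X_0)),F(X_0))=\neg\Psi(X_0,F(X_0))$ would both hold --- a contradiction. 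But $G\circ F$ is strongly causal: the restriction of $F(X)$ to $[0,t)$ depends only on the restriction of $X$ to $[0,t)$, and $G$ then reads only $[0,t)$. Over $\omega$ every strongly causal operator has a (unique) fixpoint, constructed point by point, and this is exactly why the Church game is determined there; over $\Sig$ it is not, and the construction reduces to choosing $\Psi,F,G$ with $G\circ F$ fixpoint-free.

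A convenient realisation: let $\Psi(X,Y)$ assert that $X$ and $Y$ have the same left limit at every positive time. For a signal, ``the left limit of $X$ at $t$ equals $1$'' is expressed by $\mathrm{ll}_X(t):=\exists s\,(s<t\wedge\forall r\,(s<r\wedge r<t\to X(r)))$, since a signal is eventually constant on a punctured left-neighbourhood of $t$; the boundary point $t=0$ causes no trouble because the outer quantifier is relativised to times that have a predecessor. Thus set
\[
\Psi(X,Y)\ :=\ \forall t\Bigl(\exists s\,(s<t)\ \longrightarrow\ \bigl(\mathrm{ll}_X(t)\leftrightarrow\mathrm{ll}_Y(t)\bigr)\Bigr).
\]

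For item~1 take $F$ to be the identity operator, which is causal; then $\Psi(X,F(X))=\Psi(X,X)$ holds for every signal, the inner biconditional being a tautology. For item~2 take $G$ to be the operator sending $Y$ to the signal whose value at each $t>0$ is $1$ minus the left limit of $Y$ at $t$ (its value at $0$ is immaterial, say $0$). One verifies: $G$ maps signals to signals, because the left-limit step function of $Y$ has no more discontinuities than $Y$ on any bounded interval; $G$ is strongly causal, because the left limit of $Y$ at $t$ is determined by $Y$ on an arbitrarily small punctured interval to the left of $t$, hence by the restriction of $Y$ to $[0,t)$; and, using the stabilisation fact that the left limit at $t$ of the left-limit step function of $Y$ again equals the left limit of $Y$ at $t$, the left limit of $G(Y)$ at every $t>0$ is the complement of that of $Y$. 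Hence $\mathrm{ll}_{G(Y)}(t)\leftrightarrow\neg\,\mathrm{ll}_Y(t)$ for all $t>0$, so $\Psi(G(Y),Y)$ is false (at every positive time), which gives $\Sig\models\forall Y.\,\neg\Psi(G(Y),Y)$. Consistently with this, $G\circ F=G$ is fixpoint-free: on any open interval where $Y$ takes a constant value $c$ the fixpoint equation would read $c=1-c$.

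The difficulty here is organisational rather than deep. The conceptual crux is the fixpoint observation above: recognising that the dichotomy one wishes to refute forces $G\circ F$ to be fixpoint-free, and that the absence of fixpoints for strongly causal operators is precisely the continuous-time phenomenon to be exploited. The rest is finite-variability bookkeeping --- that $1$ minus the left limit of a signal is again a signal, that one may recompute its left limit, and that the $\MLO$ predicate $\mathrm{ll}_X$ faithfully expresses ``the left limit equals $1$'', including the behaviour at $t=0$.
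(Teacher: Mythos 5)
Your proof is correct and rests on the same mechanism as the paper's: take $F$ to be the identity and let $G$ be a strongly causal operator that outputs, at each time, the complement of the input's germ just before that time, so that no signal can satisfy the specification against $G$. The paper realises this slightly more economically with $\Psi(X,Y):=\forall x\,(X(x)\leftrightarrow Y(x))$ and a $G$ that complements only the constant value of $Y$ on the initial interval $(0,t_0)$ before its first positive jump, but the underlying continuous-time phenomenon (fixpoint-freeness of SC operators, which you articulate explicitly) is identical.
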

\begin{restatable}[Undefinability]{thm}{undef} \label{Undefinability}
% \begin{thm}[Undefinability]
% \label{Undefinability}
 There exists an $\MLO$ formula $\Psi(X,Y)$ such that:
\begin{enumerate}
    \item There is no $\MLO$-definable C-operator $F$ such that  $\Sig\models \forall X.\ \Psi(X,F(X))$.
    \item There is a C-operator $F$ such that  $\Sig\models \forall X.\ \Psi(X,F(X))$.
\end{enumerate}
\end{restatable}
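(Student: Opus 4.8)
The plan is to fix one rather weak-looking $\MLO$ specification $\Psi(X,Y)$, realise it by an explicit causal operator that uses unbounded internal memory, and then rule out all $\MLO$-definable C-operators by an invariance (symmetry) argument that has no counterpart over the rigid discrete structure $(\Nat,<)$.

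The first and decisive step is a structural property of $\MLO$-definable operators: \emph{every $\MLO$-definable operator $F$ commutes with the automorphisms of $\Sig$}. Indeed, the automorphisms of $\Sig$ are the increasing bijections $\pi$ of $(\reals,<)$, acting on signals by relabelling positions (such $\pi$ are homeomorphisms, hence preserve finite variability), and if $U(X,Y)$ defines $F$ then $\Sig\models U(X,Y)$ iff $\Sig\models U(\pi X,\pi Y)$, whence $F(\pi X)=\pi F(X)$. In particular, if $\pi$ fixes the input signal then $F$ of it is constant on every $\pi$-orbit. Apply this to the signal $X_0$ whose value is $1$ at the single point $1$ and $0$ elsewhere: the automorphisms fixing $X_0$ are exactly those fixing $0$ and $1$, and their orbits on $[0,\infty)$ are $\{0\}$, $(0,1)$, $\{1\}$ and $(1,\infty)$. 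Hence $F(X_0)$ is constant on $(1,\infty)$, i.e. \emph{eventually constant}, for every $\MLO$-definable operator $F$. This step uses only definability, not causality; and it is exactly here that continuous time parts company with $\omega=(\Nat,<)$, whose only automorphism is the identity.

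Now take $\Psi(X,Y)$ to assert: \emph{if $X$ is eventually constant but not constant, then $Y$ is not eventually constant}. ``Eventually constant'' and ``constant'' are already expressible in $\FO[<]$, so $\Psi\in\MLO$. For the signal $X_0$ above the hypothesis of $\Psi$ holds, while for any $\MLO$-definable C-operator $F$ the conclusion fails because $F(X_0)$ is eventually constant; hence $\Sig\not\models\forall X.\ \Psi(X,F(X))$, which is item~1. For item~2, define a C-operator $F$ by: $F(X)(t)=1$ iff $X$ has at least one discontinuity in $[0,t]$ and $\lfloor t\rfloor$ is even. This depends only on $X$ on $[0,t]$ (causal) and has finitely many discontinuities in every bounded interval (so it is a signal). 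If $\Psi$'s hypothesis fails, $\Psi(X,F(X))$ holds vacuously; if it holds, $X$ is non-constant, hence has a discontinuity, hence $F(X)$ toggles value at every sufficiently large integer and is not eventually constant, so $\Psi(X,F(X))$ holds. Thus $F$ realises $\Psi$, and note that $F$ freely uses arithmetic on time (the floor function) --- harmless for an arbitrary causal operator, but unavailable to $\MLO$; this is the whole point.

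The substantive part of the argument is the first step: identifying a genuine limitation of $\MLO$-definable operators (automorphism-equivariance, hence local constancy on orbits) that (i) really constrains them, (ii) can be ``named'' by an $\MLO$ specification so that $\Psi$ may forbid it, and (iii) is trivially violated by an unrestricted causal operator. Everything after that --- checking $\Psi\in\MLO$, and the causality and finite variability of the witness operator $F$ --- is routine bookkeeping.
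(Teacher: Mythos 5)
Your proof is correct and takes essentially the same route as the paper: the paper's argument rests on the very same automorphism/speed-independence principle, packaged as Lemma~\ref{InJumpsWhereOutJumps} (an $\MLO$-definable operator's output can jump only where the input jumps), applied to the specification ``$Y$ jumps somewhere in $(0,\infty)$'' with the constant causal witness $F(X)=\delta_1$. Your choices of $\Psi$ (eventually constant but non-constant input must produce a non--eventually-constant output) and of the witness operator are slightly more elaborate, but the decisive step --- equivariance of definable operators under order-automorphisms of $(\reals,<)$, exploited at a point where the input is inert --- is identical.
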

%\begin{exa}Consider a specification  $\Psi(X,Y)$ that states:
%\begin{enumerate}
%  \item $Y$  is zero (false) everywhere except  on finitely many points, where $Y$ has value one (true), and
%  \item There is a nonsingular interval $[t_1,t_2]$ where $X$ is constant and $Y$ has value one  at an internal point $t_3\in (t_1,t_2)$.
%  \end{enumerate}
%This specification is easily formalized in $\MLO$. Note that if $F$ is an $\MLO$ definable operator  and $X$ is constant in $[t_1,t_2]$,
%then $Y=F(X)$ is constant in $(t_1,t_2)$. Therefore, there is  no $\MLO$ definable operator that implements $\Psi$.
%However,  the following (strongly) causal operator $F$ implements $\Psi$: $F(X)$ is one at the  points
%$1, 1+\frac{1}{2}, \cdots, 1+\sum_{i=1}^{k} \frac{1}{2^i}$ and zero everywhere else, if $X$ is not continuous  at  $1, 1+\frac{1}{2}, \cdots, 1+\sum_{i=1}^{k-1}    \frac{1}{2^i}$ and $X$ is continuous  at $1 +\sum_{i=1}^{k}  \frac{1}{2^i}$, where ``$X$ is  continuous at $t$'' stands for  ``there are  $t_1<t<t_2$ such that $X$
%is constant in $[t_1,t_2]$.'' Indeed, since $X$ is finitely variable, it is impossible that $X$  is not continuous at all points  $1+\sum_{i=1}^{m} \frac{1}{2^i}$  for $m\in \Nat$. Hence, there is  $k$ such that $X$ is not   continuous at  $1, 1+\frac{1}{2}, \cdots, 1+\sum_{i=1}^{k-1}    \frac{1}{2^i}$ and  $X$  is  continuous at $1 +\sum_{i=1}^{k}  \frac{1}{2^i}$. For this $X$, $F(X)$ will be one at $1 +\sum_{i=1}^{m}  \frac{1}{2^i}$ for $ m=1,\cdots, k$, and zero at all other points. Is is easy to verify that $F$ is strongly causal.
%\end{exa}
Theorem \ref{Undefinability} leads us to two natural questions: (1) Is it decidable whether a formula is implementable by an $\MLO$-definable C-operator, and if so, is a formula defining an  implementation computable? (2) Is it decidable whether a formula is implementable by some C-operator?\\
The following theorems answer these questions:

\begin{restatable}[Computability of Definable Synthesis]{thm}{dsynth} \label{th:def-c}
Given an $\MLO$ formula $\Psi(X,Y)$, it is decidable whether there exists an $\MLO$-definable C-operator $F$ such that  $\Sig\models \forall X.\ \Psi(X,F(X))$ and if so, there is an algorithm that constructs an $\MLO$ formula that defines $F$.
\end{restatable}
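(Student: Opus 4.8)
The plan is to reduce the definable-synthesis question to the classical discrete synthesis problem of Theorem~\ref{BLthm}. The reduction exploits that, although an arbitrary C-operator over $\Sig$ can be wildly non-finitary --- which is precisely what Theorems~\ref{indeterminancy}--\ref{Undefinability} use --- an \emph{$\MLO$-definable} operator over $\Sig$ has essentially no timing power of its own, and so is captured faithfully by a finite-state operator on an $\om$-word encoding of signals. Concretely, a finitely-variable signal over a finite alphabet (or a tuple of such signals) is determined, up to the order-isomorphism type that $\MLO$ can observe, by the $\om$-word that records, along its event sequence $\tau_0=0<\tau_1<\cdots$, the value at each $\tau_i$ together with the value on the open interval $(\tau_i,\tau_{i+1})$, and conversely every $\om$-word decodes to such a signal (take $\tau_i=i$). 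This correspondence is $\MLO$-faithful in both directions: $\MLO$ over $\Sig$ translates into $\MLO$ over $\om$ (the mechanism underlying decidability of $\MLO$ over $\Sig$), and, because from a signal $X$ one can $\MLO$-define its event points and relativise quantifiers to the (discrete) suborder they form, $\MLO$ over $\om$ translates back into $\MLO$ over $\Sig$ on the reduced encoding.

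The crucial ingredient is a non-acceleration lemma: if $F$ is $\MLO$-definable over $\Sig$, say $Y(t)\leftrightarrow U(X,t)$ with $U$ of quantifier rank $r$, then $F(X)$ is constant on every open interval on which $X$ is constant, so the discontinuities of $F(X)$ are among those of $X$. Indeed, let $t$ range over a maximal open interval $(\tau_i,\tau_{i+1})$ on which $X$ has a fixed value. Splitting the order at $t$, the truth of $U(X,t)$ depends only on the rank-$\le r$ $\MLO$-types of the coloured orders $[0,t)$, the singleton $t$, and $(t,\infty)$. On $(\tau_i,\tau_{i+1})$ the singleton-type is fixed, a monochromatic open real interval always has the same type, and by the composition theorem for ordered sums the types of $[0,t)=[0,\tau_i]+(\tau_i,t)$ and of $(t,\infty)=(t,\tau_{i+1})+[\tau_{i+1},\infty)$ are then fixed as well; hence $U(X,\cdot)$ does not change inside $(\tau_i,\tau_{i+1})$.

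Granting the lemma, an $\MLO$-definable C-operator $F$ over $\Sig$ is described by a word operator $\widehat F$ on encodings: at $\tau_i$ it must emit a point value --- which by causality may depend only on the input up to and including $\tau_i$ --- and a value on $(\tau_i,\tau_{i+1})$, which may depend in addition on the input value on that interval; splitting each block into two consecutive positions makes $\widehat F$ a genuine C-operator over $\om$. Let $\widehat\Psi$ be the $\MLO$ formula over $\om$, computable from $\Psi$ via the encoding, stating that the second argument is a well-formed output encoding and that the decoded pair of signals satisfies $\Psi$. Then $\Psi$ is implemented by an $\MLO$-definable C-operator over $\Sig$ iff $\widehat\Psi$ is implemented by a C-operator over $\om$: from right to left, part~(3) of Theorem~\ref{BLthm} lets us take the implementation finite-state, hence $\MLO$-definable over $\om$, and composing it with the ($\MLO$-definable) reduction map on encodings and with decoding produces, via the backward translation, an $\MLO$-definable operator over $\Sig$ which is a C-operator and implements $\Psi$; from left to right, read $\widehat F$ off $F$ as above and extend it to all $\om$-words by re-stuttering, which is a finite-state operation and preserves implementation since $\widehat\Psi$ is stuttering-invariant. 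The proof then concludes by Theorem~\ref{BLthm}: it is decidable whether $\widehat\Psi$ is implementable over $\om$, and when it is, an $\MLO$ formula over $\om$ defining a C-operator is computable, which the backward translation turns into the required $\MLO$ formula over $\Sig$ defining $F$.

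The main difficulty is the non-acceleration lemma --- locating the exact, very weak, timing power of $\MLO$-definable operators --- together with the bookkeeping it forces: the half-step asymmetry between causality for point values and for interval values, and a precise two-way translation between $\MLO$ over $\Sig$ and $\MLO$ over $\om$, including the verification that the stuttering and common-refinement choices in the encoding are invisible to $\MLO$, so that a finite-state discrete strategy really does descend to, and lift from, a well-defined $\MLO$-definable operator on signals.
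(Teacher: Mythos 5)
Your overall architecture is the paper's: encode an FV-signal by the $\om$-word of (point value, interval value) pairs along its event sequence, prove that an $\MLO$-definable operator cannot introduce new discontinuities, and reduce to Büchi--Landweber over $\om$. Your ``non-acceleration lemma'' is exactly the paper's Lemma~\ref{InJumpsWhereOutJumps}; the paper proves it by a speed-independence (order-automorphism) argument rather than your composition-theorem argument, but both are valid, and your two-position split to handle the point/interval causality asymmetry is a reasonable alternative to the paper's bookkeeping.

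There is, however, a genuine gap in the central equivalence. You reduce to the discrete implementability of $\widehat\Psi$, where $\widehat\Psi$ is just the translation of $\Psi$ plus well-formedness of the output encoding. That biconditional is false, and the paper's own Theorem~\ref{Undefinability} example refutes it: for $\Psi(X,Y):=$ ``$Y$ jumps inside $(0,\infty)$'', the discrete $\widehat\Psi$ is implemented by the constant finite-state strategy that outputs $(0,0)(1,1)^\om$, yet $\Psi$ has no $\MLO$-definable implementation over $\Sig$. The failure occurs exactly in your right-to-left direction: a discrete C-operator for $\widehat\Psi$ may emit an output that jumps at a sample point where the input merely stutters (in particular at the arbitrary tail points one must add when $X$ has finitely many discontinuities), and then ``decode'' does not yield a well-defined operator on signals --- and any way of pinning down the extra jump points is non-$\MLO$-definable, by your own non-acceleration lemma. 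The fix is to fold that lemma into the \emph{specification} before invoking Büchi--Landweber: replace $\Psi$ by $\Psi\land{}$``$Y$ is continuous (and left-continuous) wherever $X$ is'', which is sound because every definable C-operator satisfies the extra conjunct automatically (this is the paper's $\Psi^*$, Lemma~\ref{lem:C-operator-prop}), and which in the encoded game forces the discrete output to stutter where the input stutters, so that the winning strategy does descend to a definable causal operator on signals. You gesture at this issue as ``bookkeeping'' about stuttering being invisible to $\MLO$, but it is the crux of the reduction, and as written your algorithm would return the wrong answer on Theorem~\ref{Undefinability}'s formula.
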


%\begin{restatable}[Decidability of Synthesis]{thm}{defsynt}
%\label{mainthm}
%Every even integer greater than 2 can be expressed as the sum of two primes.
%\end{restatable}

\begin{restatable}[Decidability of Synthesis]{thm}{synth}
\label{mainthm}
Given  an $\MLO$ formula $\Psi(X,Y)$, it is decidable whether there  exists a C-operator $F$ such that $\Sig\models \forall X.\ \Psi(X,F(X))$.
\end{restatable}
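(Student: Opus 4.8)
The plan is to characterise implementability by a C-operator as victory of a designated player in a two-player infinite game $\GT$ built from $\Psi$, and then --- despite the continuous-time pathologies of Theorems~\ref{indeterminancy}--\ref{Undefinability} --- to show that $\GT$ has the same winner as an effectively constructible \emph{finite} parity game, which is solved by standard algorithms. The reduction to a game is routine once the setting is discretised; the substance is the finite abstraction.

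First I would pass from signals to $\omega$-words and automata. Since a pair $(X,Y)$ of finitely variable signals has only finitely many discontinuities in any bounded interval, its set of breakpoints is a closed discrete subset of $[0,\infty)$, hence of order type $\omega$; and because $\MLO$ sees only the order type of the time line, $(X,Y)$ is faithfully coded by an $\omega$-word over an alphabet $\Gamma$ whose letters describe, in alternation, the common value at a breakpoint and the common value on the open interval that follows it (the value at an isolated breakpoint being allowed to differ from both neighbouring interval values). Using the decidability of $\MLO$ over $\Sig$ --- concretely, by first rewriting $\Psi$ in the separated normal form $\TNF$ and translating $\MLO$ over signals into $\MLO$ over $\omega$ --- I would obtain a deterministic parity automaton $\DPA$ over $\Gamma$ accepting exactly the codes of pairs $(X,Y)$ with $\Sig\models\Psi(X,Y)$.

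Next I would formulate the synthesis game $\GT$ on $\DPA$. A C-operator must, at each input breakpoint $\sigma_i$, commit to the output value at $\sigma_i$ using only the input on $[0,\sigma_i]$, and then, on the open interval up to the \emph{adversarially chosen} next breakpoint $\sigma_{i+1}$, commit to a nonempty finite sequence of output values on successive subintervals using the (constant) input value on that interval, not knowing when $\sigma_{i+1}$ will occur. Thus a round is: Adversary plays the input letter at the breakpoint; Operator answers with the output letter at the breakpoint; Adversary plays the input letter on the ensuing open interval; Operator answers with a finite word of output letters on successive subintervals; the winner is determined by the parity condition of $\DPA$ along the induced run, with a separate treatment of the case where the input is eventually constant (only finitely many $\sigma_i$), where the operator's behaviour on the final unbounded interval is itself an $\omega$-word. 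One then checks that $\Psi$ is implementable by a C-operator iff Operator wins $\GT$.

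The crux is the finite abstraction of $\GT$. The difficulty is that Operator plays finite but \emph{unboundedly long} words, and by Theorem~\ref{Undefinability} a winning Operator may genuinely require unbounded memory, so $\GT$ is not an $\omega$-regular game and Theorem~\ref{BLthm} does not apply directly. The key observation is that inside an open interval the environment contributes only one fixed letter, so all that matters about Operator's finite word is the trajectory of $\DPA$-states it induces; since $\DPA$ is finite, the set of achievable ``profiles'' of an open interval --- recording the entering state, the set of priorities visited, the set of possible exiting states, and, for the unbounded case, which priorities can be made to dominate an $\omega$-continuation --- is a finite, effectively computable object. Replacing Operator's unbounded move in each round by a choice from this finite set of profiles (and summarising the eventually-constant tails likewise) turns $\GT$ into a finite parity game with the same winner, whose finite-memory strategy, unfolded through the profiles, recovers an in general infinite-memory winning operator --- which is consistent with, and explains, the undefinability phenomenon, and is precisely where this game differs from the definability games underlying Theorem~\ref{th:def-c}, in which Operator is additionally required to act in a finitely describable way. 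Solving that finite parity game then decides the existence of a C-operator. I expect the real work to lie in this step: proving that the profile abstraction is sound and complete --- in particular that the ``the interval may end at any subinterval'' constraint and the finite-variability restriction (no Zeno behaviour in a bounded interval, but $\omega$-words permitted on an unbounded one) are faithfully captured --- and in checking that the $\TNF$ and automaton constructions of the first step interact correctly with the point/interval structure, so that the parity condition on $\omega$-codes really does represent $\Sig\models\Psi(X,Y)$.
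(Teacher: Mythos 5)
Your overall architecture --- code FV-signal pairs as $\omega$-words, compile $\Psi$ into a deterministic parity automaton, set up a round-based game in which the adversary chooses breakpoints and the operator fills the intervening intervals, and then finitize the operator's interval moves --- is the same as the paper's. The gap is in the finitization. Your ``profile'' of an interval records only order-theoretic data (entering state, priorities visited, possible exit states), and you then propose to solve the resulting finite game under the ordinary parity condition. This abstraction is not sound, because it discards exactly the feature that makes continuous time different from $\omega$: the adversary's interruption times $t_1<t_2<\cdots$ are real numbers, and if they converge to a finite limit the adversary has failed to produce a finitely variable input signal, so such ``Zeno'' plays must be declared wins for the operator. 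Concretely, take $\Psi(X,Y):=$ ``$Y$ jumps somewhere in $(0,\infty)$'' (the formula of Theorem~\ref{Undefinability}). A C-operator implementing it exists, but in your profile game the adversary always exits the current interval at a state reached \emph{before} the operator's scheduled jump, so the induced run never records a jump and the adversary wins the finite parity game. The continuous-time operator survives only because forcing the cut ever earlier (say, before $t_i+2^{-i}$) makes $\lim t_i<\infty$; your abstraction has no way to detect this, so your finite game has the wrong winner on this instance.

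The paper's proof supplies precisely the missing machinery: the game's winning condition explicitly awards Zeno plays to the operator; the operator's interval moves are normalized to ultimately periodic state sequences drawn from a finite, effectively computable set $UP(Q)$ (obtained via Ramsey's theorem for additive colorings) and played on a uniform time grid; each edge of the finite arena is labelled ``small'' or ``big'' according to whether the interruption falls in the lag or in the period of the chosen ultimately periodic word (an interruption in the period can be postponed arbitrarily far, one in the lag cannot); strategies are further reduced to almost positional ones with time scale $2^{-i}$ at round $i$; and the decision criterion becomes: the adversary wins against such a strategy iff there is a reachable rejecting terminal node or a reachable cycle of odd maximal priority containing at least one \emph{big} edge --- an odd cycle with only small edges forces time convergence and is therefore won by the operator. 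To repair your argument you would need to enrich your profiles with this lag/period distinction (equivalently, ``the cut can be delayed past any bound'' versus ``the cut must occur within a shrinking window'') and replace the plain parity condition by the corresponding modified cycle condition; without that, the reduction to a finite parity game does not go through.
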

%\synth*
The proofs of   Theorems \ref{indeterminancy}-\ref{th:def-c} are not difficult. The last theorem is our main result.  %All the proofs can be found in \cite{Fat23}.
%As mentioned above, we will use games, but of more complex nature than the parity games used in the proof of Theorem 1.1. Our games are also turn based, and involves considerations of time.
\subsection{Related Works}

As far as we know, the Church synthesis problem over continuous time was considered only in \cite{JenkinsORW11}.
This paper considered only the \emph{right continuous} signals.
A signal $f$ is right continuous if there is
an unbounded increasing sequence
$\tau_0=0<\tau_1<\cdots$  such that $f$ is constant on every interval $[\tau_i,\tau_{i+1})$. Note that there is no  $\MLO$ -definable 
 non-constant SC-operator
from the right continuous signals to the right continuous signals.  Computability of Definable Synthesis (like  Theorem \ref{th:def-c}) was proved there. In \cite{JenkinsORW11}  we also considered $\mathit{MSO[<,+1]}$ - the extension of $\MLO$ by a  metrical plus  one function.
This is a very strong logic and it is undecidable over the reals; yet, it is decidable over any  bounded subinterval $[0,n]$ of the reals for every $n\in \Nat$. We proved  Computability of Definable Synthesis for $\mathit{MSO[<,+1]}$  for every bounded interval.

Timed automata \cite{AD} are popular  models of real-time systems.   Timed automata
accept timed words  - $\omega$-sequences annotated with real numbers. These timed words   are  different from   signals.
Two-player games on timed automata have been considered (see e.g., \cite{Hen,FTM}).
These games are concurrent games. Both players suggest a move simultaneously. Due to concurrency, timed automata games are indeterminate.
Patricia Bouyer  \cite{Bou05}  writes that there is a ``prolific literature on timed games.''   Unfortunately, ``one paper = one definition of timed games which are parameterized by $\cdots$."
% by: 
%``symmetrical games or not, take care of Zeno (and Zeno-like) behaviours or not, turn-based games or not,
%etc.''

%Similar to our games, timed automata games  take into account time divergence.
We  were unable to deduce our results from the
results about     timed automata games.
\subsection{Structure of the Paper}
The paper is organized as follows.
Section  \ref{sect:prel} contains preliminaries.
Section \ref{sect-simple-th} proves Theorems \ref{indeterminancy} - \ref{Undefinability}.
In Section \ref{sec:def-synth} computability of definable synthesis problem is proved.
Decidability of the Church synthesis problem is proved in Sections \ref{sect:ch-rc}-\ref{sect:ch-fv}.
Section  \ref{sect:ch-rc} provides a proof for  right continuous signals and contains
the main ideas. Section \ref{sect:ch-fv} extends the proof to FV-signals. This proof is longer and more technical,
but does not require new insights.
Section \ref{sect:conc} contains the Conclusion. The appendix contains a proof of a lemma.
\section{Preliminaries}\label{sect:prel}
\subsection{Monadic Second-Order Logic of Order}
\subsubsection{Syntax}
The language of Monadic Second-Order Logic  of Order ($\MLO$)
has  individual variables,  monadic second-order
variables,
a binary predicate $<$ , the usual propositional connectives and
first and second-order quantifiers $\exists^1$ and $\exists^2$.
 We use
$t,~v $ for individual variables and $X,~ Y$
  for second-order variables.
It will be clear from the context whether  a quantifier is the first
or the second-order, and we will drop the superscript.
We use  standard abbreviations, in particular, ``$\exists !$" means
\enquote{there is a unique.} 

The  atomic  formulas of $\MLO$ are formulas of the form:
$t<v$ and $X(t)$. Formulas are constructed from atomic formulas by using
logical connectives and first and second-order
quantifiers.

We write $\psi(X,Y,t,v)$ to indicate that the free variables of a formula $\psi$ are  among  $X, ~Y,~t,~v$.
\subsubsection{Semantics} 
A structure $K = \langle A,B, <_K \rangle $ for $\MLO$
 consists of a set $A$ partially ordered by $<_K$ and a
set $B$ of monadic functions from $A$ into $\{0,1\}$. The letters $\tau,t,s,x,y$ will range over the
elements of $A$ and capital bold letter $\textbf{X,Y,Z},\dots$  will range over the elements of $B$. We will not distinguish between a subset of $A$ and
its characteristic function. The satisfiability relation $K, \tau_1, \dots,
\tau_m,  \textbf{X}_1 \dots  \textbf{X}_n \models \psi(t_1, \dots ,t_m,~X_1, \dots ,X_n)$
  is defined in a standard way. We sometimes use $K\models \psi(\tau_1, \dots
\tau_m,~ \textbf{X}_1 ,\dots,  \textbf{X}_n)$ for $K, \tau_1, \dots,
\tau_m,  \textbf{X}_1, \dots,  \textbf{X}_n  \models \psi(t_1, \dots, t_m,~X_1, \dots, X_n)$.

We will be interested in the following structures: 
\begin{enumerate}
  \item  The structure $\omega=\langle \Nat,2^\Nat,<_\Nat \rangle$ where $2^\Nat$ is the set of all monadic functions from $\Nat$ into $\{0,1\}$. 
  \item The signal structure $\Sig$ is defined as $\Sig = \langle \reals,SIG,<_R\rangle$, where $SIG$ is the set
of finitely variable  boolean  signals.
  \item A signal is right continuous if  
    there exists an unbounded increasing sequence
$\tau_0=0<\tau_1<\cdots$  such that $f$ is constant on every  half-open  interval $[\tau_i,\tau_{i+1})$.
   The right continuous signal structure $Rsig$ is defined as $Rsig = \langle \reals, RSIG, <_R \rangle$
where $RSIG$ is the set of right continuous boolean  signals.
\end{enumerate}

\noindent \textbf{{Terminology:}} For a FV   (or RC)  signal $f$ we say that $f$ is \emph{constant} or \emph{continuous} at $\tau$
if there are $\tau_1<\tau<\tau_2$ such that $f$ is constant in the interval $[\tau_1,\tau_2]$. If $f$ is not constant at $\tau$, we say that it \emph{jumps} at $\tau$ or it is \emph{discontinuous} at $\tau$. In particular, every signal jumps at $\tau=0$. 
%1. The structure $\omega=\langle \Nat,2^\Nat,<_\Nat \rangle$ where $2^\Nat$ is the set of all monadic functions from $\Nat$ into $\{0,1\}$. 
%2. The signal structure $\Sig$ is defined as $\Sig = \langle \reals,SIG,<_R\rangle$, where $SIG$ is the set
%of finitely variable boolean signals.\\
%3. The right continuous signal structure $Rsig$ is defined as $Rsig = \langle \reals, RSIG, <_R \rangle$
%where $RSIG$ is the set of right continuous boolean signals.
%\par 
\subsection{Coding}
Let $\Delta$ be a finite set. We can code a function  from a set $D$ to $\Delta$ by a tuple of unary predicates on $D$.
This type of coding is standard,  and we shall use explicit  variables which range over such mappings and expressions of the form ``$F(u)=d$''
(for $d\in \Delta$) in MSO-formulas, rather than their codings.

Formally, for each finite set $\Delta$ we have second-order variables $X_1^\Delta, X_2^\Delta, \dots$
which range over the functions  from $D$ to  $\Delta$, and atomic formulas
$X_i^\Delta(u)=d $  for $d\in \Delta$ and a first-order variable $u$ \cite{trakhtenbrotfinite}.
Often, the type of the second-order variables will be clear from the context (or unimportant) and we drop the superscript $\Delta$.
\subsection{Definability} Let
$\phi(X)$ be an $\MLO$  formula.
We say that  a language (set of predicates)   %$L\subseteq \mathbb{R}^{\geq 0}$
$L$ is
definable by $\phi(X)$ in a structure $K$  if $L$ is the   set of monadic predicates which  satisfies $\phi(X)$ in $K$.

\medskip
\noindent {\it Example  (Interpretations of Formulas).}
\begin{enumerate}
\item Formula  $\forall t_1\forall t_2.~t_1<t_2\wedge (\neg \exists
t_3.~t_1<t_3<t_2) \to (X(t_1)\leftrightarrow \neg X(t_2))$
 defines the $\omega$-language $\{(01)^{\omega}, ~(10)^{\omega}\}$ in the
structure $\omega$ and it  defines the set of all signals in the signal structures, since every set $X\subseteq \mathbb{R}^{\geq 0}$ satisfies this formula.
  \item   
Formula $\exists Y.~ \exists t'. Y(t') \wedge (\forall t.~X(t)\to Y(t))
\wedge (\forall t_1
\forall t_2. t_1<t_2 \wedge Y(t_1)\wedge Y(t_2) \to \exists t_3. t_1<t_3<t_2 \wedge \neg
 Y(t_3))$ defines in the structure
$\omega$    the set of strings in which  between any two occurrences of 1 there is an occurrence of 0.  In the FV-signal structure,
the above formula defines the set of signals that receive value 1 only at isolated points. The formula defines the empty
language under the right continuous signal interpretation.
\end{enumerate}
  In the above examples, all formulas have one free
second-order variable and they define languages over the  alphabet $\{0,1\}$.
 A formula $\psi(X_1, \dots ,X_n)$ with
  $n$ free second order variables  defines a language over the  alphabet $\{0,~1\}^n$.
We say that a $\omega$-language
is definable if it is definable by a monadic formula in the structure $\omega$.

Let $\psi(\XI, \Yo )$ be an $\MLO$ formula. We say that $\psi$  defines an operator $F$ if for all $X$ and $Y$: $\psi(X,Y)$
if and only if $Y=F(X)$.

 \begin{defi}[Speed independent signal language] Let $L$ be a signal language.  $L$ is speed independent if for each order preserving bijection $\rho : \reals \to \reals$:  \begin{center} $X\in L$ if and only if $\rho(X)\in L$.
 \end{center}
 \end{defi}
 It is clear:
\begin{lem}
 The $\MLO$ definable languages  are speed independent.    
\end{lem}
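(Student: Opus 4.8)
The plan is to reduce the statement to the textbook fact that the truth of an $\MLO$ formula is invariant under isomorphisms, once one checks that an order-preserving bijection of $\reals$ is an isomorphism of the signal structure $\Sig$ in the appropriate two-sorted sense. So I would split the argument into a ``combinatorial'' part (the bijection $\rho$ acts on the set $SIG$ of FV signals) and a ``logical'' part (isomorphism invariance of $\MLO$).

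First I would record the elementary properties of an order-preserving bijection $\rho\colon\reals\to\reals$: it fixes $0$, is unbounded, its inverse $\rho^{-1}$ is again an order-preserving bijection, and for any $a<b$ one has $\rho([a,b])=[\rho(a),\rho(b)]$, again a bounded interval (and symmetrically for $\rho^{-1}$). Writing $\rho\cdot f$ for the natural relabelling of a signal $f$ by $\rho$, $(\rho\cdot f)(t):=f(\rho^{-1}(t))$, the points at which $\rho\cdot f$ jumps are exactly the $\rho$-images of the points at which $f$ jumps, since $\rho$ maps every neighbourhood of $t$ onto a neighbourhood of $\rho(t)$. Because $\rho$ and $\rho^{-1}$ carry bounded intervals to bounded intervals, $\rho\cdot f$ has finitely many discontinuities in every bounded interval iff $f$ does; hence $f\mapsto\rho\cdot f$ is a bijection of $SIG$ onto itself, with inverse $f\mapsto\rho^{-1}\cdot f$.

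Next I would observe that, for any tuple $\bar X$ of FV signals, the map $\rho$ on $\reals$ together with the induced map $\mathbf Y\mapsto\rho\cdot\mathbf Y$ on the monadic sort $SIG$ is an isomorphism from the expansion of $\Sig$ in which the free variables of a formula $\phi$ are interpreted by $\bar X$ onto the expansion in which they are interpreted by $\rho\cdot\bar X$: the relation $<$ is preserved because $\rho$ is order-preserving, membership is preserved because $(\rho\cdot\mathbf Y)(\rho(t))=\mathbf Y(t)$, and by the previous paragraph the second-order quantifiers range over $SIG$ in both structures. By the standard invariance of $\MLO$-truth under isomorphism we get $\Sig\models\phi(\bar X)$ iff $\Sig\models\phi(\rho\cdot\bar X)$ for every $\MLO$ formula $\phi$; taking $\phi(X)$ to be a formula defining $L$ in $\Sig$ yields $X\in L\iff\Sig\models\phi(X)\iff\Sig\models\phi(\rho\cdot X)\iff\rho\cdot X\in L$, which is exactly speed independence. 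The argument presents no genuine difficulty; the only point that is not pure bookkeeping is the finite-variability check, i.e. verifying that $\rho$ really does act as a bijection on $SIG$ and not merely on arbitrary boolean functions on $\reals$, which is why the hypothesis that $\rho$ is an order-preserving \emph{bijection} (so that $\rho^{-1}$ is available) is used.
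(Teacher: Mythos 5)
Your proof is correct and is exactly the argument the paper has in mind (the paper states the lemma without proof, prefaced only by ``It is clear''): an order-preserving bijection of $\reals$ induces a bijection of $SIG$ because it carries bounded intervals to bounded intervals and hence preserves finite variability, and then isomorphism-invariance of $\MLO$ finishes the job. You rightly flag the finite-variability check as the one non-trivial point; everything else is standard bookkeeping.
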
 
 Moreover, 
 \begin{lem}[The output signal jumps only if the input signal jumps] \label{InJumpsWhereOutJumps}
 If $F$ is an $\MLO$ definable operator  and $F(\textbf{X})$ is discontinuous at $t>0$, then $\textbf{X}$ is discontinuous at $t$.
 \end{lem}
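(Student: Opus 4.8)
The plan is to prove the contrapositive: assuming $\textbf{X}$ is continuous at some $t>0$, I will show that $F(\textbf{X})$ is continuous at $t$ as well. By the definition of continuity at a point there are reals $\tau_1<t<\tau_2$, which we may take with $\tau_1>0$ since $t>0$, such that $\textbf{X}$ is constant on the \emph{closed} interval $[\tau_1,\tau_2]$.

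The key ingredient is speed independence. Since $F$ is $\MLO$-definable, its graph $\{(X,Y): Y=F(X)\}$ is an $\MLO$-definable language over the alphabet $\{0,1\}^2$, and the action of an order-preserving bijection $\rho:\reals\to\reals$ on pair-signals is coordinatewise. Hence, by the speed-independence lemma, for every such $\rho$ we have $Y=F(X)$ if and only if $\rho(Y)=F(\rho(X))$; equivalently $F(\rho(X))=\rho(F(X))$. Now, for each pair of points $a,b\in(\tau_1,\tau_2)$ I would pick a piecewise-linear order-preserving bijection $\rho_{a,b}:\reals\to\reals$ that is the identity on $[0,\tau_1]\cup[\tau_2,\infty)$, maps $[\tau_1,\tau_2]$ onto itself, and satisfies $\rho_{a,b}(a)=b$. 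Because $\textbf{X}$ is constant on $[\tau_1,\tau_2]$ and $\rho_{a,b}$ agrees with the identity outside $(\tau_1,\tau_2)$ and at the endpoints $\tau_1,\tau_2$, one checks directly that $\rho_{a,b}(\textbf{X})=\textbf{X}$. Writing $Y:=F(\textbf{X})$, speed independence then gives $\rho_{a,b}(Y)=F(\rho_{a,b}(\textbf{X}))=F(\textbf{X})=Y$, and evaluating the identity $\rho_{a,b}(Y)=Y$ at the relevant point yields $Y(a)=Y(b)$. As $a,b$ range over $(\tau_1,\tau_2)$ this shows $Y$ is constant on $(\tau_1,\tau_2)$; passing to a closed subinterval with $t$ in its interior (e.g. $[(\tau_1+t)/2,(t+\tau_2)/2]$) shows $Y$ is continuous at $t$, which is exactly what we need.

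The argument is routine once speed independence is available; the only points requiring a little care are the verification that $\rho_{a,b}(\textbf{X})=\textbf{X}$ — which is why constancy of $\textbf{X}$ on the \emph{closed} interval is needed, so that the endpoints, where $\rho_{a,b}$ is the identity, carry the same value as the interior — and the (immediate) fact that $\rho_{a,b}$ preserves finite variability, so that $\rho_{a,b}(\textbf{X})$ and $\rho_{a,b}(Y)$ are again FV signals. I do not anticipate any genuine obstacle.
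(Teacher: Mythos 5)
Your proof is correct and follows essentially the same route as the paper's: both exploit speed independence of the $\MLO$-definable graph of $F$ together with an order-preserving bijection $\rho$ that is the identity outside the interval on which $\textbf{X}$ is constant, so that $\rho(\textbf{X})=\textbf{X}$ forces $\rho(F(\textbf{X}))=F(\textbf{X})$. The only cosmetic difference is that the paper argues by contradiction, using a single $\rho$ to move the assumed jump point of $F(\textbf{X})$ to a different location, whereas you use a family of such maps to show directly that $F(\textbf{X})$ takes the same value at any two points of the interval and hence is constant there.
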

 \begin{proof}
 Assume that  $F$ is  definable by $\psi(X,Y)$ and $\textbf{Y}=F(\textbf{X})$ is discontinuous at $t>0$.
 If $\textbf{X}$ is continuous at $t$, then there is an interval $(a,b)$ such that $t$ is the only point of discontinuity of $(\textbf{X,Y})$ in $(a,b)$.
 Let $\rho : \reals \to \reals$  be an order preserving bijection that is the identity map outside of $(a,b)$ and maps  $t$ to $t'\neq t$. 
 Now $(\rho(\textbf{X}),\rho(\textbf{Y}))$ satisfies $\psi$.  Note that $\rho( \textbf{{X}}) =\textbf{X}$. Hence, by the functionality of $F$,  $\rho (\textbf{Y})=\textbf{Y}$, and this is a contradiction, because 
 the only point of discontinuity of $\textbf{Y}$ in $(a,b)$ is $t$, and the only point of discontinuity of $\rho(\textbf{Y})$ in $(a,b)$ is $t'\neq t$.
\end{proof}

\subsection{Parity Automata}
A deterministic parity automaton is a tuple $\cA = (Q,\Sigma ,\delta, q_{\mathit{init}}, pr)$ that consists of the following components:
\begin{itemize}
  \item $Q$ is a finite set. The elements of $Q$ are called the states of $\cA$.
  \item $\Sigma$  is a finite set called the alphabet of $\cA$.
  \item $\delta: ~Q\times \Sigma\rightarrow Q$ is   the transition function of $\cA$.
  \item $q_{\mathit{init}}\in Q$ is  the initial state.
  \item $pr:~Q\rightarrow \Nat$ is the priority function.
\end{itemize}
 An input for $\cA$ is an $\om$-sequence $\alpha =\sigma_0\sigma_1\cdots $ over $\Sigma$.
 The run of $\cA$ over $\alpha$   is an infinite sequence $\rho = r_0,r_1,r_2,\cdots $  of states, defined as follows:
 \begin{itemize}
   \item $r_0:=q_{\mathit{init}}$.
   \item $r_{i+1}:=\delta(r_i,\sigma_i)$ for $i\geq 0$.
   
 \end{itemize}
\emph{Acceptance condition:} Let  $\Inf(\rho)$ be the set of states that occur infinitely often in $\rho$.
$\cA$ accepts exactly  those runs $\rho$ for which $\min( pr(q)\mid q\in \Inf(\rho))$  is even.  
An $\om$-string $\alpha$  is accepted if the run over $\alpha$ is accepted.
An $\om$-language of $\cA$ is the set of $\om$-strings accepted by $\cA$.

The following fundamental result is due to Büchi:

\begin{thm} An $\om$-language is definable by an $\MLO$ formula  if and only if it is accepted by a deterministic parity
automaton.
\end{thm}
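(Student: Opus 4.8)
The plan is to prove the two directions separately. One direction (automaton $\Rightarrow$ formula) is routine; the converse (formula $\Rightarrow$ automaton) is the substantial one, and I would follow the classical B\"uchi--McNaughton--Safra argument.

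\emph{From automata to $\MLO$.} Let $\cA=(Q,\Sigma,\delta,q_{\mathit{init}},pr)$ be a deterministic parity automaton with $\Sigma=\{0,1\}^n$ coded by monadic predicates $X_1,\dots,X_n$. I would existentially quantify over a second-order variable $R$ ranging over functions from $\nat$ to $Q$ (using the standard coding of $Q$-valued functions by tuples of monadic predicates) and assert: (i) $R(0)=q_{\mathit{init}}$; (ii) $R(t+1)=\delta\big(R(t),\sigma\big)$ for every $t$, where $\sigma$ is read off $X_1(t),\dots,X_n(t)$ --- a finite disjunction over $Q\times\Sigma$; and (iii) the parity acceptance condition, expressible as a finite disjunction over the even priorities $c$ in the range of $pr$ of $\big(\forall t\,\exists s>t.\ pr(R(s))=c\big)\wedge\bigwedge_{d<c}\big(\exists t\,\forall s>t.\ pr(R(s))\neq d\big)$. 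The conjunction of (i)--(iii), with $R$ quantified existentially, defines exactly $L(\cA)$.

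\emph{From $\MLO$ to automata.} I would argue by induction on the structure of the formula, first putting it into a normal form in which every second-order quantifier is existential and every atomic formula mentions only the monadic predicates naming alphabet letters (first-order variables being simulated by monadic predicates true at a single position). For the atomic formulas one writes down small explicit deterministic parity automata over the expanded alphabet. The Boolean cases need closure of DPA-recognisable languages under complement, union and intersection: complement is immediate (add $1$ to every priority); union and intersection come from the product automaton, whose natural acceptance condition is a Rabin (respectively Streett) condition that one rewrites as a single parity condition by an index-appearance-record construction. The one genuinely non-trivial inductive case is the existential second-order quantifier $\exists X_i\,\varphi$: projecting away the $X_i$-track turns a DPA for $\varphi$ into a \emph{nondeterministic} parity automaton for the projected language, so one must then re-determinise.

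\emph{The main obstacle.} The crux is this determinisation step. I would use McNaughton's theorem --- every $\omega$-regular language is recognised by a deterministic Muller automaton --- realised via Safra's tree construction (or Piterman's refinement, which outputs a deterministic parity automaton directly), and then convert the Muller condition to a parity condition by the latest-appearance-record / Zielonka construction. This is the technically hard, singly-exponential ingredient; everything else is finite bookkeeping. An economical variant keeps all automata nondeterministic B\"uchi through the union, intersection and projection steps and pays for determinisation only inside each negation (``determinise, complement by shifting priorities, return to a nondeterministic B\"uchi automaton'') plus once at the very end to produce the required deterministic parity automaton. All the constructions involved are effective, which also yields the algorithmic content used elsewhere in the paper.
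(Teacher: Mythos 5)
Your outline is a correct rendition of the classical Büchi--McNaughton--Safra argument, but note that the paper offers no proof of this statement at all: it is quoted as a known fundamental result attributed to Büchi, so there is no ``paper's proof'' to diverge from. Both directions of your sketch are sound for the min-parity convention the paper uses (the acceptance formula as a disjunction over even $c$ with all $d<c$ occurring finitely often, and complementation of a deterministic parity automaton by shifting every priority by $1$), and the identified crux --- re-determinisation after projection, via Safra/Piterman --- is exactly where the real work lies; nothing further is needed.
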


\section{Proofs of Theorems \ref{indeterminancy} - \ref{Undefinability}}\label{sect-simple-th}
\subsection{Indeterminacy}\label{section:Indeterminacy}
\indeter*
%\label{section:Indeterminacy}
To prove that the synthesis problem is indeterminate, for each $x\in \reals$ consider the following signal
 $$\delta_x(t):=
  \begin{cases}
    1, & \text{if } t = x  \\
    0  & \text{else }
  \end{cases}$$
Recall that a signal $X$ jumps at $t$ if it is not continuous at $t$. Let\\ %Say that an interval is left open if it contains no minimum.
%\par
$$\Psi(X,Y):=\text{$\exists t>0$ such that $X$ is constant on $(0,t]$ and $Y$ jumps at $t$}$$
\begin{proof}[Proof of Theorem \ref{indeterminancy}] \hfill
\begin{enumerate}
    \item Assume that there  exists a C-operator $F$ such that $\Sig\models \forall X.\ \Psi(X,F(X))$. 
     
     Notice that for each $x>0$, since $\delta_x$ jumps at $x$ and is constant on $(0,x)$, we have the following equivalence
    $$\Psi(\delta_x,F(\delta_x)) \mbox{ iff }  \text{$\exists t\in(0,x)$ such that $F(\delta_x)$ jumps at  $t$}.$$
    Therefore, $\Sig\models \Psi(\delta_1,F(\delta_1))$ implies that
    $$\textbf{(*)}\ \  \text{ $\exists t\in(0,1)$ such that $F(\delta_1)$ jumps at  $t$}.$$
    By non-Zenoness, there exists a \textbf{minimal} $t\in (0,1)$ that satisfies $\textbf{(*)}$. Moreover, 
    \begin{enumerate}
        \item By causality, since $\delta_{1}=\delta_{t}$ on $[0,t)$ then $F(\delta_{1})=F(\delta_{t})$ on $[0,t)$.
        \item By minimality, $F(\delta_1)$ is constant on $(0,t)$. So by (a), both $F(\delta_{1}),F(\delta_{t})$ are constant on $(0,t)$.
        \item By $\Psi(\delta_t,F(\delta_t))$, $F(\delta_t)$ jumps at  some $x\in (0,t)$. That   contradicts  (b).
    \end{enumerate}
    \item Assume that there  exists a SC-operator $G$ such that $\Sig\models \forall Y.\ \neg\Psi(G(Y),Y)$. By applying logical equivalences we obtain:\begin{center}
                                                   $\neg \Psi(X,Y)\mbox{ iff } \forall t>0.~ \text{$X$ jumps on $(0,t]$  {or} $Y$ is continuous at $t$}$.
                                                \end{center}
    %$\neg \Psi(X,Y)\mbox{ iff } \forall t>0.~ \text{$X$ jumps on $(0,t]$  {or} $Y$ is continuous at $t$}$.
    Therefore, for each $x\in \reals$:\\
    $\neg \Psi(G(\delta_{x}),\delta_{x})\mbox{ iff } \forall t>0.~G(\delta_{x})\text{ jumps on $(0,t]$  {or} $\delta_{x}$ is continuous at $t$}$,
    but since $\delta_{x}$ is  {not} continuous at $x$, we obtain  
    $G(\delta_{1})\text{ jumps on $(0,1]$}$.  
    
    %So $\delta_1$ is constant on the maximal left open interval that $G(\delta_1)$ is constant on. Therefore the first jump point of $G(\delta_1)$ is in $(0,1]$.
    Let $a$ be the first positive jump point of $G(\delta_{1})$. Since $\delta_1=\delta_{a/2}$ on $[0,a/2)$,   by strong causality  of $G$, we obtain that $G(\delta_1)=G(\delta_{a/2})$ on $[0,a/2]$. Thus, $G(\delta_{a/2})$ is constant on $(0,a/2]$ and $\delta_{a/2}$ jumps at $a/2$. %in $(0,a/2]$.
     We obtained  a contradiction to $\neg \Psi(G(\delta_{a/2}),\delta_{a/2})$. \qedhere
\end{enumerate}
\end{proof}

\subsection{Failure of dichotomy}
\label{section:NonDichotomy}
%In this section, we prove
\dichfail*
% In this section, we prove
% that some $\MLO$ formulas are \enquote{not dichotomous}, i.e. those formulas can be implemented by a C-operator and their negations can be implemented by an SC-operator. Moreover, those operators are $\MLO$ definable.

For the proof of Theorem \ref{dich}, consider the formula $\Psi$, the C-operator $F$, and the SC-operator $G$ given below:
$$\Psi(X,Y):=\forall x(X(x)\leftrightarrow Y(x))$$
$$F(X):=X$$
$G(X)(0):=0$, and for $t>0$:
$$G(X)(t):=\begin{cases}
    1-a  & \text{ if $X$ has a constant value $a$ over $(0,t)$}\\
    1 & \text{ else}
\end{cases}$$
It is clear that $F$ is a $C$-operator that implements $\Psi$. Below we prove (2).
%\begin{enumerate}[label=\alph*]
\begin{description}  %[label=\alph*]
    \item[a] $G$ is a strongly-causal operator: suppose $\textbf{X}_1=\textbf{X}_2$ on $(0,t)$. Either both $\textbf{X}_1,\textbf{X}_2$ have a jump point in $(0,t)$ or both are constant on  $(0,t)$.\\
    In the first case, by Non-Zenoness, there exists $t_0\in (0,t)$ such that $t_0$ is the \textbf{first} jump point of both $\textbf{X}_1,\textbf{X}_2$ in $(0,t)$. 
    Therefore, both $\textbf{X}_1,\textbf{X}_2$ have a constant value $a$ over $(0,t_0)$. Thus, by definition $$\forall t\geq  0.~G(\textbf{X}_1)(t)=G(\textbf{X}_2)(t)=\begin{cases}
       1-a &  t\in (0,t_0)\\
       1 & else
    \end{cases}$$
    The proof for the second case is similar.
    
    \item[b] $G$ implements $\neg\Psi$: for each signal $\textbf{Y}$ there exists a non-empty interval such that $\textbf{Y}\neq G(\textbf{Y})$ on it. Thus, $\textbf{Y}\neq G(\textbf{Y})$.
%\end{enumerate}
\end{description}
\subsection{Undefinability}
\label{section:Undefinability}
%Here we prove
\undef*
% We prove that some $\MLO$ formulas are implemented by a C-operator but \textbf{not} by a definable one. For the proof of theorem \ref{Undefinability}, 
Consider the formula
$\Psi(X,Y):{=} \text{ $Y$ jumps inside  }(0,\infty)$.\\
It is implemented by
$F(X)(t):=\begin{cases}
  1, & \text{if } t=1\\
    0,  & \text{else }
\end{cases}$\\
When $X$ is constant on $[0,\infty)$, $Y$ should jump at $t>0$, where $X$ is continuous.
However, by Lemma  \ref{InJumpsWhereOutJumps}, no $\MLO$-definable operator implements $\Psi$.
Hence, $\Psi$ is implemented by a C-operator, but  it is not implemented by a definable operator.

\section{Computability of the Definable Synthesis} \label{sec:def-synth}
In \cite{JenkinsORW11} the computability of the definable synthesis problem over the  right continuous signals was proved.

\begin{thm}[Computability of the Definable Synthesis over the RC Signals]\label{th:def-c-rc}
Given an $\MLO$ formula $\Psi(X,Y)$, it is decidable whether there  exists an $\MLO$-definable C-operator $F$ such that  $RC\models \forall X.\ \Psi(X,F(X))$ and if so, there is an algorithm that constructs an $\MLO$ formula that defines $F$.
\end{thm}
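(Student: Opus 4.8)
The plan is to reduce the question to the classical Church synthesis problem over $\omega$ (Theorem~\ref{BLthm}) through the standard encoding of a right continuous finitely variable signal by its sequence of values. For $w\in\Delta^{\omega}$ over a finite alphabet $\Delta$ let $\mathrm{sig}(w)$ be the $\Delta$-signal with value $w_i$ on $[i,i+1)$. Every $\Delta$-valued RC signal is order-isomorphic, as a labelled linear order, to $\mathrm{sig}(w)$ for a unique \emph{admissible} word $w$, where ``admissible'' means that the set of discontinuity positions of $w$ is downward closed in $\{1,2,\dots\}$ (equivalently: $w$ is stutter-free, or stutter-free up to a point and then constant); admissibility is $\MLO$-definable over $\omega$. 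Two ingredients are needed: a discretisation of the logic, and a discretisation of operators; the only delicate point is the bookkeeping around stuttering and eventually-constant signals.

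First I would record the discretisation of the logic. Since $\MLO$ is invariant under isomorphism of labelled linear orders, $Rsig\models\Psi(\bar X)$ depends only on the admissible value word of $\bar X$; by the composition method (between two consecutive value points the labelled order is always a copy of $[0,1)$, whose $\MLO$-theory is fixed) one obtains a computable translation $\Psi\mapsto\Psi^{d}$ with $Rsig\models\Psi(\mathrm{sig}(w),\mathrm{sig}(v))$ iff $\omega\models\Psi^{d}(w,v)$ for all $w,v$, and dually every $\omega$-regular language of $\omega$-words corresponds to an $\MLO$-definable signal language. I would take this from the decidability proof of $\MLO$ over $Rsig$ (cf.\ \cite{JenkinsORW11}).

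Next I would discretise operators. If $F$ is an $\MLO$-definable C-operator over $Rsig$, then by Lemma~\ref{InJumpsWhereOutJumps} the signal $F(\mathrm{sig}(w))$ is discontinuous only where $\mathrm{sig}(w)$ is, hence constant on every $[i,i+1)$; set $\hat F(w)_i:=F(\mathrm{sig}(w))(i)$. Causality of $F$ makes $\hat F(w)_i$ depend only on $w_0,\dots,w_i$, so $\hat F$ is a causal operator on $\{0,1\}^{\omega}$; by the first ingredient its graph is $\omega$-regular, so $\hat F$ is $\MLO$-definable over $\omega$; and again by Lemma~\ref{InJumpsWhereOutJumps}, wherever $w$ stutters $\hat F(w)$ stutters, i.e.\ $\mathrm{StutterOK}(w,\hat F(w))$ holds, where $\mathrm{StutterOK}(X,Y)$ is the $\MLO$ formula ``at every position with an immediate predecessor, if $X$ has the same value there as at the predecessor, so does $Y$''. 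Conversely, from any $\MLO$-definable causal $\hat F$ on $\{0,1\}^{\omega}$ with $\omega\models\forall w\,\mathrm{StutterOK}(w,\hat F(w))$ I would rebuild an $\MLO$-definable C-operator $F$ over $Rsig$ by reading $\hat F(w_X)$ on the constancy blocks of $X$ (with $w_X$ the admissible value word of $X$, collapsing the constant tail onto the last block when $X$ is eventually constant): $\mathrm{StutterOK}$ applied to that tail is precisely what makes $F$ well defined, and the first ingredient turns the $\omega$-regular graph of $\hat F$ restricted to admissible inputs into an $\MLO$-definable graph of $F$. Both directions are effective, and $F$ implements $\Psi$ over $Rsig$ iff $\hat F$ implements $\Psi^{d}$ over $\omega$.

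Putting these together, an $\MLO$-definable C-operator implementing $\Psi$ over $Rsig$ exists iff some $\MLO$-definable C-operator $\hat F$ satisfies $\omega\models\forall w\,\bigl(\Psi^{d}(w,\hat F(w))\wedge\mathrm{StutterOK}(w,\hat F(w))\bigr)$. Now apply Theorem~\ref{BLthm} to $\Phi(X,Y):=\Psi^{d}(X,Y)\wedge\mathrm{StutterOK}(X,Y)$: it is decidable whether case~1(a) holds for $\Phi$, i.e.\ whether some C-operator over $\omega$ implements $\Phi$; by part~3(a), if so then an $\MLO$-definable one does, which is exactly the condition above, and part~4(a) computes an $\MLO$ formula over $\omega$ defining such an $\hat F$, which the second ingredient translates into an $\MLO$ formula over $Rsig$ defining a C-operator implementing $\Psi$. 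If case~1(a) fails, no such operator exists. The main obstacle is the uniform effective reduction of $\MLO$ over the continuous RC structure to $\MLO$ over $\omega$ together with the careful treatment of stuttering and of eventually-constant signals in the operator correspondence; once these are in place, $\mathrm{StutterOK}$ is visibly $\MLO$-expressible and Theorem~\ref{BLthm} yields decidability, definability and computability at once.
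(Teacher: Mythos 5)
Your proposal is correct and follows essentially the same route as the paper (the paper cites \cite{JenkinsORW11} for the RC case and gives this argument in detail for the FV analogue, Theorem~\ref{th:def-c}): discretise signals to value words and the formula to $\Psi^{d}$ via the effective $\MLO$-to-$\omega$ translation, force the output to stutter wherever the input does (your $\mathrm{StutterOK}$ is exactly the discretisation of the paper's $\Psi^{*}$ continuity conjuncts, justified by Lemma~\ref{InJumpsWhereOutJumps}), apply B\"uchi--Landweber, and translate the resulting definable operator back while restricting to stutter-free inputs. The only cosmetic difference is that the paper states the extra continuity condition at the signal level before discretising, whereas you conjoin it at the $\omega$-word level; these are interchangeable.
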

In this section we extend this theorem to  finite variability signals.

\dsynth*
Our proof is a variation of the proof of Theorem \ref{th:def-c-rc}. 
In the next subsection we recall a  relationship between finite variability signals and $\omega$-strings.
In subsection \ref{sub:def:operators}, we discuss definable operators.
Finally,  in subsection \ref{subs:alg}, we present an algorithm for the  definable synthesis problem.
\subsection{FV-signals and \texorpdfstring{$\om$}{ω}-strings}
We recall the relationship between finite variability signals and $\omega$-strings.

Let $X$ be a finite variability signal over $\Sigma$ and let $\widetilde{\tau}:=0=\tau_0<\tau_1<\cdots <\tau_n<\cdots$ be an $\omega$-sequence of reals such that $X$ is constant on every interval $ (\tau_i,\tau_{i+1})$ and $\lim \tau_i=\infty$.  We say that such $\widetilde{\tau}$ is a sample sequence   for $X$. In other words, an unbounded
$\om$-sequence is a sampling sequence for $X$ if it contains all
the points of discontinuity of $X$.

We define:
\begin{itemize}
\item $ D(X,\widetilde{\tau})$ be an $\omega$-string $(a_0,b_0) (a_1,b_1) \dots (a_i,b_i)\dots $ over $\Sigma\times \Sigma$ defined as $a_i:=X(\tau_i) $ and $b_i$ is the value of $X$ in $(\tau_i,\tau_{i+1})$.
    We say that $D(X,\widetilde{\tau})$ \textbf{represents} $X$.
\item $D(X):=\{D(X, \widetilde{\tau})\mid \widetilde{\tau}$ is a sample sequence  for $X\}$.   
\item For a signal language $L$ define $D(L):=\bigcup_{X\in L} D(X)$.
\end{itemize} 
Observe that if an $\om$-string $U$ represents a signal $X$ and $\rho$ is an order preserving bijection over non-negative reals, then $U $ represents $\rho(X)
$.
\begin{itemize}
\item 
For an $\omega$-string $U=(a_0,b_0) (a_1,b_1) \dots (a_i,b_i)\dots $  over $\Sigma\times \Sigma$ and an unbounded $\om$-sequence $\widetilde{\tau}:=0=\tau_0<\tau_1<\cdots <\tau_n<\cdots$  of reals define a finite variability signal:

 $ FV(U,\widetilde{\tau})$ to be $a_i$ at $\tau_i$ and $b_i$ in the interval $(\tau_i,\tau_{i+1})$. 
 \item $FV(U):=\{FV(U,\widetilde{\tau})\mid \widetilde{\tau}$ is  an unbounded $\om$-sequence $\}$.
 \item For an $\om$-language $L$ define:  $FV(L):=\bigcup_{U\in L} FV(U)$.
\end{itemize}  
The following theorem  was proved in \cite{AR}:
  \begin{thm}\label{thm:LangToSig}
    A finitely variable signal language $S$ is $\MLO$-definable if and only if there
is an $\MLO$-definable $\omega$-language $L$ such that $ S = FV(L)$.
Moreover, (1) there is an algorithm that from an $\MLO$ formula $\varphi(\Xs)$ computes an $\MLO$ formula $\varphi^D(\Xsig2)$ such that if $S$ is a signal language definable by $\varphi$ then $D(L)$ is definable by $\varphi^D$, and 
(2)  there is an algorithm that from an $\MLO$ formula $\varphi(\Xsig2)$ computes an $\MLO$ formula $\varphi^{FV}(\Xs)$ such that if $L$ is an $\om$-language definable by $\varphi$ then $FV(L)$ is definable by $\varphi^{FV}$.
\end{thm}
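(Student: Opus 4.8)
The plan is to reduce the whole statement to the two algorithmic translations (1) and (2), since the biconditional follows from them. Once (1) and (2) are available, the forward direction of the \enquote{iff} takes $L := D(S)$: this $L$ is $\MLO$-definable by part (1), and one checks $FV(D(S)) = S$. The inclusion $S \subseteq FV(D(S))$ is immediate, since any $X \in S$ satisfies $X = FV(U,\widetilde{\tau})$ for $U = D(X,\widetilde{\tau}) \in D(S)$. For the reverse inclusion, note that any two elements of $FV(U)$ are order-preserving images of each other (matching consecutive sample points and extending order-preservingly on the intervals between them), so $FV(U)$ is exactly one \emph{speed-equivalence class}; combining this with the observation that $U$ represents $\rho(X)$ whenever it represents $X$, and with the fact that $\MLO$-definable languages are speed independent, gives $FV(D(S)) \subseteq S$. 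The backward direction of the \enquote{iff} is immediate from (2). So the core of the work is the two syntactic translations.

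For (2), from $\omega$-strings to signals, I would use a relativization. The key observation is that $X \in FV(L)$ iff some representation $D(X,\widetilde{\tau})$ of $X$ lies in $L$, i.e. iff there is a sample sequence $\widetilde{\tau}$ for $X$ with $D(X,\widetilde{\tau}) \models \varphi$. A sample sequence is a discrete, unbounded set of isolated points containing $0$ and all discontinuities of $X$; such a set is finitely variable, so I can quantify over it by a single second-order variable $T$ ranging over FV-predicates and assert these properties in $\MLO$. Relative to a fixed such $T$, the induced order on $T$ is isomorphic to $(\Nat,<)$, the label $a_i = X(\tau_i)$ is read off directly, and $b_i$ is the value of $X$ strictly between consecutive elements of $T$ (constant there by the sample-sequence condition), both expressible. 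Since $T$ consists of isolated points, every subset of $T$ is itself finitely variable, so the second-order quantifiers of $\varphi$, relativized to FV-subsets of $T$, range over exactly the subsets of $\Nat$. Thus $\varphi^{FV}(X) := \exists T\,\bigl(\text{\enquote{$T$ is a sample sequence for $X$}} \wedge \widehat{\varphi}\bigr)$, where $\widehat{\varphi}$ is $\varphi$ with quantifiers relativized to $T$ and atoms reinterpreted as above, defines exactly $FV(L)$. This direction is routine.

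The substantive direction is (1), from signals to $\omega$-strings. Here the difficulty is that the signal $FV(U,\widetilde{\tau})$ lives on all of $\reals$, so $\varphi$ can quantify over points inside the open intervals $(\tau_i,\tau_{i+1})$ and over FV-subsets whose restriction to each interval is an arbitrary finite union of points and subintervals, none of which is directly present in the $\omega$-string $U$. I would tame this with Shelah's composition method \cite{She75}. View the signal's domain as the $\omega$-ordered sum of alternating pieces: singleton points $\tau_i$ labelled $a_i$ and open intervals $(\tau_i,\tau_{i+1})$ carrying the constant value $b_i$. For a fixed quantifier rank $r$, the rank-$r$ $\MLO$-type of such a sum is a computable function of the $\omega$-sequence of rank-$r$ types of the pieces, and crucially the set of rank-$r$ types realizable by a constant-value open interval equipped with an arbitrary FV monadic predicate is finite and computable, while the type of a labelled singleton is trivially determined. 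Reading $U = (a_i,b_i)_i$ left to right therefore becomes a finite-state computation of the accumulated $\omega$-type, and the condition \enquote{the signal satisfies $\varphi$} becomes membership of this accumulated type in a computable set, an $\omega$-regular condition on $U$. By Büchi's theorem (the equivalence of $\MLO$-definability and parity automata stated above) this condition is $\MLO$-definable, and the whole construction is effective, yielding $\varphi^D$ with $D(S)$ definable by $\varphi^D$; here I again use that $U \in D(S)$ iff the signal $FV(U,\widetilde{\tau})$, for any fixed $\widetilde{\tau}$, lies in the speed-independent set $S$, so that a single representative per $\omega$-string must be tested.

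The main obstacle I expect is precisely the compositional analysis in (1): establishing that the open-interval pieces, although continuous and carrying arbitrary finitely variable subsets under second-order quantification, contribute only finitely many and computably enumerable rank-$r$ types, so that the $\omega$-sum operation can be realized as a finite-state reading of the label sequence. Once that finiteness is in hand, the reduction to $\omega$-regularity, and hence to an $\MLO$ formula, is standard.
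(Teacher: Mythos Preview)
The paper does not prove this theorem: it is quoted verbatim with the attribution \enquote{The following theorem was proved in \cite{AR}} and no argument is supplied. So there is no in-paper proof to compare your proposal against.

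Your proposal is a sound reconstruction of how such a result is established. The relativization argument for (2) is correct and essentially forced: once you existentially quantify a discrete unbounded FV set $T$ containing $0$ and all jump points of $X$, the induced substructure on $T$ is a copy of $(\Nat,<)$, every subset of $T$ is FV, and the two coordinates of the $\Sigma\times\Sigma$ letter at $\tau_i\in T$ are $\MLO$-definable from $X$ and $T$. Your derivation of the biconditional from (1) and (2), using the speed-independence lemma already stated in the paper to get $FV(D(S))\subseteq S$, matches the observation the paper records immediately after the theorem (\enquote{if $L$ is a definable signal language, then $L=FV(D(L))$}).

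For (1) your use of Shelah's composition method is the standard route and is correct in outline. One small clarification: you do not need to worry about \enquote{an arbitrary FV monadic predicate} on the open-interval summands at the level of types. The only free predicate is $X$, which is constant on each summand, so up to isomorphism there is exactly one summand structure per letter of $\Sigma$ (a singleton with label $a$, or a bounded dense order with constant label $b$); the FV second-order quantifiers are absorbed into the type itself. The finiteness and effectiveness of the rank-$r$ types you need thus reduce to the decidability of $\MLO$ over a bounded real interval with FV semantics, which is known. After that, Shelah's composition gives an $\MLO$ condition on the $\omega$-indexed sequence of summand types, which is literally an $\MLO$ formula in the $\Sigma\times\Sigma$-labelled $\omega$-word $U$, yielding $\varphi^D$. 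This is presumably close to what \cite{AR} does.
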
 
Note also that if $L$ is a definable signal language, then $L=FV(D(L))$.  However,  for a definable $\om$-language $L$, it might be the case that 
$L\subsetneq D(FV(L))$.
%$L\neq D(FV(L))$.
\begin{defi}[Stuttering Equivalence]
    Two $\om$-strings are stuttering equivalent if there is a signal  $X$ which is represented by both of them. Equivalently, the sets of signals represented by these strings coincide.
\end{defi}
\begin{defi}[Stuttering Free]
An $\om$-string $U$  is stuttering free  if there is a signal $X$ such that (1) or (2) holds.
\begin{enumerate}
\item $X$ has  infinitely many points of discontinuity and $U=D(X, \widetilde{\tau})  $ where $\widetilde{\tau}$  is a   sample sequence that contains exactly all points of discontinuity  of $X$, or
\item $X$ has $k$ points of discontinuity $0=\tau_0 <\tau_1 <\cdots <\tau_{k-1}$, and $U=D(X, \widetilde{\tau})  $, where  $\widetilde{\tau}:= 0=\tau_0 <\tau_1 <\cdots <\tau_{k-1} <\tau_k< \cdots$, where $\tau_i$ (for $i<k$) are the points of discontinuity of $X$,
and $\tau_j$ for $j\geq k$ are arbitrary.
\end{enumerate}
    
\end{defi}
\begin{lem}\hfill
\begin{enumerate}
    \item For every signal $X$, all strings in $D(X) $ are stuttering equivalent. 
\item Every $\om$-string $V$ is stuttering equivalent to a unique stuttering free string $U$.
\end{enumerate}

\end{lem}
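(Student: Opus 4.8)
The plan is as follows. Part (1) is immediate from the definitions: if $U,U'\in D(X)$, then each of $U$ and $U'$ represents $X$ (that is exactly what membership in $D(X)$ means), so $X$ itself witnesses that $U$ and $U'$ are stuttering equivalent.

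For part (2) I would first record two easy facts. (a) Every $\om$-string $V$ over $\Sigma\times\Sigma$ represents some signal: fixing any unbounded increasing $\widetilde\tau$ with $\tau_0=0$, the function $Z:=FV(V,\widetilde\tau)$ is a finitely variable signal and $V=D(Z,\widetilde\tau)$. (b) Stuttering equivalence is transitive, hence an equivalence relation; this follows from the invariance of ``represents'' under order-preserving bijections already recorded in the excerpt. Next I would attach to every signal $Z$ a canonical string $\mathrm{SF}(Z)$: if $Z$ has infinitely many discontinuities $0=\sigma_0<\sigma_1<\cdots$ (necessarily unbounded, by finite variability), put $\mathrm{SF}(Z):=D(Z,\widetilde\sigma)$ with $\widetilde\sigma$ the sequence of these discontinuities, which is stuttering free of type (1); if $Z$ has only finitely many discontinuities, put $\mathrm{SF}(Z):=D(Z,\widetilde\sigma)$ where $\widetilde\sigma$ lists the discontinuities of $Z$ and then continues to $\infty$ arbitrarily, which is stuttering free of type (2) and does not depend on the continuation, since beyond the last discontinuity every letter of $D(Z,\widetilde\sigma)$ equals $(c,c)$ with $c$ the eventual value of $Z$. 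In both cases $\mathrm{SF}(Z)$ represents $Z$, so $\mathrm{SF}(Z)$ and $V$ are stuttering equivalent; taking $Z$ from (a) gives existence. By (b), uniqueness reduces to the single claim: \emph{if $U$ is stuttering free and $Z$ is any signal represented by $U$, then $U=\mathrm{SF}(Z)$}.

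To prove this claim I would call a position $i\ge 1$ of an $\om$-string $(a_0,b_0)(a_1,b_1)\cdots$ \emph{collapsible} when $(a_i,b_i)=(b_{i-1},b_{i-1})$, and note that position $i$ of $D(Z,\widetilde\tau)$ is collapsible precisely when $Z$ is continuous at $\tau_i$. Inspecting the definition of stuttering free then yields: a type-(1) string has no collapsible positions and is not eventually constant, whereas a type-(2) string is eventually constant and its collapsible positions form exactly a nonempty final segment $\{m,m+1,\dots\}$, with $m$ the number of discontinuities of the witnessing signal. Now write $U=D(Z,\widetilde\tau)$ and argue by cases on whether $Z$ has infinitely many discontinuities. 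If it does, $U$ is not eventually constant, hence of type (1), hence has no collapsible positions, so every $\tau_i$ is a discontinuity of $Z$; since $\widetilde\tau$ also contains all discontinuities of $Z$, it enumerates exactly the discontinuities of $Z$ and $U=\mathrm{SF}(Z)$. If $Z$ has only finitely many discontinuities, $U$ is eventually constant, hence of type (2); if the initial entries of $\widetilde\tau$ did not coincide with the discontinuities of $Z$, then some $\tau_i$ with $i\ge 1$ strictly below the index of $Z$'s last discontinuity would be a continuity point of $Z$, hence a collapsible position, while that last-discontinuity index is itself not collapsible --- contradicting that the collapsible positions form a final segment. Hence $\widetilde\tau$ begins with the discontinuities of $Z$ and, again, $U=\mathrm{SF}(Z)$.

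The bookkeeping in (a), (b) and the collapsibility computation is short; the real content, and the step I expect to be the main obstacle, is the uniqueness claim --- pinning down the combinatorial dictionary between stuttering free strings and collapsible positions, and in the finite-discontinuity case ruling out ``hidden'' continuity sample points inside the non-constant prefix of $U$.
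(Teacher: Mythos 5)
The paper states this lemma without proof, so there is no official argument to compare against. Your overall architecture is sound: part (1) is indeed immediate; for part (2), existence via a canonical stuttering-free representative $\mathrm{SF}(Z)$ of any signal $Z$ represented by $V$, and uniqueness via the dictionary ``position $i\ge 1$ of $D(Z,\widetilde\tau)$ is collapsible iff $\tau_i$ is a continuity point of $Z$,'' which is exactly the content of the characterization lemma the paper states next. However, one step in your case analysis is false as written: a type-(1) stuttering-free string \emph{can} be an eventually constant (even constant) string of letters. Take $X$ with $X(t)=1$ for $t\in\Nat$ and $X(t)=0$ otherwise; its discontinuities are exactly the naturals, and sampling at them gives $U=(1,0)^{\om}$, a type-(1) stuttering-free string that is a constant $\om$-string. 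So the implications ``$Z$ has infinitely many discontinuities $\Rightarrow$ $U$ is not eventually constant'' and ``$U$ is eventually constant $\Rightarrow$ $U$ is of type (2)'' both fail, and the dichotomy between the two types of stuttering-free strings cannot be read off from eventual constancy of the letter sequence.

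The repair stays entirely inside your own framework: distinguish the types by collapsible positions rather than by eventual constancy. You already establish that a type-(1) string has no collapsible positions and that a type-(2) string, witnessed by a signal with $k$ discontinuities, has collapsible positions exactly $\{k,k+1,\dots\}$, a cofinite set. Since $\widetilde\tau$ contains all discontinuities of $Z$: if $Z$ has infinitely many discontinuities, then infinitely many positions of $U=D(Z,\widetilde\tau)$ are non-collapsible, ruling out type (2); if $Z$ has finitely many, then cofinitely many positions are collapsible, ruling out type (1). With that substitution the rest of your argument, including the final-segment contradiction that rules out hidden continuity sample points in the finite case, goes through. One further small point worth a sentence in a final write-up: transitivity of stuttering equivalence needs the observation that any two signals represented by the same string are images of one another under an order-preserving bijection (map one sample sequence to the other piecewise linearly and extend), which is also what justifies the ``equivalently'' in the paper's definition of stuttering equivalence.
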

%Two $\om$-strings are stuttering equivalent if there is $X$ which is represented by both of them. Equivalently, a the sets of signals represented by these string coincide.

%A string $s $ is stuttering free if there is $X$ same signal
Stuttering free $\om$-strings can be characterized as follows:
\begin{lem} 
   An $\om$-string $U$ over $\Sigma  \times\Sigma$ is stuttering-free  if it satisfies the following condition:  if  $i$-th letter of $U$  is $(a,b)$ and  $i+1$-th letter is $(b,b)$, then  for all $j>i$,   $j$-th letter of $U$  is $(b,b)$. 
\end{lem}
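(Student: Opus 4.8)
The plan is to reduce the combinatorial statement about the $\om$-string $U=(a_0,b_0)(a_1,b_1)\cdots$ to a transparent statement about the signals it represents. First I would fix an arbitrary unbounded sampling sequence $\widetilde\tau=0=\tau_0<\tau_1<\cdots$ and set $X:=FV(U,\widetilde\tau)$, so that $X$ equals $a_i$ at $\tau_i$ and $b_i$ on $(\tau_i,\tau_{i+1})$, and $U=D(X,\widetilde\tau)$. The key elementary observation I would record is that, for $i\geq 0$, the sample point $\tau_{i+1}$ is a point of continuity (a ``stutter point'') of $X$ exactly when $b_i=a_{i+1}=b_{i+1}$ --- which is precisely the hypothesis ``the $i$-th letter is $(a,b)$ and the $(i+1)$-th letter is $(b,b)$'' --- while $\tau_0=0$ is always a point of discontinuity of $X$ and every point of discontinuity of $X$ is one of the $\tau_m$. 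Under this dictionary the condition in the lemma says exactly: \emph{as soon as some sample point is a continuity point of $X$, the signal is thereafter constant}; this is manifestly a property of $U$ alone, independent of the chosen $\widetilde\tau$.

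For the stated implication I would then argue by cases on whether any $\tau_m$ with $m\geq 1$ is a continuity point of $X$. If none is, every $\tau_m$ is a point of discontinuity of $X$, so $X$ has infinitely many of them and $\widetilde\tau$ lists exactly them, which puts $U=D(X,\widetilde\tau)$ in clause (1) of the definition of stuttering freeness. Otherwise, taking the least such index $j$, the reformulated condition forces $X$ to be constant on $[\tau_j,\infty)$; hence $\tau_1,\dots,\tau_{j-1}$ are points of discontinuity (by minimality of $j$), $\tau_0=0$ is one, and no $\tau_m$ with $m\geq j$ is --- so the set of points of discontinuity of $X$ is exactly the initial segment $\tau_0<\cdots<\tau_{j-1}$, $X$ has $k:=j$ points of discontinuity, and $\widetilde\tau=\tau_0<\cdots<\tau_{k-1}<\tau_k<\cdots$ has the shape demanded by clause (2). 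Either way $U$ is stuttering free. I would also include the (easy) converse, so that the lemma is a genuine characterization: if $U$ witnesses clause (1), no sample point can be a continuity point of the witnessing signal; and if it witnesses clause (2), any continuity sample point must lie past the last point of discontinuity $\tau_{k-1}$, where the signal is constant equal to some $c$, which forces all letters with index $>i$ to equal $(c,c)=(b,b)$.

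I expect the only real obstacle to be the bookkeeping: getting the index shift in ``$(a,b)$ followed by $(b,b)$'' $\leftrightarrow$ ``$\tau_{i+1}$ is a continuity point'' exactly right, and making precise the claim that, under the hypothesis, the set of discontinuities of $FV(U,\widetilde\tau)$ is an \emph{initial segment} of $\widetilde\tau$. Everything else is immediate from the definitions, using only the conventions that every signal is discontinuous at $0$ and that a finitely variable signal with no point of discontinuity on a connected interval is constant there.
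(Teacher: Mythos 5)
Your proof is correct: the dictionary ``letter $i$ is $(a,b)$ and letter $i+1$ is $(b,b)$ $\iff$ $b_i=a_{i+1}=b_{i+1}$ $\iff$ $FV(U,\widetilde\tau)$ is continuous at $\tau_{i+1}$'' is exactly right, the two cases (no continuity sample point / least continuity sample point $\tau_j$ yielding $k=j$ discontinuities) match clauses (1) and (2) of the definition, and the paper itself states this lemma without proof as an immediate consequence of the definitions, which is precisely what you have filled in. Note only that the lemma as printed claims just the ``condition $\Rightarrow$ stuttering-free'' direction; your inclusion of the easy converse is consistent with the surrounding text's claim that this is a characterization, and is harmless.
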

%
% An $\om$-string $U$ over $\SIgma\times \Sigma$ is stuttering-free  if it satisfies the following condition:  if  $i$-th letter of $U$  is $(a,b)$ and  $i+1$-th letter is $(b,b)$, then  for all $j>i$,   $j$-th letter of $U$  is $(b,b)$.
Hence,
\begin{lem}[Stuttering free strings are definable]
There is an $\MLO$ formula that defines the set of stuttering free $\om$-strings over $\Sigma \times \Sigma$.
\end{lem}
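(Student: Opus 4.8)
The plan is to read off an $\MLO$ formula from the combinatorial characterization of the preceding lemma. That lemma identifies the stuttering-free $\om$-strings over $\Sigma\times\Sigma$ with exactly those $U$ for which, whenever the $i$-th letter is $(a,b)$ and the $(i{+}1)$-st letter is $(b,b)$, every later letter equals $(b,b)$. So the whole task is to observe that this property is $\MLO$-expressible over $\omega=(\Nat,<)$, and all the ingredients are standard.

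First I would record that the successor relation $\mathrm{succ}(t,v):= t<v\wedge\neg\exists s(t<s<v)$ is $\MLO$-definable, and that, under the coding of maps into $\Sigma\times\Sigma$ by tuples of monadic predicates, ``$U(t){=}(c,d)$'' is an atomic formula for each $(c,d)\in\Sigma\times\Sigma$. Since $\Sigma$ is finite, the finite conjunction over pairs $(a,b)\in\Sigma\times\Sigma$ below is a legitimate formula. I then take
\[
 \varphi(U)\ :=\ \forall t\,\forall v\Big(\mathrm{succ}(t,v)\ \to\ \bigwedge_{a,b\in\Sigma}\big(U(t){=}(a,b)\wedge U(v){=}(b,b)\ \to\ \forall s\,(t<s\to U(s){=}(b,b))\big)\Big).
\]
By the preceding lemma, $\omega\models\varphi(U)$ if and only if $U$ is stuttering free, so $\varphi$ defines the set of stuttering-free $\om$-strings over $\Sigma\times\Sigma$.

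There is no real obstacle once the characterization is in hand: the only things to verify are that successor, the per-letter atomic predicates, the finite boolean combination, and the ``all later positions'' quantifier all remain inside $\MLO$ — which is immediate — and that the condition is applied uniformly to every position, including $i=0$, which is built into the outer universal quantifier over $t$.
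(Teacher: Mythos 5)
Your formula is exactly the intended transcription of the preceding characterization lemma into $\MLO$, using the definable successor relation and the per-letter atomic predicates; the paper treats this as immediate (``Hence,'') and gives no further argument. Your proposal is correct and follows the same route, just spelled out explicitly.
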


\subsection{Definable Operators} \label{sub:def:operators} 
% \begin{lem}[FV definable operators]
% Let $\Psi(\XI  ,\Yo)$ be an  $\MLO$ formula. If $\Psi$ defines a finite variability operator  $F$ then:  $F(X)$ is not continuous  at $t$ only if $X$ is not continuous at $t$.
% \end{lem}
Let $\alpha(\Xi2 , \Yi2 )$ be an $\MLO$ formula. It defines an  $\om$-language over  $\Sigma^2_{in}\times \Sigma^2_{out}$.
To an $\om$-string over  $\Sigma^2_{in}\times \Sigma^2_{out}$ in a natural way corresponds an $\om$-string over $(\Sigma_{in}\times \Sigma_{out})^2$ and an $\om$-string over $\Sigma_{in}\times \Sigma_{out}$. Hence, we can consider $\alpha$ as defining an $\om$-language over $\Sigma_{in}\times \Sigma_{out}$. We say that $\alpha$ defines a C-operator (respectively, SC-operator), if its language is the graph
of this operator.

 % If $F$ is a definable  operator on  finite variability signals, then:  $F(X)$ is not continuous  at $t$ only if $X$ is not continuous at $t$.

\begin{lem} \label{deinable-sf}
    Let $\alpha(\Xi2 , \Yi2 )$ be a formula such that:
    \begin{enumerate}
        \item If $(\Xi2,\Yi2) $ satisfies  $\alpha$, then $\Xi2$ is stuttering free.
        \item  For every stuttering free $X$ 
        there is a unique $Y$ such that $(X,Y)$ satisfies $\alpha$.
    \end{enumerate}
    Then, $\alpha^{FV}$ defines an operator on    
    signals.  Moreover,  if in addition $\alpha$ defines C-operator
    (respectively, SC-operator) on $\om$-strings, then   $\alpha^{FV}$ defines C-operator
    (respectively, SC-operator) on FV-signals.
    \end{lem}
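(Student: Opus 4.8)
The plan is to verify directly that $\alpha^{FV}$, read as a signal language over $\Sigma_{in}\times\Sigma_{out}$, is the graph of a total operator $G$ from $\Sigma_{in}$-signals to $\Sigma_{out}$-signals, and then to transport causality across the $D/FV$ correspondence. I will use the stuttering lemmas recalled above: every signal $X$ over $\Sigma_{in}$ is represented by a \emph{unique} stuttering-free $\om$-string $\widehat X$ — the one read off the sampling sequence consisting of exactly the discontinuities of $X$, padded by arbitrary extra points (which do not change $\widehat X$) when $X$ has only finitely many discontinuities — and $FV(\widehat X,\widetilde\tau)=X$ for that sampling sequence $\widetilde\tau$.

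\emph{Existence.} Given $X$, take $\widehat X$ with its minimal sampling sequence $\widetilde\tau$; hypothesis~(2) gives a unique $\om$-string $R$ over $\Sigma_{out}^2$ with $(\widehat X,R)\models\alpha$; reading $(\widehat X,R)$ over $(\Sigma_{in}\times\Sigma_{out})^2$ and putting $Y:=FV(R,\widetilde\tau)$ yields $(X,Y)=FV\bigl((\widehat X,R),\widetilde\tau\bigr)\in\alpha^{FV}$. \emph{Uniqueness.} If $(X,Y_1),(X,Y_2)\in\alpha^{FV}$ are witnessed by $(P_j,R_j)\models\alpha$ with sampling sequences $\widetilde\sigma^{j}$ satisfying $FV\bigl((P_j,R_j),\widetilde\sigma^{j}\bigr)=(X,Y_j)$, then by hypothesis~(1) each $P_j$ is stuttering-free, and as it represents $X$ we get $P_1=P_2=\widehat X$ by uniqueness of the stuttering-free representative, whence $R_1=R_2=:R$ by hypothesis~(2). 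When $\widehat X$ is not eventually constant, the sampling that recovers $X$ from $\widehat X$ is forced to list exactly the discontinuities of $X$, so $\widetilde\sigma^{1}=\widetilde\sigma^{2}$ and $Y_1=Y_2$.

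The step I expect to be the \emph{main obstacle} is the remaining case: $X$ has finitely many discontinuities, $\widehat X$ ends in a constant suffix, and the witnessing sampling sequences agree only up to the last discontinuity of $X$, being otherwise arbitrary. Here $\alpha^{FV}$ is a graph only if $R$ is also eventually constant, so that $FV(R,\widetilde\sigma^{j})$ is insensitive to the free tail of $\widetilde\sigma^{j}$. In the application this holds: $\alpha$ there is $\varphi^D$ for an $\MLO$-definable candidate operator, which by Lemma~\ref{InJumpsWhereOutJumps} jumps only where its input jumps, hence is constant on a constant suffix of the input, so its output $\om$-string is eventually constant. Granting this we obtain $FV(R,\widetilde\sigma^{1})=FV(R,\widetilde\sigma^{2})$, so $\alpha^{FV}$ is a graph; and the operator $G$ it defines is $\MLO$-definable, since $\alpha^{FV}$ is defined by the formula $\alpha^{FV}$ of Theorem~\ref{thm:LangToSig}.

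For the \emph{moreover} part, suppose $\alpha$ defines a $C$-operator on $\om$-strings. To see $G$ is causal, fix $s$ and signals $X_1,X_2$ agreeing on $[0,s]$, pass to their minimal stuttering-free representations, and split each letter $(a,b)$ into the value $a$ at the sampling point and the value $b$ on the following open interval. The two representations then agree on the initial block of positions that describe $[0,s]$ (the borderline letter, if $s$ is a continuity point, contributes an open interval reaching past $s$, so its $b$-part is still common); causality of the $\om$-string operator makes the corresponding output blocks agree, and reading back through $FV$ gives $G(X_1)(s)=G(X_2)(s)$. The only delicate point is the exact index bookkeeping — which positions of the split string are determined by the input on $[0,s]$ — and it is routine once the encoding is fixed; for the $SC$ case one repeats the argument with $[0,s)$ in place of $[0,s]$, the block shifted by one position, using that a strongly causal $\om$-string operator lets output position $i$ depend only on input positions $<i$.
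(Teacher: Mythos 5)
The paper states this lemma without any proof, so there is no official argument to compare against; judged on its own, your reconstruction handles existence, the uniqueness of the stuttering-free representative, and the infinitely-discontinuous case correctly, and you have rightly located the first real difficulty. But let me sharpen it: hypotheses (1)--(2) genuinely do \emph{not} suffice, so the lemma as literally stated is false and any proof must import an extra property. Concretely, let $\alpha(U,V)$ say ``$U$ is stuttering free and the unfolding $v$ of $V$ satisfies $v[0]=u[0]$ and $v[j]=u[j]\oplus u[j-1]$,'' where $u$ is the unfolding of $U$. This satisfies (1), (2) and is (the restriction to stuttering-free inputs of) a C-operator on $\om$-strings, yet for the signal with $X(0)=1$ and $X=0$ on $(0,\infty)$ the output string is $(1,1)(0,0)^{\om}$, whose $FV$-image depends on the arbitrary padding point $\sigma_1$; so $\alpha^{FV}$ is not a graph. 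The needed extra hypothesis is that when the input letters are $(a,b)(b,b)(b,b)\cdots$ from position $k-1$ on, the output letters are $(c,d)(d,d)(d,d)\cdots$ from position $k-1$ on. Your justification for why this holds in the application is also off target: in the Algorithm, $\alpha$ is a B\"uchi--Landweber solution for $(\Psi^*)^D$, not a formula of the form $\varphi^D$ for a signal operator, so Lemma~\ref{InJumpsWhereOutJumps} does not apply to it; the property comes instead from the two continuity-preservation conjuncts built into $\Psi^*$ in Lemma~\ref{lem:C-operator-prop}.

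The second gap is in your causality argument, which only covers the cases where $s$ is a sampling point of both minimal representations or of neither. If $X_1=X_2$ on $[0,s]$ but $X_1$ is right-discontinuous at $s$ while $X_2$ is continuous there, then $s$ occurs in the sampling sequence of $X_1$ only, so $G(X_1)(s)$ is read at unfolded output position $2m$ of one computation while $G(X_2)(s)$ is read at position $2m-1$ of the other; ``the corresponding output blocks agree'' does not apply because the blocks have different lengths, and string-causality alone does not identify output position $2m$ with output position $2m-1$. This case \emph{can} be repaired without further hypotheses once functionality is established: compare $\widehat{X_1}$ with the auxiliary stuttering-free string consisting of its first $m$ letters followed by $(b,b)^{\om}$; their unfoldings agree up to position $2m$, so string-causality gives the same output at position $2m$, and functionality of $\alpha^{FV}$ on the auxiliary (eventually constant) input forces that value to equal the output at position $2m-1$. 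You need to add this step (or invoke the left-continuity-preservation conjunct of $\Psi^*$ directly); as written, the causality claim does not follow.
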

%\begin{proof}
%    
%\end{proof}
\begin{lem} 
 \label{prop:from-fvtod}
If there is a definable  C-operator that implements
$\Psi(\XI,\Yo )$ over FV-signals, then there is a definable C-operator that implements
 $\Psi^D$ (over $\om$).
\end{lem}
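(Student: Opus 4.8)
The plan is to take a definable C-operator on FV-signals that implements $\Psi$ and massage its defining formula into one that defines a C-operator on $\om$-strings implementing $\Psi^D$. Suppose $F$ is such an operator on signals, definable by an $\MLO$ formula $\mu(\XI,\Yo)$, and recall that $F$ satisfies $\Sig\models\forall X.\Psi(X,F(X))$. First I would apply the algorithm of Theorem \ref{thm:LangToSig}(1) to $\mu$ to obtain an $\MLO$ formula $\mu^D$ over $\Sigma_{in}^2\times\Sigma_{out}^2$ defining $D(\mathrm{graph}(F))$, i.e.\ the set of $\om$-strings representing pairs $(X,F(X))$. The issue is that $D(\mathrm{graph}(F))$ is \emph{not} itself the graph of an $\om$-operator: a given input $\om$-string $U$ may be over-sampled relative to the discontinuities of $F(X)$, so several output strings are compatible with it, and conversely $\mu^D$ as obtained need not be functional in the right way.

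The key step is to restrict attention to \textbf{stuttering-free} strings. Using Lemma \ref{deinable-sf} in the reverse direction — or, more directly, using the facts that (a) the stuttering-free strings form a definable set (``Stuttering free strings are definable''), (b) every signal $X$ with $\om$-representation has a \emph{canonical} stuttering-free representative $D_{sf}(X)$, obtained by sampling exactly at the discontinuities of $X$ (padded arbitrarily in the trivial case), and (c) stuttering-equivalent strings represent the same signals — I would define
$$\beta(\XI,\Yo):=\exists X',Y'\bigl(\mathrm{sf}(X')\wedge X'\sim X\wedge \mu^D(X',Y')\wedge \text{$Y$ is obtained from $Y'$ by the same resampling that turns $X'$ into $X$}\bigr),$$
the point being that from a stuttering-free input string $X'$, the formula $\mu^D$ picks out a unique output string $Y'$ representing $F(\mathrm{FV}(X'))$, so $\beta$ is functional, and it is still a C-operator because $\mu$ was a C-operator on signals and the representation map $D$ is prefix-compatible (the first $n$ letters of a representation of $X$ depend only on $X$ up to its $n$-th discontinuity). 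Expressing ``$Y$ is obtained from $Y'$ by the same resampling that turns $X'$ into $X$'' in $\MLO$ is the routine-but-fiddly part: one quantifies over the monotone correspondence between sample indices, which is itself $\MLO$-definable since both strings are over finite alphabets and the correspondence is determined by where the padding letters $(b,b)$ sit in $X$.

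Finally I would verify that this $\om$-operator implements $\Psi^D$. By definition of $\Psi^D$ (Theorem \ref{thm:LangToSig}(1) applied to $\Psi$), $\Psi^D$ holds of a pair of $\om$-strings over $\Sigma_{in}\times\Sigma_{out}$ exactly when they jointly represent a signal pair satisfying $\Psi$; since for every input $\om$-string $U$ the output $\beta(U)$ represents $F(\mathrm{FV}(U))$, and $\Sig\models\Psi(\mathrm{FV}(U),F(\mathrm{FV}(U)))$, the pair $(U,\beta(U))$ satisfies $\Psi^D$. The main obstacle, then, is purely bookkeeping: showing that the ``resampling'' relation is $\MLO$-definable and that the resulting formula is genuinely functional and causal on \emph{all} $\om$-strings (not just stuttering-free ones) — once the stuttering-free reduction is in place via the already-established definability lemmas, no new ideas are needed.
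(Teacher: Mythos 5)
Your diagnosis of the difficulty is off, and this leads you both to build unnecessary machinery and to skip over the one fact that actually carries the proof. You assert that $D(\mathrm{graph}(F))$ "is not itself the graph of an $\om$-operator" because an over-sampled input string admits several compatible outputs. That is not so: in the representation $D((X,Y),\widetilde{\tau})$ the input and the output are sampled at the \emph{same} points $\widetilde{\tau}$, so an over-sampled $U=D(X,\widetilde{\tau})$ comes paired with the equally over-sampled $V=D(F(X),\widetilde{\tau})$, with no freedom; and if $(U,V_1)$ and $(U,V_2)$ both satisfied $\mu^D$, then applying $FV(\cdot,\widetilde{\tau})$ for a single fixed $\widetilde{\tau}$ would give two distinct output signals $FV(V_1,\widetilde{\tau})\neq FV(V_2,\widetilde{\tau})$ both $\mu$-related to the same input signal $FV(U,\widetilde{\tau})$, contradicting functionality of $F$ on signals. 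This is exactly the paper's argument: $\mu^D$ (the paper's $\alpha^D$) is already the graph of a C-operator implementing $\Psi^D$, and the whole stuttering-free normalization and the "resampling" relation you leave as fiddly bookkeeping are not needed.

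The genuine gap is on the other side, namely \emph{totality}, and it is present in your version too. For a given input string $U$ (stuttering-free or not) to have \emph{any} $V$ with $(U,V)\in D(\mathrm{graph}(F))$, one needs a sampling sequence $\widetilde{\tau}$ that realizes $U$ from some signal $X$ and simultaneously samples all discontinuities of $F(X)$. This works only because $F$ is $\MLO$-definable, hence by Lemma \ref{InJumpsWhereOutJumps} the output $F(X)$ can jump only where $X$ jumps, so every sampling sequence for $X$ is automatically one for the pair $(X,F(X))$. If $F(X)$ could jump at a point where $X$ is continuous, the common sampling would force extra stuttering letters into the first component, and no pair in $D(\mathrm{graph}(F))$ would have first component $U$ — in particular your claim that "$\mu^D$ picks out a unique output string $Y'$ representing $F(FV(X'))$" for a stuttering-free $X'$ would already fail at the existence stage. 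You never invoke this jump lemma, yet it is the hypothesis-specific fact (definability of $F$, not mere causality) on which the whole statement rests; with it in hand, the direct argument above closes the proof and the resampling apparatus can be discarded.
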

\begin{proof}
Assume that $\alpha$ defines a  C-operator that implements
$\Psi(\XI,\Yo )$. We claim that  
 $\alpha^D$ defines a C-operator that implements
 $\Psi^D$ (over $\om$).
 First, note that $\alpha^D  $ implies $ \psi^D $.
 Now, for every $U\in (\Sigma^2_{in})^\om$ there is a FV signal 
 $X$ and $\widetilde{\tau}$ such that
 $U=D(X,\widetilde{\tau})$. Let $Y$ be such that $ \alpha(X,Y)$ holds.  Since $Y$ changes only when $X$ changes, $\widetilde{\tau}$  is also a  sample sequence  for $Y$.  Let $V:=D(Y,   \widetilde{\tau})$. It is clear that $\alpha^D(U,V)$. Therefore, for every $U$ there is $V$ such that $\alpha^D(U,V)$. It remains to show that such $V$ is unique.
 Indeed, if  there are $V_1$ and $V_2$ such that
 both $(U,V_1)$ and $(U,V_2)$ satisfy $\alpha^D$,
 then  for arbitrary $\widetilde{\tau}$,
 $FV (  (U,V_1)  , \widetilde{\tau})$ and 
  $FV (  (U,V_2)  , \widetilde{\tau})$ satisfy
  $\alpha$. However,   $FV (   V_1  , \widetilde{\tau})\neq FV (   V_2  , \widetilde{\tau}) $
   and both are $\alpha$ related to $FV(U, \widetilde{\tau})$. This contradicts the fact that $\alpha$ defines an operator
   on signals.

  Causality and strong causality are immediate.
\end{proof}
Recall that  Lemma \ref{InJumpsWhereOutJumps} states: if $F$ is a definable  operator on  finite variability signals, then  $F(X)$ is not continuous  at $t$ only if $X$ is not continuous at $t$.
\begin{lem}
\label{lem:C-operator-prop}
Let $\Psi(\XI  ,\Yo)$ be in $\MLO$ and let $F$ be an $\MLO$ definable C-Operator (or SC-operator).\\
Denote $\Psi^*(\XI  ,\Yo ):= $
$$\Psi \land \text{$\forall t\in (0,\infty)$ if ${X}$ is continuous at $t$, then so is $ {Y}$}\land $$
$$\text{if ${X}$ is left-continuous at $t$, then so is ${Y}$.}$$
Then $$\text{ $F$ implements $\Psi \iff F$ implements $\Psi^{*}.$}$$
\end{lem}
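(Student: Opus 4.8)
The key point is that the extra conjuncts in $\Psi^*$ do not change which operators are successful, precisely because \emph{definable} operators already satisfy a strong form of continuity-preservation. The plan is as follows. Since $\Psi^*$ logically implies $\Psi$, any $F$ that implements $\Psi^*$ automatically implements $\Psi$, so one direction is immediate and holds for arbitrary operators. For the converse, suppose $F$ is an $\MLO$-definable C-operator (the SC case is identical) with $\Sig \models \forall X.\ \Psi(X, F(X))$; I must show $\Sig \models \forall X.\ \Psi^*(X, F(X))$, i.e. that for every signal $\mathbf{X}$, the output $\mathbf{Y} = F(\mathbf{X})$ is continuous wherever $\mathbf{X}$ is and left-continuous wherever $\mathbf{X}$ is.

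The continuity clause is exactly Lemma~\ref{InJumpsWhereOutJumps}: if $\mathbf{Y} = F(\mathbf{X})$ were discontinuous at some $t > 0$ where $\mathbf{X}$ is continuous, that lemma gives a contradiction, so $\mathbf{Y}$ is continuous at every $t$ where $\mathbf{X}$ is. For the left-continuity clause I would argue analogously, mimicking the proof of Lemma~\ref{InJumpsWhereOutJumps} but with a one-sided order-preserving bijection. Concretely: suppose $\mathbf{X}$ is left-continuous at $t > 0$ but $\mathbf{Y}$ is not. By finite variability there is an interval $(a,t]$ on which $\mathbf{X}$ is constant (so $\mathbf{X}$ has no discontinuity in $(a,t)$) and we may also shrink $b > t$ so that $t$ is the only discontinuity of $(\mathbf{X},\mathbf{Y})$ in $(a,b)$; note the left-continuity of $\mathbf{X}$ at $t$ means $\mathbf{X}$ is in fact constant on $(a,t]$, i.e. $\mathbf{X}(t)$ equals the value of $\mathbf{X}$ on $(a,t)$. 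Pick an order-preserving bijection $\rho: \reals \to \reals$ which is the identity outside $(a,b)$ and maps $t$ to some $t' \in (a,t)$. Then $\rho(\mathbf{X})$ still has its discontinuity (if any) at $\rho(t) = t'$, but since $\mathbf{X}$ was constant on $(a,t]$ and also on no smaller left-neighborhood of $t'$ changed, $\rho(\mathbf{X})$ differs from $\mathbf{X}$ only possibly on $[t', t]$ — and there both equal the constant value — so $\rho(\mathbf{X}) = \mathbf{X}$. By speed independence $(\rho(\mathbf{X}), \rho(\mathbf{Y}))$ satisfies the defining formula $\psi$ of $F$, hence by functionality $\rho(\mathbf{Y}) = \mathbf{Y}$; but $\rho(\mathbf{Y})$ has its discontinuity at $t'$ while $\mathbf{Y}$ has it at $t \neq t'$, a contradiction. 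Therefore $\mathbf{Y}$ is left-continuous at $t$.

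The step requiring the most care is the second bijection argument: one has to make sure the interval $(a,b)$ is chosen small enough that $t$ (resp.\ $t'$) is genuinely the \emph{only} discontinuity of the relevant pair inside it, and that $\rho(\mathbf{X}) = \mathbf{X}$ really holds — this uses that $\mathbf{X}$ is \emph{left}-continuous at $t$ (so $\mathbf{X}$ takes the same value at $t$ and just to its left), which is exactly why the hypothesis of the clause is left-continuity rather than plain continuity. Everything else (finite variability giving the required constant interval, speed independence and functionality of a definable operator) is available from the results recalled in the excerpt. Assembling the two clauses gives $\Psi^*(X,F(X))$ for all $X$, completing the proof.
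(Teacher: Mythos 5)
Your treatment of the easy direction ($\Psi^*\Rightarrow\Psi$) and of the continuity clause (via Lemma~\ref{InJumpsWhereOutJumps}) is fine, but the argument for the left-continuity clause has a genuine gap. The only case not already covered by Lemma~\ref{InJumpsWhereOutJumps} is when $\textbf{X}$ is left-continuous but \emph{right}-discontinuous at $t$, say $\textbf{X}\equiv c$ on $(a,t]$ and $\textbf{X}\equiv d\neq c$ on $(t,b)$. Your bijection $\rho$ maps $(t',b)$ onto the old $(t,b)$ territory, so $\rho(\textbf{X})$ equals $d$ on $(t',t]$ while $\textbf{X}$ equals $c$ there; hence $\rho(\textbf{X})\neq\textbf{X}$ --- the rescaling drags the right-hand jump of $\textbf{X}$ from $t$ to $t'$ together with the jump of $\textbf{Y}$, and no contradiction with functionality arises. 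This is not a repairable slip within your framework: the left-continuity clause is \emph{false} for definable operators in general, so no speed-independence argument alone can prove it. For instance, the operator $F(X)(s)=1$ iff there is $t'\leq s$ with $X(t')=0$ and $X\equiv 1$ on some interval $(t',t'+\delta)$ is $\MLO$-definable (indeed first-order) and total on FV-signals, yet it maps the input that is $0$ on $[0,t]$ and $1$ on $(t,\infty)$ (left-continuous at $t$) to the output that is $0$ on $[0,t)$ and $1$ on $[t,\infty)$ (not left-continuous at $t$). That operator is of course not causal, which is exactly the hypothesis your argument never uses.

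The missing ingredient is causality. Given $\textbf{X}$ left-continuous at $t>0$, let $\textbf{X}'$ agree with $\textbf{X}$ on $[0,t]$ and take the constant value $\textbf{X}(t)$ on $[t,\infty)$. Since $\textbf{X}$ is constant on some $(a,t]$, the signal $\textbf{X}'$ is constant on $(a,\infty)$, hence continuous at $t$, so by Lemma~\ref{InJumpsWhereOutJumps} the output $F(\textbf{X}')$ is continuous, in particular left-continuous, at $t$. By causality $F(\textbf{X})$ and $F(\textbf{X}')$ coincide on $[0,t]$, and left-continuity at $t$ is a property of the restriction to $[0,t]$; hence $F(\textbf{X})$ is left-continuous at $t$. (For an SC-operator the same works with $\textbf{X}'$ defined from $\textbf{X}\restriction[0,t)$ and the left limit at $t$.) With this replacement for your second bijection argument, the proof of the lemma goes through.
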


% \begin{lem}
% \label{lem:SC-operator-prop}
% Let $\Psi(\widehat{X},\widehat{Y})$ be in $\MLO$ and let $G$ be an $\MLO$ definable SC-Operator.\\
% Denote $\Psi^{\dagger}(\widehat{X},\widehat{Y}):=$
% $$\Psi\land \text{$\forall t\in (0,\infty)$, if $\widehat{X}$ is continuous at $t$ then so is $\widehat{Y}$} \land $$
% $$\text{$\widehat{Y}$ is left continuous on $(0,\infty)$}$$
% then $$\text{ $G$ implements $\Psi \iff G$ implements $\Psi^{\dagger}$}$$
% \end{lem}
\subsection{Algorithm}\label{subs:alg}
Now, we are ready to  present an algorithm that verifies whether
there is a definable C-operator that implements  an $\MLO$  formula
over FV-signals, and if so, it computes an $\MLO$ formula that defines such an operator.

\begin{center}
\fbox{%
  \parbox{0.98\textwidth}{%
\begin{center}  \textbf{Algorithm}
\end{center}
Let $\Psi(\XI, \Yo) $ be an $\MLO$ formula  and  let $\Psi^*$ be defined as in Lemma \ref{lem:C-operator-prop}. 

Use Büchi-Landweber algorithm to decide whether  
there is a C-operator that implements $(\Psi^*)^D$ over $\om$.

If the answer is ``NO'' then there is no $\MLO$ definable C-operator that implements $\Psi$.

If $\alpha$  defines a C-operator that implements $(\Psi^*)^D$ over $\om$, then let
$$\alpha_{\sf}:=\alpha(\Xi2,\Yi2 ) \land \mbox{``$X$ is stuttering free.''}$$ 

Let  $\beta:=(\alpha_{\sf})^{FV}$. Then,  $\beta$  defines a C-operator that implements $\Psi$.
}%
}
\end{center}

\vspace{11pt}

%Correctness of the algorithm.
\begin{proof}[Correctness of the algorithm.]  ~ \hfill

  If the answer is ``NO'' the correctness follows  from 
Lemma \ref{prop:from-fvtod} and Lemma \ref{lem:C-operator-prop}.

If the algorithm returns $\beta$, the correctness follows from Lemma \ref{deinable-sf} and Lemma  \ref{lem:C-operator-prop}.
  
\end{proof}
% If the answer is ``NO,'' the correctness follows  from 
% \Cref{prop:from-fvtod} and \Cref{lem:C-operator-prop}.
%
% If the algorithm returns $\beta$, the correctness follows from \Cref{deinable-sf} and \Cref{lem:C-operator-prop}.
%
The algorithm that decides  whether there is a  SC-definable operator on FV-signals that implements $\Psi$ is almost the same:
replace everywhere ``C-operator'' by ``SC-operator.''

\begin{rem}[Non-causal uniformization]

A  uniformizer  for a binary relation $R\subseteq A\times B$ is a maximal subrelation $R^*$ which is the graph of a partial function.
  Equivalently, let us call the ``cross-section at $a$" for $a\in A$, the set $R_a=\{b\mid (a,b)\in R\}$, then a uniformizer for $R$ is a relation $R^*\subseteq R$  such that for all~$a\in A$, if $R_a$ is non-empty, then there exists exactly one~$b$ such that $(a,b)\in R^*$. 
  
%A class of relations on a structure  $\cM$  has the uniformization property if ...

For every $\MSO[<]$ formula $\Psi(X,Y)$, there  is an  $\MSO[<]$ formula  $\Phi(X,Y)$ such that the (binary) relation definable by $\Phi(X,Y)$ on $\om$ is a uniformizer for the (binary) relation definable by $\Psi(X,Y)$ on $\om$ 
 \cite{LS1998,Rab07}. However, over   signals,  the relation ``there is $t>0$ such that $X$ is continuous at $t$, but  $Y$ is discontinuous at $t$''
 is $\MSO[<]$-definable, yet it 
  has no $\MSO[<]$-definable uniformizer. 
  
% Arguments similar to those in the proof of Theorem \ref{th:def-c}  show that it is decidable whether a relation definable by an $\MSO[<]$ formula $\Psi(X,Y)$ on signals,
%  has an $\MSO[<]$-definable uniformizer.
%  
  Arguments analogous to those presented in the proof of Theorem \ref{th:def-c} demonstrate that it is decidable whether a relation defined by an $\MSO[<]$ formula $\Psi(X,Y)$ on signals has an $\MSO[<]$-definable uniformizer.
  
  Let $\Psi(X,Y)$ be an $\MSO[<]$ formula  and let $$\Psi^\dag(X,Y):=\Psi(X,Y)\land \text{$\forall t\in (0,\infty)$ if ${X}$ is continuous at $t$, then so is $ {Y}$}.$$
  Then $\Psi $ has an $\MSO[<]$-definable uniformizer iff $\forall X(\exists Y \Psi \rightarrow \exists Y \Psi^\dag)$ holds in the signal structure.
  For every $\Psi $:  $\Psi^\dag$ has an $\MSO[<]$-definable uniformizer which is computable from~$\Psi$.
 % Moreover,   if $\Phi$ is  an $\MSO[<]$-definable uniformizer for $\Psi$ then  $\Phi$ is a uniformizer for $\Psi^\dag$, and (2) $\Psi^\dag$,
 % $\Phi$ is computable from $\Psi$.
%
\end{rem}

\section{Games and the  Church Problem}\label{sect:ch-rc}
%Our aim is to show that the Church Synthesis problem is computable for FV signals.
%Here, we provide a  proof of decidability of Church's synthesis problem over right continuous signals.
%It contains all the main ideas but it is  slightly simpler than the proof for FV signals.
%In the next section we outline its extension to FV signals.
Our objective is to demonstrate that the Church Synthesis problem is computable for FV signals. In this section, we present a proof of the decidability of Church's synthesis problem for right-continuous signals. This proof encompasses all the main ideas,  but is slightly simpler than the proof for FV signals. In the following section, we outline its extension to FV signals.

Recall that a timed $\om$-word (or timed $\om$-sequence) over an alphabet   $\Sigma$ is an $\om$-sequence $(a_0,\tau_0)(a_1,\tau_1)\cdots (a_i,\tau_i)\cdots $, 
where $a_i\in \Sigma$ and $0=\tau_0<\tau_1<\cdots <\tau_i<\cdots$ is an unbounded $\om$-sequence of reals. 
Such a sequence represents  a right continuous signal $X$ defined as $X(\tau):=a_i$ for $\tau\in[\tau_i,\tau_{i+1})$. Note that every right continuous
signal is represented by a timed sequence.

We start with a reduction  of the Church Synthesis problem to  a game.
Let $\Psi(X,Y)$ be a formula.
Consider the following two-player game $\cG_\Psi$.  During the play,  \I constructs an    input signal (equivalently, timed word) $X$ and \O constructs  an output $Y$.
%\marginpar{Give the name to the game}

Round 0. Set $t_0:=0$.
\begin{enumerate}\item

 \I chooses a value    $a_0\in \Sigma_{in}$  which holds at $t_0$ and for a while after  $t_0$.
 %\item  \O responses  with $b_0\in \Sigma_{out}$.
 %\item
  %        \I chooses $a^d_0$ to hold for a while after $t_0$;
           \item \O responds with a timed $\omega$-word  (equivalently, with a signal) $Y'_0$ on $[t_0,\infty)$.

      Now, \I has two options:

      A) \I accepts. In this case, the play ends. \I has constructed $X$ which has $a_0$ at all $t\geq 0$, and \O  has constructed $Y:=Y'_0$   -  the signal chosen by \Op.

      B)  \I interrupts at some $t_1>t_0$ by choosing
    %  \begin{enumerate}
     % \item
     $a_1\neq a_0$ which holds at $t_1$ and for a while after  $t_1$.  % and going to step 1 of round 1.
      %\item $a^d_1\neq  a^d_0$ and going to step 3 of round 1.
      %\end{enumerate}
      Signal   $X_0$ constructed by \I is $a_0$   in  $[t_0,t_1)$  and is $a_1$ at $t_1$,
       %while in (b) $a^d_0$ at $t_1$ and $a^d_1$ for a while after $t_1$;
        $Y_0$ constructed by \O  is  the restriction of $Y'_0$
        on the interval $[t_0,t_1)$. 
    %Signal   $X'_0$ constructed by \I is $a_0$   in  $[t_0,t_1)$  and is $a_1$ at $t_1$,
%       %while in (b) $a^d_0$ at $t_1$ and $a^d_1$ for a while after $t_1$;
%        $Y'_0$ constructed by \O  is  the restriction of $Y'_0$
%        on the interval $[t_0,t_1)$. 
        %
        Now,  proceed  to step 1 of round 1.

\end{enumerate}
Round $i+1$.
\begin{enumerate}
%\item
%\O chooses $b_{i+1}\in \Sigma_{out}$ to hold at $t_{i+1}$.
%\item
%\I
% chooses $a^d_{i+1}$ to hold for a while after $t_{i+1}$;
\item
  \O chooses  a timed $\omega$-word (equivalently,  a signal) $Y'_{i+1}$ on $[t_{i+1},\infty)$.
\item
  Now, \I has two options:

     % A) \I accepts. In this case, the play ends. The signal $X$ constructed by \I coincides with  $X'_i$ on $[0,t_{i+1})$ and equals  $a_{i+1}$ on $[t_{i+1},\infty)$. The signal $Y$ constructed by \O coincides with $Y'_i$ on $[0,t_{i+1})$ and with $Y'_{i+1}$ on $[t_{i+1},\infty)$.
      A) \I accepts. In this case, the play ends. The signal $X$ constructed by \I coincides with  $X_i$ on $[0,t_{i+1})$ and equals  $a_{i+1}$ on $[t_{i+1},\infty)$. The signal $Y$ constructed by \O coincides with $Y_i$ on $[0,t_{i+1})$ and with $Y'_{i+1}$ on $[t_{i+1},\infty)$.

       B) \I interrupts at some $t_{i+2}>t_{i+1}$ by choosing
     % \begin{enumerate}
       $a_{i+2}\neq a_{i+1}$   for  $X$   at $t_{i+2}$    and for a while after it.
      %  \In this case round $i+2$ starts. % (in this case $X$ is left discontinuous at $t_{i+1}$) or by
     % \item    $a_{t+1}=a^d_i$  and    $a^d_{i+1}\neq  a^d_i$. In this case round $i+2$ starts at step 3 by a choose of \O  (in this case $X$ is right discontinuous at $t_{i+1}$).
     % \end{enumerate}
   %  In this case,  $X'_{i+1}$ is  the  signal that has value $a_{i+2}$ at $t_{i+2}$,  $a_{i+1} $ on $[t_{i+1},t_{i+2})$  and extends the signal $X'_i$ constructed in  the previous steps on $[0,t_{i+1})$,
%                                           and  $Y'_{i+1}$  coincides with $Y'_i$ on  $[0,t_{i+1})$ and  coincides with $Y'_{i+1}$  on $[t_{i+1}, t_{i+2})$.
%                                                                                                                                       Start  round $i+2$. %
  In this case,  $X_{i+1}$ is  the  signal that has the  value $a_{i+2}$ at $t_{i+2}$,  $a_{i+1} $ on $[t_{i+1},t_{i+2})$  and extends the signal $X_i$ constructed in  the previous steps on $[0,t_{i+1})$,
     and  $Y_{i+1}$  coincides with $Y_i$ on  $[0,t_{i+1})$ and  coincides with $Y'_{i+1}$  on $[t_{i+1}, t_{i+2})$.
    Start  round $i+2$. %

\end{enumerate}
    % Let $X'_{i+1}$ be the  signal that has value $a_{i+1} $ on $[t_{i+1},t_{i+2})$  and extend the signal $X'_i$ constructed in  the previous steps on $[0,t_{i+1})$,
%     and let $Y$ be the signal    constructed by  \O.
After $\omega$ rounds, if $\lim t_i=\infty$, we define $X$ (respectively, $Y$) to be the signal that  coincides  with $X_i$ (respectively, with $Y_i)$ on $[t_i,t_{i+1})$ for every $i$.

\emph{Winning conditions.}     \O wins the play iff $\Psi(X,Y)$ holds or \I interrupts infinitely often and
the duration of the play is finite, i.e., $\lim t_n< \infty$, where $t_n$ is the time of $n$-th interruption.

       Note that in this game, player \I has much more power than player \Op. Player \O provides a signal untill infinity and \I observes it and can interrupt it at any  time  moment she chooses.
       Nevertheless, we  have the following lemma.

  \begin{lem}[Reduction to games] \label{lem:red}
    \O wins $\cG_\Psi$  if and only if there is a C-operator that implements $\Psi$.
  \end{lem}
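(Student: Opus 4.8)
\textbf{Proof plan for Lemma \ref{lem:red}.}

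The plan is to prove the two directions separately, and in each direction to translate between a C-operator on signals and a strategy for player \O in $\cG_\Psi$. The key observation that makes the translation work is that, in $\cG_\Psi$, the only information \O has received by the time she must produce $Y'_{i+1}$ on $[t_{i+1},\infty)$ is the finite history $a_0$ on $[t_0,t_1)$, $\dots$, $a_i$ on $[t_i,t_{i+1})$, $a_{i+1}$ at $t_{i+1}$; that is, the input signal restricted to $[0,t_{i+1}]$. A C-operator $F$ is exactly a map whose output on $[0,t]$ depends only on the input on $[0,t]$, so $F$ tells \O precisely what to play. Conversely, a winning strategy for \O, together with \I's power to interrupt arbitrarily, lets us read off the value $F(X)(t)$ for any signal $X$ and any $t$ by having \I play $X$ up to time $t$.

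For the direction ``$F$ implements $\Psi$ $\Rightarrow$ \O wins'', I would let \O follow the strategy: at step 1 of round $i+1$, having seen the input signal $X_{\le t_{i+1}}$ determined by the history, output $Y'_{i+1} := F(\widehat{X})$ restricted to $[t_{i+1},\infty)$, where $\widehat X$ is any signal extending $X_{\le t_{i+1}}$ — by causality the restriction to $[t_{i+1},\infty)$ of $F(\widehat X)$ depends only on what is needed, and in fact we only ever \emph{use} the part of $Y'_{i+1}$ on $[t_{i+1},t_{i+2})$ (or all of it, if \I accepts), which by causality of $F$ equals $F(X)$ there, where $X$ is the signal eventually constructed. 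One must check the two cases for the outcome of a play consistent with this strategy: if \I interrupts only finitely often (or accepts), then the constructed $X$ is an honest FV/RC signal with unbounded sampling points, $Y = F(X)$, hence $\Psi(X,Y)$ holds and \O wins; if \I interrupts infinitely often but $\lim t_n < \infty$, \O wins by the second disjunct of the winning condition; the case $\lim t_n = \infty$ with infinitely many interruptions again yields $X$ a genuine signal and $Y = F(X)$, so $\Psi$ holds. A small point to verify is that $F(X)$ restricted to each $[t_i, t_{i+1})$ really is recoverable from the finite information available at round $i$, which is exactly causality.

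For the converse, ``\O wins $\Rightarrow$ there is a C-operator implementing $\Psi$'', I would fix a winning strategy $\sigma$ for \O and define $F$ as follows: given an input signal $X$ and a time $t$, consider the play in which \I plays $X$ faithfully and then \emph{accepts} at some round after $t$ — more precisely, \I interrupts at a sampling sequence of $X$ cofinally, which forces the play to have $\lim t_n = \infty$ unless \I accepts. Then the winning condition's second disjunct cannot save \O, so $\Psi(X,Y)$ must hold, where $Y$ is \O's response assembled from $\sigma$; set $F(X) := Y$. Causality of $F$ is immediate because \O's move in round $i+1$ depends only on the history up to $t_{i+1}$, and by letting \I's interruption times form a sampling sequence of $X$ tending to infinity we ensure the value $F(X)(t)$ is pinned down once \I has revealed $X$ on $[0,t]$; well-definedness (independence of the chosen sampling/interruption sequence) follows because any two such plays agree on the honest part and \O's strategy is a function of the history. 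Then $F$ is a C-operator and $\Sig \models \forall X.\ \Psi(X, F(X))$.

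The main obstacle I anticipate is the bookkeeping around the second winning condition (``\I interrupts infinitely often and $\lim t_n < \infty$''): one must make sure that in the $F \Rightarrow$ \O direction this clause is genuinely helpful to \O (a play of finite duration with infinitely many interruptions does not constrain $\Psi$ at all, so it had better count as a win), and in the \O $\Rightarrow F$ direction that \I can always \emph{avoid} triggering this clause while still revealing an arbitrary signal $X$ — this is why \I is instructed to interrupt along a genuine, unbounded sampling sequence of $X$, so that $\lim t_n = \infty$ and the clause is vacuous, forcing $\Psi(X,Y)$. The subtlety is entirely in choosing \I's behaviour so that the constructed $X$ is exactly the prescribed signal and the play has infinite duration; once that is arranged, the rest is the routine causality argument sketched above.
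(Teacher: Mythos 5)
Your proposal is correct and follows essentially the same route as the paper: \O's strategy is obtained by applying the C-operator $F$ to (an arbitrary, e.g.\ constant, extension of) the input history revealed so far, and conversely $F$ is read off from a winning strategy by letting \I honestly play an arbitrary signal $X$, with the observation that \I's interruption points are then exactly the jump points of $X$ and hence unbounded, so the finite-duration clause is never triggered and $\Psi(X,Y)$ must hold. The paper's own proof is just a terser version of this same argument.
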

   \begin{exa} Assume that $\Psi$ expresses that $Y$ has a discontinuity at some  $t>0$ where $X$ is continuous.
  Let us describe a winning strategy for \Op. At the  $i$-th move \O chooses a signal which has value $ X(t_i) $ at $[t_i,t_i+\frac{1}{2^i})$ and
   $\neg X(t_i)$ at $t\geq t_i+\frac{1}{2^i}$.
  It is clear that this is a winning strategy for \Op. Indeed, in order to ensure that $\Psi$ fails,
 at the $i$-th round, \I should interrupt at $t_{i+1}$ before $t_{i}+\frac{1}{2^{i} }$.  Therefore, the duration of such a play is less than $\sum 2^{-i}=2<\infty$. Hence, \O wins.
  \end{exa}

 % \begin{exam} Assume that $\Psi$ expresses that $Y$ has a discontinuity at some  $t>0$ where $X$ is continuous.
%  Let us describe a winning strategy for \Op. At $i$-th move it chooses a signal which has 1 at the point  $t_i+1/2^i$ and $0$ at the other points.
%  It is clear that this is a winning strategy for \Op. Indeed, in order to ensure that $\Psi$ fails,
% at the $i$-th move, \I should interrupt at $t_i$ before $t_{i-1}+1/2^{i-1}$. Therefore, the duration of such play $<\sum 2^{-i}=1<\infty$.
%  \end{exam}
 \begin{proof}[Proof of Lemma \ref{lem:red}]\hfill
 
  $\Rightarrow$-direction. Let $\sigma$ be an \Op-winning strategy.
 Consider the play where \I ignores \Op's moves and constructs an input signal $X$.  Let $Y$ be the signal   constructed by \O in  this play, This defines a $C$-operator, and
 $\Psi(X,Y) $ holds.

$\Leftarrow$-direction.  Assume that $F$ is a $C$-operator that implements $\Psi$. Define a winning strategy for \O in an obvious way.

 Round 0.
   If  \I chooses $a_0$, then \O replies by $F(X_0)$ the   output produced to the input signal which is   $a_0$ everywhere. If
 \I interrupts at $t_1>t_0$ and chooses $a_1\neq a_0$, then \O replies by $Y'_1$, where  $Y'_1$ is the restriction of  $Y_1=F(X_1) $ on $[t_1, \infty)$, where
  $X_1$ is $a_0$ on $[0,t_1)$ and $a_1$ on $[t_1, \infty)$. %is the output of $F$ on the signal which is $a_0$ at $t:=0$
 %and $a^d_0$ at every $t>0$.

 Continue in a similar way for the other rounds.  This strategy is winning for \Op. \end{proof}
\subsection{An automaton Game}
Let $\cA$ be a deterministic parity  automaton equivalent to $\Psi^D$. We  modify the  game $\cG_\psi$  as follows. Player \O chooses signals (equivalently, timed words) over the set  $Q:=Q_\cA$ of  states of $\cA$, instead of timed words over $\Sigma_{out}$. Of course, the
timed words  should be consistent with the  behavior of $\cA$.
%\textbf{Explain consistent}
A play is winning for \Op,  if the constructed  $\om$-word satisfies the  acceptance conditions of  $\cA$.
Alternatively, we can pretend that \O chooses  timed words over $\Sigma_{out}$. However, we are interested in the corresponding timed
words  over $Q$.

Here is a modified game $\cG_\cA$ on the automaton $\cA$ equivalent to $\Psi^D$.

\fbox{
\begin{minipage}{0.9\linewidth}

\begin{center}
 $\cG_\cA$ Game
\end{center}
\begin{description}
\item[Round 0]
 Set $t_0:=0$ and $q_0:=q_{\mathit{init}}$.
\begin{enumerate}\item

 \I chooses a value    $a_0\in \Sigma_{in}$  which holds at $t_0$ and for a while after  $t_0$.
 %\item  \O responses  with $b_0\in \Sigma_{out}$.
 %\item
  %        \I chooses $a^d_0$ to hold for a while after $t_0$;
           \item               %        \O responds with a timed $\omega$-word  (equivalently, with a signal $Y_0$ on $[t_0,\infty)$).

            \O responds with a  timed $\omega$-word (equivalently, with a signal) $Y'_0=(q_1,\tau_1), \cdots  (q_k,\tau_k)\cdots $ over $Q$ such that $\tau_1>t_0$ and
$q_{j+1} =\delta_\cA(q_j,(a_0,b_j))$ for some $b_j\in \Sigma_{out}$ and all $j\geq 0$.   % where
%  $q_0:=q_{\mathit{init}}$.

      Now, \I has two options:

      A) \I accepts. In this case, the play ends. %\I has constructed $X$ which has $a_0$ at all $t\geq 0$, and 
      \O  has constructed $Y:=Y'_0$. %   -  the signal chosen by \Op.

      B)  \I interrupts at some $t_1>t_0$ by choosing
    %  \begin{enumerate}
     % \item
     $a_1\neq a_0$ which holds at $t_1$ and for a while after  $t_1$.  % and going to step 1 of round 1.
      %\item $a^d_1\neq  a^d_0$ and going to step 3 of round 1.
      %\end{enumerate}
   % Signal   $X'_0$ constructed by \I is $a_0$   in  $[t_0,t_1)$  and is $a_1$ at $t_1$,
       %while in (b) $a^d_0$ at $t_1$ and $a^d_1$ for a while after $t_1$;
     %   $Y'_0$ constructed by \O  is  the restriction of $Y_0$
      %  on the interval $[t_0,t_1)$.
        %
        % Set $q_1:=Y_0(t_1)$
        %  and
             Proceed  to step 1 of round 1.
\end{enumerate}

%
%Replace in  step 2 ``\O responds \dots'' by  ``\O responds with a  timed $\omega$-sequence $Y_0=(q_1,\tau_1), \cdots  (q_k,\tau_k)\cdots $ over $Q$ such that $\tau_1>t_0$ and
%$q_{i+1} =\delta_\cA(q_i,(a_0,b_i))$ for some $b_i\in \Sigma_{out}$ and all $i\geq 0$,'' where
%  $q_0:=q_{\mathit{init}}$.

\item[Round $i+1$]   Set $q_0:=Y'_i(t_{i+1})$.
 \begin{enumerate}
%\item  
%\O chooses $b_{i+1}\in \Sigma_{out}$ to hold at $t_{i+1}$.
%\item
%\I
% chooses $a^d_{i+1}$ to hold for a while after $t_{i+1}$;
\item
  %\O chooses  a timed $\omega$-word (equivalently,  a signal) $Y'_{i+1}$ on $[t_{i+1},\infty)$.
  \O chooses  a timed $\omega$-sequence     $Y'_{i+1}:=(q_1,\tau_1), \cdots  (q_k,\tau_k) \cdots $ over $Q$ such that $\tau_1>t_{i+1} $
  and  $q_{k+1} =\delta_\cA(q_k,(a_{i+1},b_k))$ for some $b_k\in \Sigma_{out}$ and all $k\geq 0$. % where $q_0$ is the last state of $Y'_i$.''
\item
  Now, \I has two options.

      A) \I accepts. In this case, the play ends.
       %The signal $X$ constructed by \I coincides with  $X'_i$ on $[0,t_{i+1})$ and equals  $a_{i+1}$ on $[t_{i+1},\infty)$. 
       The signal $Y$ constructed by \O coincides with $Y'_j$ on $[t_j,t_{j+1})$ for $j\leq i$ and with $Y'_{i+1}$ on $[t_{i+1},\infty)$.

       B) \I interrupts at some $t_{i+2}>t_{i+1}$ by choosing
     % \begin{enumerate}
       $a_{i+2}\neq a_{i+1}$   for  $X$   at $t_{i+2}$    and for a while after it.
      %  \In this case round $i+2$ starts. % (in this case $X$ is left discontinuous at $t_{i+1}$) or by
     % \item    $a_{t+1}=a^d_i$  and    $a^d_{i+1}\neq  a^d_i$. In this case round $i+2$ starts at step 3 by a choose of \O  (in this case $X$ is right discontinuous at $t_{i+1}$).
     % \end{enumerate}
     In this case, 
      %$X'_{i+1}$ is  the  signal that has value $a_{i+2}$ at $t_{i+2}$,  $a_{i+1} $ on $[t_{i+1},t_{i+2})$  and extends the signal $X'_i$ constructed in  the previous steps on $[0,t_{i+1})$,
    % and  $Y'_{i+1}$  coincides with $Y'_i$ on  $[0,t_{i+1})$ and  coincides with $Y'_{i+1}$  on $[t_{i+1}, t_{i+2})$.
    start  round $i+2$. %

\end{enumerate}

%If play lasted $\om$ rounds set $Y:=Y_i$ for $t\in [t_i,t_{i+1})$.

%
% In 
%  Step 1, replace ``\O chooses \dots''  by  ``\O chooses  a timed $\omega$-sequence     $Y'_{i+1}:=(q_1,\tau_1), \cdots  (q_k,\tau_k) \cdots $ over $Q$ such that $\tau_1>t_{i+1} $
%  and  $q_{k+1} =\delta_\cA(q_k,(a_{i+1},b_k))$ for some $b_k\in \Sigma_{out}$ and all $k\geq 0$, where $q_0$ is the last state of $Y'_i$.''
 \item[Winning Condition] 
 If the play lasted $\om$ rounds, set $Y:=Y'_i$ for $t\in [t_i,t_{i+1})$.  For plays of  finite length  $Y$ was defined by option  A). 
 \O wins
  if either the $\om$-sequence of states of $Y$ produced by \O is accepted by $\cA$, or there were $\om$ rounds, but $\lim t_i$ is finite.
 \end{description}
\end{minipage}
}

\bigskip
The following is immediate.

\begin{lem}[$\cG_\Psi$ and $\cG_\cA$ are equivalent]\label{lem:eq-aut-game}
  \O wins in $\cG_\Psi$ if and only if \O wins in $\cG_\cA$.
\end{lem}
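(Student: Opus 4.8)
The plan is to exhibit an explicit, outcome-preserving correspondence between plays of $\cG_\Psi$ and plays of $\cG_\cA$, exploiting that $\cA$ is \emph{deterministic} and complete and that, since $\cA$ is equivalent to $\Psi^D$, its $\om$-language $L(\cA)$ consists exactly of the $\om$-string representations of pairs $(X,Y)$ of signals with $(X,Y)\models\Psi$ --- this last fact combining the defining property of $\Psi^D$ (Theorem~\ref{thm:LangToSig}) with speed-independence of the signal language $L_{\Psi}$ defined by $\Psi$. In both games the sequence of \Ip-moves is the same kind of object: an initial letter $a_0\in\Sigma_{in}$, then either an ``accept'' or an unbounded increasing list of interruption times $t_1<t_2<\cdots$ carrying letters $a_1,a_2,\ldots\in\Sigma_{in}$; and \Ip\ never reacts to how \Op\ encodes her output. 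Hence it suffices to translate \Op's responses in the two directions.

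For the direction from $\cG_\Psi$ to $\cG_\cA$: in a play of $\cG_\Psi$ let $X$ be the input signal determined by \Ip's choices and interruptions and $Y$ the output signal played by \Op, and fix a sampling sequence $\widetilde{\tau}$ for $(X,Y)$ containing all the interruption times. Replace, round by round, \Op's output signal over $\Sigma_{out}$ on the current interval by the timed word over $Q$ recording the run of $\cA$ on the corresponding portion of $D((X,Y),\widetilde{\tau})$, the run being continued across interruptions. Since $\cA$ is deterministic and complete this is well defined, and the state carried into round $i+1$ is exactly the state $\cA$ reaches after the prefix ending at $t_{i+1}$, which is precisely the rule $q_0:=Y'_i(t_{i+1})$ in $\cG_\cA$. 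In the other direction, from a play of $\cG_\cA$ one recovers an output signal over $\Sigma_{out}$ by reading the witnessing letters $b_j\in\Sigma_{out}$ along \Op's transitions (and collapsing stuttering). The two maps are mutually inverse up to stuttering, which is harmless: $L(\cA)=L(\Psi^D)=D(L_{\Psi})$ is the set of \emph{all} $\om$-string representations of signals in the speed-independent language $L_{\Psi}$, hence closed under stuttering and under time reparametrization, so inserting extra sampling points or relabelling the same signal affects neither the represented pair $(X,Y)$ nor membership of the produced $\om$-word in $L(\cA)$.

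It remains to check that the correspondence preserves the winning condition. The clause ``\Ip\ interrupts infinitely often and $\lim t_n<\infty$'' is phrased purely in terms of \Ip's moves and interruption times, which are literally the same objects in the two games, so it transfers verbatim; in that case the play has finite duration and both players' signals are irrelevant. Otherwise $(X,Y)$ is a genuine signal, and \Op\ wins the $\cG_\Psi$-play iff $\Psi(X,Y)$ holds, while she wins the corresponding $\cG_\cA$-play iff the $\om$-sequence of states she produced satisfies the acceptance condition of $\cA$; but that sequence is exactly the run of $\cA$ on $D((X,Y),\widetilde{\tau})$, so it is accepting iff $D((X,Y),\widetilde{\tau})\in L(\cA)=L(\Psi^D)$, iff $(X,Y)\models\Psi$. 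Thus \Op\ wins one play iff she wins the translated play, and an \Op-winning strategy in either game yields one in the other. The only point requiring care is the sampling-sequence bookkeeping of the previous paragraph --- that the timed words over $Q$ that \Op\ may legally produce in $\cG_\cA$ are exactly the runs of $\cA$ on the legitimate $\om$-string representations of $(X,Y)$, and that acceptance is independent of which representation is used --- and both follow at once from Theorem~\ref{thm:LangToSig} and the closure properties of $D(L_{\Psi})$. Everything else is a mechanical unfolding of the two game definitions, which is why the statement is essentially immediate.
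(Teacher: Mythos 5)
Your proposal is correct, and it follows exactly the route the paper has in mind: the paper offers no proof at all for this lemma (it is declared ``immediate''), and your argument is precisely the natural round-by-round translation --- using determinism of $\cA$ to pass from output letters to state sequences and back, and $L(\cA)=L(\Psi^D)=D(L_\Psi)$ together with stuttering/reparametrization invariance to match the winning conditions --- that justifies that claim. The sampling-sequence bookkeeping you single out is indeed the only point needing care, and you handle it correctly.
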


%\marginpar{Automata - replace $\Sigma_{out}$ moves by $Q$ moves. Modified game should be moved to here and its equivalence to the original game stated here}
 Now, we are going to restrict \O moves.
The first restriction is a time schedule.
When  \O chooses a timed sequence   $(q_0,\tau_0) (q_1,\tau_1)\cdots (q_i,\tau_i)\cdots$    we require that $\tau_{i+1}-\tau_i=\tau_{j+1}-\tau_j$  for all $i,j$. % _the distance between consecutive jumps in this signal should  be the same.
We call such a sequence (signal) \emph{simple timed}.
Another restriction is on the sequence of values.
We require that this sequence is ultimately periodic.

In the next subsection we describe the  ultimately periodic sequences that we need.

\subsection{Ultimately Periodic Sequences}\label{sub:ult-periodic}
\begin{defi}[Ultimately periodic strings]\label{def:up}
    An $\omega$-string $w$  is ultimately periodic if $w=uv^\om$ for some (finite) strings $u$ and $v$; $v$ (respectively, $u$)  is the periodic (respectively, pre-periodic or lag)  part of $w$; the length of $v$ (respectively, of  $u$) is period (respectively, lag) of $w$.
% A finite string is periodic with a lag $d$ and a period $p$ if it is a prefix of an ultimately periodic $\om$-string with
%lag $d$ and period $p$.
\end{defi}
We denote by $u[n]$ the $n$-th letter of a string $u$ and by $ u[k,m)$ the set of letters $\{ u[i]\mid k\leq i<m)\}$.
\begin{defi}
[$\equiv_{state}$-equivalence on finite strings]%\label{def:eq}  % Unnused multiply defined label
Let $\cA$ be an automaton over a set  $Q$ of states  and let $u,v$ be  finite  strings over $Q$.
% Let $\Sigma$ be an alphabet.
We say that $u$ is equivalent to $v$  (notation $u\equiv_{state}v$) if and only if
\begin{enumerate}
\item % $u$ and $v$ have the same first state and the same last state. %first state of $u$Inf(u)=Inf(v), i.e.,   a state  appear infinitely often in $u$ iff it appears infinitely often in $v$.
    The first state of $u$ is the same as the first state of $v$, and the  last state of $u$ is the same as the last state of $v$.
\item For every $m$ there is $n$ such that $u[m]=v[n]$  and the states that appear in $ u[0,m)$ are the same as the states that appear in $v[0,n)$.   %and symmetrically
\item Symmetrically, for every $n$ there is $m$ such that $u[m]=v[n]$  and the states that appear in $ u[0,m)$ are the same as the states that appear in $v[0,n)$.
    \item    $u$  is run of $\cA$  if and only  if   $v$ is a run of $\cA$.
    \end{enumerate}
%\marginpar{Maybe simplify?}
%
%Finite strings over $Q$ are equivalent if they contains the same letters,  their first letters are the same
%and their last letters are the same.
 \end{defi}
For the sake of formality,  we have to use notations $u\equiv_{state}^{\cA}v$
 %$u\equiv_{state}^{Q,\Sigma}v$
  to emphasize that this equivalence relation is parameterized by  $\cA$.  % (finite) sets $Q$ and $\Sigma$.
   %finite sets $Q$ and $\Sigma$.
    However, $\cA$  %$Q$ and $\Sigma$
   will be clear from the context or unimportant and we will use
 $\equiv_{state}$  instead of  $\equiv_{state}^{\cA}$.
% $\equiv_{state}^{Q}$.
 
 The next lemma summarizes properties of $\equiv_{state}$-equivalence:  % and will be proved in the Appendix.
 \begin{lem}[Properties of $\equiv_{state}$]\label{lem:prop-equiv} For every $\cA$: % and $\Sigma$:
 \begin{enumerate}
   \item $\equiv_{state}$ has finitely many equivalence classes.
   \item $\equiv_{state}$ is a congruence with respect to concatenation, i.e., if  $u_1\equiv_{state}v_1$ and  $u_2\equiv_{state}v_2$, then
   $u_1u_2\equiv_{state}v_1v_2$.
   \item For every $\om$-string $\pi$   there are $u_0,\dots, u_n,\dots$ such that
   $\pi=u_0\cdots u_n\cdots$ and $u_i\equiv_{state} u_j \equiv_{state} u_iu_{i+1}$ for all $i,j>0$ and
   $u_0\equiv_{state} u_0u_1$.

   \end{enumerate}
 \end{lem}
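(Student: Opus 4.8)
The plan is to handle the three clauses in turn, the key being to first isolate a \emph{complete invariant} of $\equiv_{state}$. For a finite string $u$ over $Q$ and a position $m<|u|$ call the pair $\big(u[0,m),\,u[m]\big)\in 2^{Q}\times Q$ the \emph{profile of $u$ at $m$} (so $u[0,m)$ is the set of states seen strictly before $m$, and $u[m]$ the current state). Reading conditions 1--4 of the definition, $u\equiv_{state}v$ holds exactly when $u$ and $v$ have the same first state, the same last state, the same \emph{set} of profiles — this is precisely what conditions 2 and 3 together assert, since they constrain \emph{which} profiles occur and not the order — and agree on the single bit ``is a run of $\cA$''. Hence the $\equiv_{state}$-class of $u$ is determined by the tuple $\big(\text{first}(u),\,\text{last}(u),\,\{\text{profiles of }u\},\,[u\text{ is a run}]\big)$, and there are at most $|Q|^{2}\cdot 2^{|Q|\cdot 2^{|Q|}}\cdot 2$ such tuples; this proves Clause 1.

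For Clause 2, assume $u_1\equiv_{state}v_1$ and $u_2\equiv_{state}v_2$ and compare the invariants of $u_1u_2$ and $v_1v_2$. First and last states are immediate. The set $\mathrm{st}(w):=w[0,|w|)$ of states occurring in $w$ is itself $\equiv_{state}$-invariant, being $\bigcup\{S\cup\{q\}\mid (S,q)\text{ a profile of }w\}$, so $\mathrm{st}(u_1)=\mathrm{st}(v_1)$. Now every profile of $u_1u_2$ is either a profile of $u_1$ — then also of $v_1$, hence of $v_1v_2$ — or, at a position inside the second block, of the form $\big(\mathrm{st}(u_1)\cup S,\,q\big)$ with $(S,q)$ a profile of $u_2$; since $(S,q)$ is then a profile of $v_2$ and $\mathrm{st}(u_1)=\mathrm{st}(v_1)$, it is a profile of $v_1v_2$. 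The reverse inclusion is symmetric, so the profile sets agree. Finally, ``$w$ is a run'' is the local condition that every adjacent pair $w[k],w[k{+}1]$ is a transition of $\cA$, so $w_1w_2$ is a run iff $w_1$ and $w_2$ are runs and some transition of $\cA$ joins $\text{last}(w_1)$ to $\text{first}(w_2)$ — all $\equiv_{state}$-invariant data. This gives the congruence.

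For Clause 3, by Clauses 1--2 the set of $\equiv_{state}$-classes of finite strings over $Q$, equipped with concatenation, is a finite semigroup; write $[w]$ for the class of $w$. Colour each pair $i<j$ of naturals by the class $\big[\,\pi[i]\pi[i{+}1]\cdots\pi[j{-}1]\,\big]$; by the infinite Ramsey theorem there is an infinite set $i_1<i_2<i_3<\cdots$ on which this colour is a constant $e$, and $e$ is idempotent, since $\pi[i_1]\cdots\pi[i_2{-}1]$ and $\pi[i_2]\cdots\pi[i_3{-}1]$ concatenate to $\pi[i_1]\cdots\pi[i_3{-}1]$, all three of class $e$. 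Set $w_0:=\pi[0]\cdots\pi[i_1{-}1]$ and $w_k:=\pi[i_k]\cdots\pi[i_{k+1}{-}1]$ for $k\ge1$; then $\pi=w_0w_1w_2\cdots$ and $[w_k]=e$ for all $k\ge1$. The only missing clause is $w_0\equiv_{state}w_0w_1$, i.e.\ $[w_0]=[w_0]\,e$, which need not hold, so I merge the lag block into the first periodic block: put $u_0:=w_0w_1$ and $u_k:=w_{k+1}$ for $k\ge1$. Then $\pi=u_0u_1u_2\cdots$; for $i,j\ge1$ we have $[u_i]=e=[u_j]$ and $[u_iu_{i+1}]=ee=e$, so $u_i\equiv_{state}u_j\equiv_{state}u_iu_{i+1}$; and $[u_0u_1]=[w_0]\,e\,e=[w_0]\,e=[w_0w_1]=[u_0]$, so $u_0\equiv_{state}u_0u_1$, as required.

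The only genuinely substantive step is the identification of the invariant in the first paragraph: the definition of $\equiv_{state}$ looks as though it might compare whole sequences of configurations, whereas conditions 2--3 compare only the \emph{sets} of configurations — which is exactly why the relation has finite index and why the profile bookkeeping in Clause 2 closes up. After that, Clause 3 is the standard ``finite semigroup plus Ramsey'' factorization of an $\om$-word, the only twist being the harmless shift of indices that lets $u_0$ absorb the idempotent $e$ on the right.
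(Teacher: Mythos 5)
Your proof is correct, and for clause (3) it follows the same basic route as the paper: colour each pair $i<j$ by the $\equiv_{state}$-class of $\pi[i]\cdots\pi[j-1]$ and apply Ramsey. The one genuine difference is how the prefix condition $u_0\equiv_{state}u_0u_1$ is secured. The paper invokes the \emph{additive} Ramsey theorem in Shelah's strengthened form, whose ``moreover'' clause already guarantees that $f(0,j)$ is constant on the homogeneous set $J$, so that taking $u_0$ to be the prefix up to the first element of $J$ works directly. You instead use plain infinite Ramsey to get a homogeneous idempotent class $e$ and then absorb the first periodic block into the lag ($u_0:=w_0w_1$), so that $[u_0u_1]=[w_0]ee=[w_0]e=[u_0]$; this buys you independence from the strengthened statement at the cost of a (harmless) reindexing, and is the standard ``finite semigroup plus Ramsey'' factorization. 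You also spell out clauses (1) and (2), which the paper dismisses as immediate, by exhibiting the complete invariant (first state, last state, set of profiles $\bigl(u[0,m),u[m]\bigr)$, run bit); this is a correct reading of the definition --- conditions 2 and 3 assert exactly equality of profile sets --- and your bookkeeping for how profiles of $u_1u_2$ decompose is what makes the congruence argument close up cleanly.
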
\begin{proof}
            (1) and (2) are immediate. (3)  will be proved in the Appendix.
          \end{proof}

\begin{defi}[Equivalence on $\om$-strings]
%\label{def:eq} % Unnused multiply defined label
Let $\cA$ be an automaton over a set $Q$ of states,  and let $u,v$ be  $\omega$-strings over $Q$. %Let $\Sigma$ be an alphabet.
We say that $u$ is equivalent to $v$ iff
\begin{enumerate}
\item $\Inf(u)=\Inf(v)$, i.e.,   a state  appears infinitely often in $u$ iff it appears infinitely often in $v$.
\item for every $m$ there is $n$ such that $u[m]=v[n]$  and the states that appear in $ u[0,m)$ are the same as the states that appear in $v[0,n)$,   %and symmetrically
\item Symmetrically, for every $n$ there is $m$ such that $u[m]=v[n]$  and the states that appear in $ u[0,m)$ are the same as the states that appear in $v[0,n)$.
     \item $u$ is a run of  $\cA$,  %over  $\Sigma$
    if and only  if $v$ is a run of $\cA$.
\end{enumerate}
%\marginpar{Maybe simplify?}
%%
%Finite strings over $Q$ are equivalent if they contains the same letters,  their first letters are the same
%and their last letters are the same.
\end{defi}
The next lemma is immediate. It deals with  the  concatenations of $\om$-sequences of finite strings.
\begin{lem}\label{lem:-m-concat}
  if $u_i\equiv_{state}v_i$ for $i\in \om$, then $ u_0u_1\dots $ is equivalent to $ v_0v_1\dots $
\end{lem}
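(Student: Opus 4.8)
\textbf{Proof proposal for Lemma~\ref{lem:-m-concat}.}

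The plan is to verify each of the four defining conditions of equivalence on $\om$-strings for $u:=u_0u_1\dots$ and $v:=v_0v_1\dots$, using only the corresponding properties packaged into $\equiv_{state}$ for the finite blocks. First I would fix, for each $i$, the identification of positions: the block $u_i$ occupies a finite interval of positions in $u$, say $[p_i,p_{i+1})$, and similarly $v_i$ occupies $[q_i,q_{i+1})$ in $v$; since the $u_i$ and $v_i$ are nonempty finite strings, $p_i\to\infty$ and $q_i\to\infty$.

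For condition (1), $\Inf(u)=\Inf(v)$: a state $s$ lies in $\Inf(u)$ iff it occurs in infinitely many blocks $u_i$, or occurs (at least once) inside some single block infinitely often — but blocks are finite, so the only way is: $s$ occurs in $u_i$ for infinitely many $i$. Since $u_i\equiv_{state}v_i$, clauses (2)–(3) of $\equiv_{state}$ (or simply clause on first/last states together with (2)) guarantee $u_i$ and $v_i$ have exactly the same set of states occurring in them; hence $s$ occurs in infinitely many $u_i$ iff $s$ occurs in infinitely many $v_i$, i.e.\ $s\in\Inf(u)\iff s\in\Inf(v)$. For conditions (2) and (3), which are symmetric, I would take $m$ and locate the block: $m\in[p_i,p_{i+1})$ for a unique $i$, so $u[m]=u_i[m-p_i]$. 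The set of states appearing in $u[0,m)$ is the union of the states in the full blocks $u_0,\dots,u_{i-1}$ together with the states in the prefix $u_i[0,m-p_i)$. Apply clause (2) of $\equiv_{state}$ to the pair $u_i\equiv_{state}v_i$: there is an index $n'$ with $u_i[m-p_i]=v_i[n']$ and the states in $u_i[0,m-p_i)$ equal to the states in $v_i[0,n')$; set $n:=q_i+n'$. Then $v[n]=v_i[n']=u[m]$, and the states in $v[0,n)$ are the states in $v_0,\dots,v_{i-1}$ together with the states in $v_i[0,n')$; using that each $u_j$ and $v_j$ ($j<i$) have the same state set, and the prefix equality just obtained, this coincides with the set of states in $u[0,m)$. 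Condition (3) is identical with the roles of $u$ and $v$ exchanged.

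For condition (4) — $u$ is a run of $\cA$ iff $v$ is — note that an $\om$-string over $Q$ is a run iff consecutive states are related by $\delta_\cA$ for some input letter; this is a local condition. Within each block, $u_i\equiv_{state}v_i$ already records (clause (4)) that $u_i$ is a run-segment iff $v_i$ is; at the seams, the last state of $u_i$ equals the last state of $v_i$ and the first state of $u_{i+1}$ equals the first state of $v_{i+1}$ (clause (1) of $\equiv_{state}$), so the transition across the seam in $u$ is exactly the transition across the seam in $v$. Hence $u$ satisfies the run condition at every position iff $v$ does.

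The lemma is genuinely routine — as the statement in the excerpt already flags (``The next lemma is immediate'') — so there is no substantive obstacle; the only point requiring a little care is the bookkeeping in conditions (2)–(3), namely that the ``states appearing in a prefix'' decomposes cleanly into finished blocks plus one partial block, and that the finished blocks contribute the same state sets on both sides.
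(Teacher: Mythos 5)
Your proof is correct: the paper gives no argument for this lemma (it is declared ``immediate''), and your clause-by-clause verification — same state sets per block via clauses (1)–(3) of $\equiv_{state}$, the prefix/partial-block bookkeeping for clauses (2)–(3), and locality of the run condition at block seams via matching first/last states — is exactly the routine check the paper intends the reader to supply.
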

%\marginpar{redefine UP}
From  Lemma \ref{lem:-m-concat} and Lemma \ref{lem:prop-equiv} we obtain:
\begin{lem}\label{lem:up}
  For every automaton $\cA$ over a set $Q$ of states  there is a finite set of ultimately periodic strings $UP(Q)$ (=$UP_\cA(Q)$ computable from $\cA$) 
   %such that for every automaton $\cA$ over the alphabet $\Sigma$ and the set of states  $Q$
  such that for  every $\om$-word  $\pi$ over  $Q$ there is $\rho=uv^\omega\in UP(Q)$ which is  equivalent to $\pi$,  and
  $v\equiv_{state} vv$  and
   $u\equiv_{state} uv$. Moreover, there are $u_0, u_1 ,\cdots ,u_i,\cdots$ such that
   $\pi=u_0 u_1 \cdots u_i\cdots$  and $u_0\equiv_{state} u$ and $u_i\equiv_{state} v$ for every  $i>0$.
  \end{lem}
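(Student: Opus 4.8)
The plan is to combine the Ramsey-type factorization of Lemma~\ref{lem:prop-equiv}(3) with the finiteness of $\equiv_{state}$ (Lemma~\ref{lem:prop-equiv}(1)) and the concatenation lemma (Lemma~\ref{lem:-m-concat}). First I would fix, for each of the finitely many $\equiv_{state}$-classes $C$, a canonical nonempty finite representative $r_C$ over $Q$; since $\equiv_{state}$ is decidable from $\cA$, these representatives can be computed. Using Lemma~\ref{lem:prop-equiv}(2), concatenation descends to classes, so write $C\cdot D:=[uv]$ for $u\in C$, $v\in D$. Then I would set
$$UP(Q):=\{\, r_{C_0}\, r_{C_1}^{\,\omega}\ :\ C_0,\,C_1\text{ are }\equiv_{state}\text{-classes},\ C_1\cdot C_1=C_1,\ C_0\cdot C_1=C_0\,\}.$$
This set is finite (at most the square of the number of classes) and computable from $\cA$. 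By the congruence property, $C_1\cdot C_1=C_1$ gives $r_{C_1}\equiv_{state}r_{C_1}r_{C_1}$ and $C_0\cdot C_1=C_0$ gives $r_{C_0}\equiv_{state}r_{C_0}r_{C_1}$; so every element $uv^\omega=r_{C_0}r_{C_1}^{\,\omega}$ of $UP(Q)$ already satisfies $v\equiv_{state}vv$ and $u\equiv_{state}uv$.

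Next, given an $\om$-word $\pi$ over $Q$, I apply Lemma~\ref{lem:prop-equiv}(3) to obtain nonempty finite factors $u_0,u_1,\dots$ with $\pi=u_0u_1\cdots$, $u_i\equiv_{state}u_j\equiv_{state}u_iu_{i+1}$ for $i,j>0$, and $u_0\equiv_{state}u_0u_1$. Put $C_0:=[u_0]$ and $C_1:=[u_1]$. Then $C_1\cdot C_1=[u_1u_1]=[u_1u_2]=[u_1]=C_1$ (using $u_2\equiv_{state}u_1$ and $u_1u_2\equiv_{state}u_1$), and $C_0\cdot C_1=[u_0u_1]=[u_0]=C_0$; hence $\rho:=r_{C_0}r_{C_1}^{\,\omega}\in UP(Q)$. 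To see $\rho$ is equivalent to $\pi$, apply Lemma~\ref{lem:-m-concat} with $v_0:=r_{C_0}$ and $v_i:=r_{C_1}$ for $i\ge 1$: indeed $u_0\equiv_{state}r_{C_0}=v_0$, and $u_i\equiv_{state}u_1\equiv_{state}r_{C_1}=v_i$ for every $i\ge 1$, so $\pi=u_0u_1u_2\cdots$ is equivalent to $v_0v_1v_2\cdots=r_{C_0}r_{C_1}^{\,\omega}=\rho$. Finally, the factors $u_0,u_1,\dots$ themselves witness the ``Moreover'' clause, since $u_0\equiv_{state}r_{C_0}=u$ and $u_i\equiv_{state}u_1\equiv_{state}r_{C_1}=v$ for all $i>0$.

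The argument is essentially routine once these pieces are in place; the only point that needs care is checking that the finitely many pairs of classes satisfying the two algebraic identities $C_1\cdot C_1=C_1$ and $C_0\cdot C_1=C_0$ already suffice, which is exactly what the factorization in Lemma~\ref{lem:prop-equiv}(3) delivers, and the observation that passing to canonical representatives keeps $UP(Q)$ both finite and effectively computable. The mild subtlety — that the chosen representatives be nonempty so that $r_{C_1}^{\,\omega}$ is a genuine $\om$-word — is immediate, since each factor $u_i$ produced by the factorization is nonempty.
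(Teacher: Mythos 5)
Your proposal is correct and follows essentially the same route as the paper: choose a representative for each $\equiv_{state}$-class, define $UP(Q)$ from the pairs of classes satisfying the idempotency condition $C_1\cdot C_1=C_1$ and the absorption condition $C_0\cdot C_1=C_0$, and then use the Ramsey factorization of Lemma~\ref{lem:prop-equiv}(3) together with Lemma~\ref{lem:-m-concat} to match an arbitrary $\pi$ to an element of $UP(Q)$. Your write-up merely makes explicit the verification steps that the paper leaves implicit.
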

 \begin{proof}
   For each $\equiv_{state}$-equivalence class (over finite strings)  choose a (shortest) string $u$ in it. This string  is called a representative of its 
    $\equiv_{state}$-equivalence class. Let $u_1,\dots ,u_k$ be the representatives.
   A $\equiv_{state}$-equivalence class $C$ is idempotent  if   $uv\equiv_{state}u$ for every $u,v\in C$, equivalently $uu\equiv_{state}u$ for
   an element $u\in C$. Define $UP(Q):=\{u_iu_j^\om\mid u_i$ and $u_j$ are representatives, and $u_j$ is  in  an idempotent class
   and $u_iu_j   \equiv_{state}u_i\}$. Lemma \ref{lem:prop-equiv}(3) implies that every $\om$-string is %$\equiv_{state}$-equivalent 
     equivalent to a string
   in $UP(Q)$. Lemma \ref{lem:prop-equiv}(3) also implies the ``moreover part'' of Lemma \ref{lem:up}.
 \end{proof}
  \begin{nota} Let $d_Q:=\max\{ |u| \mid u$ is     a shortest string of its $\equiv_{state}$-equivalence class$\}$.
     %be an upper bound on the shortest strings in the  $\equiv_{state}$ classes.
   % Let $p_Q$ be an upper bound on the shortest strings in the  $\equiv_{state}$ classes.
 %  Let $d_Q:=2\times p_Q$.
   Then, $d_Q$ is an  upper bound on the length of periods $v$  and
  lags of $\om$-strings in $UP(Q)$.
%
  % \begin{nota} Let $d_Q:=\max\{ |u| \mid u$ is the shortest string in its $\equiv_{state}$-equivalence class$\}$.
%     %be an upper bound on the shortest strings in the  $\equiv_{state}$ classes.
%   % Let $p_Q$ be an upper bound on the shortest strings in the  $\equiv_{state}$ classes.
% %  Let $d_Q:=2\times p_Q$.
%   Then $d_Q$ is an  upper bounds on the length of periods $v$  and
%  lags $u$ of $\om$-strings in $uv^\omega\in UP(Q)$.
  % We denote by $p_Q$ and $d_Q$ an upper bound on the periods and
%  lags of $\om$-strings in $UP(Q)$.
\end{nota}
\begin{lem}\label{lem:later-choice}
  Assume $v\equiv_{state} vv$  and
   $u\equiv_{state} uv$ and $w=uv^\om =q_1q_2 \cdots$. Let $n$ and $m$ be the lengths of $u$ and $v$ respectively. Let $S_u$ be the set of states
   that appear in $u$. Then, for every $l\geq n$, the set of states that appear in $q_1 \cdots q_l$ is $S_u$.
   Moreover, for every $l,l'> n$ if $q_l=q_{l'}$, then $q_1 \cdots q_l\equiv_{state}q_1 \cdots q_{l'}$.
  % \begin{enumerate}
%     \item For every $l\geq m$, the set of states that appear in $q_1 \cdots q_l$ is $S_u$.
%     \item For every $i\geq 0$ and $l\geq m$: $q_1\cdots q_l\equiv_{state}q_1 \cdots q_{l+im}$.
%    % \item
%   \end{enumerate}
\end{lem}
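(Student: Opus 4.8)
\textbf{Proof plan for Lemma \ref{lem:later-choice}.}
The plan is to exploit the two defining algebraic identities $v\equiv_{state}vv$ and $u\equiv_{state}uv$ together with the ``for every prefix position there is a corresponding position with the same set of earlier states'' clauses in the definition of $\equiv_{state}$. The key observation is that $w=uv^\om$ has a natural block decomposition $w=u_0u_1u_2\cdots$ with $u_0=u$ and $u_i=v$ for $i\geq 1$, and that for every $k\geq 1$ the concatenation $u_0u_1\cdots u_k$ is $\equiv_{state}$-equivalent to $u$ (this follows by induction from $u\equiv_{state}uv$ and the congruence property, Lemma \ref{lem:prop-equiv}(2), since $uv\equiv_{state}u$ and hence $uv\cdots v\equiv_{state}u$). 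Thus every ``block boundary'' prefix of $w$ realizes the same set of states, namely $S_u$.

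For the first assertion, fix $l\geq n$ and let $k\geq 1$ be such that the prefix $q_1\cdots q_l$ lies strictly inside the $(k+1)$-st block or ends exactly at its boundary; write the prefix as $(u_0\cdots u_{k-1})\cdot p$ where $p$ is a (possibly empty) proper prefix of a copy of $v$. The states appearing in $u_0\cdots u_{k-1}$ form exactly $S_u$ by the previous paragraph (using clause (2)/(3) of the $\equiv_{state}$ definition applied to $u_0\cdots u_{k-1}\equiv_{state}u$: the set of states occurring is a $\equiv_{state}$-invariant). The states appearing in $p$ are a subset of the states appearing in a full copy of $v$; but $v$ appears as $u_1$ in the block decomposition and $u_0u_1\equiv_{state}u_0$, so every state of $v$ already occurs in $u_0=u$, i.e.\ $S_v\subseteq S_u$. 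Hence the set of states in $q_1\cdots q_l$ is $S_u$. (For $l=n$ exactly, the prefix is $u$ itself and the claim is trivial.)

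For the ``moreover'' part, suppose $l,l'>n$ with $q_l=q_{l'}$. By the first part both prefixes $q_1\cdots q_l$ and $q_1\cdots q_{l'}$ have the same last state $q_l=q_{l'}$, the same first state $q_1$, and the same set of occurring states $S_u$; the same holds a fortiori for the sets of states occurring in the strict prefixes $q_1\cdots q_{l-1}$ and $q_1\cdots q_{l'-1}$, which are $S_u$ as well (again by the first part, since $l-1,l'-1\geq n$). Finally, clause (4) of $\equiv_{state}$ is automatic here: both prefixes are prefixes of $w=uv^\om$, which by hypothesis is the $\om$-word $q_1q_2\cdots$ of states under consideration, and whether a finite prefix of it is a legal run of $\cA$ is determined solely by the transition function applied to the label letters that produced it; since we are working inside a single word $w$, consistency with $\cA$ on $q_1\cdots q_l$ and on $q_1\cdots q_{l'}$ hold or fail together. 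Therefore $q_1\cdots q_l\equiv_{state}q_1\cdots q_{l'}$, verifying all four clauses.

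I expect the only genuinely delicate point to be the bookkeeping in clause (4): one must be careful that $\equiv_{state}$ as stated quantifies over whether the string ``is a run of $\cA$'', and to argue cleanly that this is inherited by prefixes of a common word. In the intended application $w$ is exactly a timed-word/state-sequence produced by \O in the game and is therefore consistent with $\cA$ by construction, so all its prefixes are runs and clause (4) holds vacuously on both sides; I would state this explicitly rather than belabor the general case. The remaining steps are a routine unwinding of the block decomposition and of the definition of $\equiv_{state}$, using only Lemma \ref{lem:prop-equiv}(2) and the two idempotency/absorption hypotheses.
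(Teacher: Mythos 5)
Your first assertion is proved correctly: the block decomposition, the observation that $\equiv_{state}$-equivalent strings have the same set of occurring states, and the inclusion $S_v\subseteq S_u$ obtained from $u\equiv_{state}uv$ are exactly what is needed. The gap is in the ``moreover'' part. There you verify clause (1) of the definition of $\equiv_{state}$ and the equality of the \emph{total} occurrence sets of the two prefixes (and of their strict prefixes), but clauses (2) and (3) require more: for every position $i$ of one prefix you must produce a position $j$ of the other carrying the \emph{same letter} and such that the sets of states occurring \emph{strictly before} $i$ and strictly before $j$ coincide. Equality of total occurrence sets is only a consequence of (2)--(3), not equivalent to them; for instance $abca$ and $acba$ share first letter, last letter and occurrence set but are not $\equiv_{state}$-equivalent. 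Concretely, taking $l<l'$, for a position $j$ with $l<j\le l'$ the strict-prefix set is $S_u$, and you must exhibit some $i\le l$ with $q_i=q_j$ whose strict-prefix set is also $S_u$; nothing in your argument produces such an $i$.

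The missing step is supplied by the hypotheses you did not use here. Apply clause (3) of $u\equiv_{state}uv$ to a position in the $v$-part of $uv$: its strict prefix in $uv$ contains $S_u\cup S_v=S_u$, so every state of $S_v$ occurs in $u$ at a position preceded by all of $S_u$. With this, positions $\le n$ of the two prefixes are matched identically (one prefix extends the other), and every position $>n$ (whose letter lies in $S_v$ and whose strict-prefix set is $S_u$ by your part 1) is matched to such an occurrence inside $u$; this yields (2) and (3) in both directions. Your treatment of clause (4) is also too quick --- in general a string can be a run while an extension of it is not --- but here $u\equiv_{state}uv$, $v\equiv_{state}vv$ and the congruence property give that either $u$ already fails to be a run (and then so do both prefixes, since $l,l'>n$) or every consecutive pair of $w$ is a legal transition (and then every prefix is a run), so the clause does hold once this is spelled out.
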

\subsection{Restrictions on strategies of \Op}
\begin{defi}[Simple Strategies]
  A timed $\om$-word   $(a_0,t_0) (a_1,t_1) \cdots$ is simple timed  if there is $\Delta>0$ such that $t_i-t_0=\Delta\times i$  for all $i$.
%\end{defn}
 An \Op-strategy is \emph{simple} (for $\cA$) if it chooses only simple timed $UP_\cA(Q)$-sequences, i.e., the sequences of the form
$ (a_0,0)(a_1,\Delta) \cdots (a_i,\Delta\times i)\cdots$, where $a_0a_1\cdots \in UP_\cA(Q)$.
\end{defi}
%\marginpar{stutter free}
\begin{lem}[Reduction to simple strategies in $\cG_\cA$]
  \O has a winning strategy if and only if it has a simple winning strategy.
\end{lem}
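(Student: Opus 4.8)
The plan is to prove both directions of the equivalence ``$\mathcal{O}$ has a winning strategy iff it has a simple winning strategy'' in the game $\cG_\cA$. One direction is trivial: a simple winning strategy is in particular a winning strategy. So the work lies entirely in showing that if $\mathcal{O}$ wins $\cG_\cA$ at all, then $\mathcal{O}$ can win using only simple timed $UP_\cA(Q)$-sequences. The idea is that $\mathcal{O}$ never actually benefits from the fine structure of the timed words it plays --- only the $\equiv_{state}$-class of the state sequence matters for the acceptance condition, and only the ability to survive arbitrarily long without being interrupted matters for the ``duration is finite'' escape clause. Both of these are preserved when we replace an arbitrary $\om$-word over $Q$ by an equivalent one from $UP_\cA(Q)$ (Lemma \ref{lem:up}) and replace an arbitrary time schedule by an evenly-spaced one.

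First I would set up the translation between a general winning strategy $\sigma$ and a simple strategy $\sigma'$. At each round, when $\sigma$ would have $\mathcal{O}$ respond to the current prefix with some timed $\om$-word $Y'_{i}$ over $Q$ whose state sequence is $\pi_i$, the simple strategy $\sigma'$ instead plays the simple timed word whose underlying state sequence is the representative $\rho_i = u_i v_i^\om \in UP_\cA(Q)$ equivalent to $\pi_i$ in the sense of the equivalence on $\om$-strings, with a fixed spacing $\Delta_i$ (say $\Delta_i := 2^{-i}$, so that if $\mathcal{O}$ is never interrupted from round $i$ on, time still diverges, but if $\mathcal{I}$ keeps interrupting the play can still be finite --- actually the spacing should be chosen so that being-interrupted-each-round still lets time diverge OR stay bounded as needed; the cleanest choice is $\Delta_i$ a constant, or $\Delta_i$ depending only on the round, and one checks it works in both escape clauses). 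The key structural point, provided by Lemma \ref{lem:later-choice} and the ``moreover'' part of Lemma \ref{lem:up}, is that $\mathcal{I}$'s interruption of $\sigma'$ at any time $t_{i+1}$ lands in a state $q$ that appears in $\rho_i$, and the prefix of $\rho_i$ up to that state is $\equiv_{state}$-equivalent to some prefix of $\pi_i$ --- so $\mathcal{O}$ can ``pretend'' in the simulation that $\mathcal{I}$ interrupted $\sigma$ at the corresponding point, and continue.

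The main argument is then a bookkeeping lemma: if $p$ is a play consistent with $\sigma'$, there is a play $\hat p$ consistent with $\sigma$ such that (i) the sequence of interruptions is ``corresponding'' (same inputs $a_i$, times possibly different), (ii) at every round the state sequence segment played by $\sigma'$ in $p$ is $\equiv_{state}$-equivalent to the corresponding segment played by $\sigma$ in $\hat p$, and (iii) $p$ has finite duration iff $\hat p$ does --- or, when the durations behave differently, one checks directly that $p$ is still won by $\mathcal{O}$. By Lemma \ref{lem:-m-concat}, equivalence of each segment gives equivalence of the full $\om$-state-sequences, and since equivalent $\om$-strings agree on $\Inf$ and on being a run of $\cA$, they agree on the parity acceptance condition. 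Hence if $\hat p$ is won by $\mathcal{O}$ via the acceptance condition, so is $p$; and if $\hat p$ is won via the ``$\om$ rounds, finite duration'' clause, we need the spacing $\Delta_i$ chosen so that $p$ also has finite duration, or alternatively so that $p$ satisfies the acceptance condition --- this is the delicate case.

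The hard part will be exactly the interaction of the time schedule with the two disjuncts of the winning condition. When $\mathcal{O}$'s original strategy wins because $\mathcal{I}$ interrupts infinitely often with $\lim t_n < \infty$, the simulated play and the real play have different interruption times, so one must choose the spacings $\Delta_i$ (e.g. a fixed geometric sequence, independent of $\sigma$) carefully so that (a) whenever $\mathcal{I}$ interrupts $\sigma'$ infinitely often the real-play durations stay finite iff the simulated ones do --- in fact the robust move is: if $\mathcal{I}$ interrupts infinitely often in $p$, then $\mathcal{I}$ interrupts infinitely often in $\hat p$, and one can arrange $\lim t_n^{(p)} < \infty$ by making $\mathcal{O}$ refuse to let the interruption times grow, i.e. each round $\mathcal{O}$ plays with spacing forcing $\mathcal{I}$ to interrupt within $2^{-i}$ or else time diverges and we fall back to the acceptance-condition argument on an eventually-constant tail. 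I expect this "escape-clause matching" to be the only genuinely non-routine step; the $\equiv_{state}$ bookkeeping, while notationally heavy, is forced by Lemmas \ref{lem:prop-equiv}, \ref{lem:-m-concat}, \ref{lem:up}, and \ref{lem:later-choice} and should go through mechanically.
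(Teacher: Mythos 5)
Your high-level plan coincides with the paper's: keep the easy direction for free, and for the hard direction replace each move $u$ of a winning strategy $\sigma$ by a simple timed word whose untimed part is the $UP_\cA(Q)$-representative equivalent to $untimed(u)$, then simulate any play against the new strategy by a play against $\sigma$ with $\equiv_{state}$-equivalent segments, so that Lemma \ref{lem:-m-concat} transfers the parity condition. That part is fine. But the step you yourself flag as ``the delicate case'' --- the interaction of the time schedule with the finite-duration escape clause --- is exactly where your proposal is unresolved, and your tentative fixes do not work. You ask for the durations of the real play and the simulated play to agree (``finite iff finite''), and you suggest a scale $\Delta_i=2^{-i}$ or a fixed geometric sequence independent of $\sigma$, plus the idea that $\mathcal{O}$ can ``refuse to let the interruption times grow'' --- but $\mathcal{O}$ does not control when $\mathcal{I}$ interrupts, and an ``iff'' on durations is neither achievable nor needed.

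What is actually needed is a \emph{one-sided} inequality: the simulated play $\pi^\sigma$ must always have duration greater than or equal to that of the real play $\pi$. Then if $\pi^\sigma$ wins by finite duration, so does $\pi$; and if $\pi^\sigma$ wins by acceptance, $\pi$ wins by acceptance via the recurrent-states invariant. The paper secures this inequality with a scale that \emph{depends on $\sigma$'s move}: the simple version of $u$ is played with spacing $first(u)/d_Q$, where $first(u)$ is the time of $u$'s first transition and $d_Q$ bounds the lag length. This forces the entire lag of the simple move to elapse before time $first(u)$, so an interruption inside the lag maps to an interruption of $u$ at a time $\geq first(u)$, i.e.\ later; and an interruption inside the period maps, by Lemma \ref{lem:later-choice} and the ``moreover'' part of Lemma \ref{lem:up}, to one of infinitely many occurrences of the same state with the same past in $u$, so it can always be placed later as well. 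With your fixed scale independent of $u$, the lag of the simple move can extend far beyond $first(u)$, a lag interruption can then map to an \emph{earlier} time in $u$, the duration inequality reverses, and a simulated play that wins only by the finite-duration clause gives you no conclusion about the real play (whose $\Inf$ set is the same and rejecting). So the gap is genuine: you are missing the move-dependent scale and the observation that only one direction of the duration comparison is required.
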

\begin{proof}
Let $\sigma$ be a winning strategy for \O in $\cG_\cA$. We are going to construct a simple winning strategy.
First, for a timed $\om$-sequence $u:= (a_0,0) (a_1,t_1) \cdots$  let $first(u):=t_1$ be its first positive point and  $untimed(u)$
be  the corresponding untimed $\om$-sequence $a_0a_1\cdots$.

By $u^s$ we denote the simple timed sequence  $v$ such that $untimed(u)$ is equivalent to $untimed(v)\in UP(Q) $ and the scale of $v$ is  $t_1:=first(u)/d_Q$ where   $d_Q  $ is greater  than or equal to  the length of the lag of $untimed(v)$.

We are going to show that a  strategy $\tau$ which replaces $u$ moves of \O in $\sigma$ by the corresponding $u^s$ moves is winning.

Consider a play $\pi$  consistent  with $\tau$. We construct a play $
\pi^\sigma$  consistent with $\sigma$ such that
the states visited infinitely often in $\pi$ are the same   as the state visited infinitely often  in $\pi^\sigma$, 
and the  duration  of $\pi^\sigma$ is greater than or equal to   the duration  of $\pi$. Hence, if $\pi^\sigma$ is winning, then $\pi$ is winning.

For every prefix $\pi_n$ of $\pi$ where \O moves, we construct the corresponding prefix $\pi^\sigma_n$  of $\pi^\sigma$, where \O moves
and the following invariants hold:
\begin{description}
\item[I-last state]
$\pi_n$  and $\pi^\sigma_n$  end in the same states.
\item[I-recurrent states] The infix between $\pi_{n-1}$ and $\pi_n$ has the same states as the infix  between
$\pi^\sigma_{n-1}$  and $\pi^\sigma_n$. I.e., let $v_n$ (respectively, $v^\sigma_n$)
be such that $\pi_n=\pi_{n-1}v_n$ (respectively, $\pi^\sigma_n=\pi^\sigma_{n-1}v^\sigma_n$); then a state occurs in $v_n$ if and only if it occurs in  $v^\sigma_n$.  %, whenever  $n$-th interruption of    \I is inside a periodic part.
\item[I-duration] The duration of $\pi^\sigma_n$  is greater than or equal to the duration of $\pi_n$.
\end{description}

Assume that a prefix $\pi_n$ was constructed and  the corresponding prefix is $\pi^\sigma_n$
(they end in the same state). If $\sigma$ prescribes for  \O to choose $u$, let $u^s$ be the corresponding choice for $\tau$.
%
% I.e.,
%$u=u_1u_2 \cdots u_i \cdots$  and $u^s=v_1v_2^\om\in UP(Q)$ such that $u_1\equiv_{state} u_1u_2  \equiv_{state} v_1$ and
%$u_i\equiv_{state}u_j\equiv v_2 \equiv_{state} v_2v_2$ for $i,j>1$.

Now,  if \I accepts $u^s$, then we are done; both plays are accepting or rejecting.
If \I interrupts at a state $q$, we consider two cases.

A) \I interrupted $u^s$ at time  $t$  at a position inside the period. Then, we can find $t'>t$ such that $u$ is at the state  $q$ at $t'$  and the corresponding prefixes are $\equiv_{state}$-equivalent. In fact, by  Lemma \ref{lem:up} and Lemma \Ref{lem:later-choice}, there are infinitely many such positions in $u$.  Now, we constructed $\pi^\sigma_{n+1}$ prefix with the duration greater  than the duration of corresponding $\pi_{n+1}$ prefix.

B) \I interrupts $u^s$ at time $t$  in a  state $q $  at a position in the   lag; hence, $t<t_1=\mathit{first}(u)$.  We choose the corresponding interruption point in   $u$. %, i.e., $q_1$ at this point and before it the set of state $Q_1$ appear.
It is at time   greater than or equal to $ t_1$. Hence, the invariants are preserved.

%\I interrupts at time $t$  at state $q_1 $ in the  prefix such that $Q_1$ is the set of states that appear before $q_1$, hence $t<t_1=first(u)$.  We choose corresponding interruption point in the prefix  $u$, i.e., $q_1$ at this point and before it the set of state $Q_1$ appear. It is at distance  $\geq t_1$. Hence, the invariants are preserved.

It remains to show that $\pi$ is winning for \O  if $\pi^\sigma$ is winning for \Op.

If the plays are finite, then $\pi$ is winning iff $\pi^\sigma$ is winning.

If the plays are infinite, then  (1) by the recurrent state invariant the  same states  appear infinitely often in $\pi$ and in $\pi^\sigma$,
and (2) if  \I causes   time to converge in $\pi^\sigma$  to $t^\sigma<\infty$, then in $\pi$ time converges to a  value bounded by $ t^\sigma<\infty$. Since  \O wins $\pi^\sigma$, we obtain that  \O wins $\pi$.
%
%Assume that $\pi^\sigma$ is winning because the interruptions of  \I  cause that time converges to $t^\sigma<\infty$.
%Then in $\pi$ time converges to the value $\leq t^\sigma<\infty$. Hence, \O wins $\pi$.
%
%If times diverges, then \I infinitely often interrupts inside period.
%%
%%We are going to show that in this case the same states  appear infinitely often in $\pi$ and in $\pi^\sigma$.
%% Let $s_1,  \cdots, s_k$ be the $\omega$-strings chosen infinitely often. From some point on only these moves are chosen.
%% If
%%
%
%
%By the recurrent state invariant the  same states  appear infinitely often in $\pi$ and in $\pi^\sigma$.
%Assume that $\pi^\sigma$ is winning because the interruptions of  \I  cause that time converges to $t^\sigma<\infty$.
%Then in $\pi$ time converges to the value $\leq t^\sigma<\infty$. Hence, \O wins $\pi$.
%
%If times diverges, then \I infinitely often interrupts inside period.
%%
%%We are going to show that in this case the same states  appear infinitely often in $\pi$ and in $\pi^\sigma$.
%% Let $s_1,  \cdots, s_k$ be the $\omega$-strings chosen infinitely often. From some point on only these moves are chosen.
%% If
%%
%
%
%By the recurrent state invariant the  same states  appear infinitely often in $\pi$ and in $\pi^\sigma$.
\end{proof}

Now, we will play explicitly on a   finite arena. This game corresponds to $\cG_\cA$-game when \O is restricted to use only simple strategies.

% A strategy is called almost positional, if at every vertex it always chooses the same untimed move (timed scales might be different).

Let $\cA:=(Q,\Sigma_{in},\Sigma_{out}, \delta, q_{init},pr)$ be an automaton.
We construct the following arena:

\bigskip
%\fbox{
%\begin{minipage}{0.9\linewidth}
%
%\begin{center}
%  Arena for simple strategies game
%\end{center}
%
%
%\textbf{States:} $\{q_{init} \}\cup Q\times \Sigma_{in} \cup  Q\times \Sigma_{in}\times UP(Q)$, where $Q_{\mathcal I}:=\{q_{init}\}\cup  Q\times \Sigma_{in}\times UP(Q)$ is  the set of \I nodes,  and
%$Q_{\mathcal O}:= Q\times \Sigma_{in}$ is  the set of  \O nodes. The set $F$  of final nodes  is $\{(q,a,u)\mid $ the maximal priority that appears infinitely often in $u$ is
%even$\}$.
%
%\textbf{Edges:}
%\begin{enumerate}
%  \item $q_{init} $ to $(q_{init},a)$ for every $a\in \Sigma_{in}$.
%  \item $(q,a, u) $ to $(q',b)$ for $u\in UP(Q)$,  $q'$ a state in $ u$ and  $a\neq b\in \Sigma_{in}$. This edge will have  priorities (maybe several). $p$ is a priority of this edge if  there is $n$ such that $u(n)=q'$ and the maximal priority among   $u(1),\dots , u(n)$ is $p$;
%      if the preperiodic part of $u$  has length $<n$ then  the priority $p$ is called  big, otherwise it is called small.
%
%
%  \item $(q_0,a)$ to  $(q_0,a,u)$, where   $ u=(q_1, q_2,\dots)\in UP(Q)$  such that $\exists b_i\in \Sigma_{out}. ~q_{i+1}=\delta(q_i,(a,b_i))$ for all $i$.
%\end{enumerate}
%%$q_{init} $ to $(q_{init},a)$ for every $a\in \Sigma_{in}$
%%
%%$(q,a, u) $ to $(q',b)$ for $u\in UP(Q)$,  $q'$ a state in $ u$ and  $a\neq b\in \Sigma_{in}$.
%%
%% $(q_0,a)$ to  $(q_0,a,u)$, where   $ u=(q_1, q_2,\dots)\in UP(Q)$  such that $\exists b_i\in \Sigma_{out}. ~q_{i+1}=\delta(q_i,(a,b_i))$ for all $i$.
%
%\end{minipage}
%}

\fbox{
%\begin{center}
%  Timed game on finite arena
%\end{center}

\begin{minipage}{0.9\linewidth}

\begin{center}
  Arena $\cG^{arena}_\cA$  
\end{center}

\textbf{Nodes:} $\{q_{\new} \}\cup Q\times \Sigma_{in} \cup  Q\times \Sigma_{in}\times UP(Q)$, where $Q_{\mathcal I}:=\{q_{\new}\}\cup  Q\times \Sigma_{in}\times UP(Q)$ is  the set of \I nodes,  and
$Q_{\mathcal O}:= Q\times \Sigma_{in}$ is  the set of  \O nodes. The set $F$  of final nodes  is $\{(q,a,u)\mid $ the maximal priority that appears infinitely often in $u$ is
even$\}$.

\textbf{Edges:}
\begin{enumerate}
  \item $ {q}_{\new} $ to $(q_{init},a)$ for every $a\in \Sigma_{in}$.
   \item $(q_0,a)$ to  $(q_0,a,u)$, where   $ u=(q_1, q_2,\dots)\in UP(Q)$  such that $\exists b_i\in \Sigma_{out}. ~q_{i+1}=\delta(q_i,(a,b_i))$ for all $i$.
  \item There is an edge from $(q,a, u) $ to $(q',b)$ for $u\in UP(Q)$,  % $q'$ a state in $ u$ and  $a\neq b\in %\Sigma_{in}$,
       if $a\neq b\in \Sigma_{in}$ and 
  there is $n$ such that $u(n)=q'$; this edge will be labeled  by $p$ - the maximal priority among   $u(1),\dots , u(n)$  and it  will be called small if   $u(1),\dots , u(n)$  is in the lag part of $u$; otherwise, it will be called big.

 % This edge will have  priorities (maybe several). $p$ is a priority of this edge if  there is $n$ such that $u(n)=q'$ and the maximal priority among   $u(1),\dots , u(n)$ is $p$;
%      if the lag  part of $u$  has length $<n$ then  the priority $p$ is called  big, otherwise it is called small.
%
 %% \item $(q_0,a)$ to  $(q_0,a,u)$, where   $ u=(q_1, q_2,\dots)\in UP(Q)$  such that $\exists b_i\in \Sigma_{out}. ~q_{i+1}=\delta(q_i,(a,b_i))$ for all $i$.
\end{enumerate}
%$q_{init} $ to $(q_{init},a)$ for every $a\in \Sigma_{in}$
%
%$(q,a, u) $ to $(q',b)$ for $u\in UP(Q)$,  $q'$ a state in $ u$ and  $a\neq b\in \Sigma_{in}$.
%
% $(q_0,a)$ to  $(q_0,a,u)$, where   $ u=(q_1, q_2,\dots)\in UP(Q)$  such that $\exists b_i\in \Sigma_{out}. ~q_{i+1}=\delta(q_i,(a,b_i))$ for all $i$.
%
\end{minipage}
}
\label{arena} 

\bigskip
This arena is a finite graph (in   (3)   construction of edges depends on $n\in \nat$, however,  due to   periodicity of $u$, the number
of edges is finite and they are computable).  
 %If there are $k$ priorities, then out-degree of nodes of the form $(q,a, u) $ is at most $2k$;
 There might be  parallel edges from $(q,a,u)$ to $(q',b)$  which will have labels  $(p,big)$ or $(p,small)$ for a priority $p$.
 If $k$ is the number of priorities, then from $(q,a,u)$ to $(q',b)$  for every $p\leq k$   there is at most
 one edge $(p,small)$;  and   by Lemma \ref{lem:later-choice},  there is a unique $p'$  such that there is an edge $(p',big)$
 from  $(q,a,u)$ to $(q',b)$; moreover, this $p'$ is the maximal of the priorities assigned to the states of $u$.
 % parallel edges of.

\bigskip
%Now consider the following game on a (finite)  arena.
Now, consider the following game.

\fbox{

\begin{minipage}{0.9\linewidth}

\begin{center}
  Timed game $\cG^{arena}_\cA$ on our finite arena
\end{center}

Initial step. \I  chooses $a\in \Sigma_{in}$ and the play  proceeds to $(q_\mathit{init},a)$;  $t_0$ is set to 0.

From a  state $(q,a)$ at the $i$-th move at time $t_i$ player  \O chooses an edge to $(q,a,u)$ and a time scale $\Delta_i$. 
It defines a  sequence $Y_i$ on $[t_i,\infty)$ as
$Y(t_i+\Delta_i\times n)=q$ if $q$ is the $n$-th state of $u$.

From  a state $ (q,a, u)$,  at $i+1$-move Player \I either (A) accepts and the play  ends,   %It is winning for \O iff $ (q,a ,u)\in F$,
 or (B)
interrupts at $t_{i+1}>t_i$ by chosen $b\neq a$. Then, the  play moves to $(q_n,b) $ by an edge with priority $p$  if $t_i+\Delta_i\times (n-1)<t_{i+1}\leq t_i+\Delta_i\times n$ and $p$ is the maximal of priorities among $q_1,\cdots, q_n$.

  \emph{Winning Condition.}
  \O wins a play $\pi$  if either $\pi$ is finite (this happens if \I has chosen option (A)) and its last state is in $F$, or $\pi$ is infinite and $\lim t_i$ is finite, or
  the maximal  priority that appears infinitely often is even.
  \end{minipage}
  }

  \bigskip
  Note that in the above game the arena is finite, yet each player has uncountable many moves,  when he/she chooses time.
  The following is immediate.
  \begin{lem}
    \O wins $\cG^{arena}_\cA$ if and only if \O has a winning   simple strategy for $\cG_\cA$-game.
  \end{lem}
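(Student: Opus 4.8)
The plan is to recognize $\cG^{arena}_\cA$ as a faithful reformulation of $\cG_\cA$ in which $\mathcal{O}$ has been constrained up front to play simple strategies, so that strategies and even individual plays transliterate losslessly in both directions, preserving the total duration $\lim t_i$ and the winner. First I would match up the moves. When $\mathcal{O}$ follows a simple strategy in $\cG_\cA$, an $\mathcal{O}$-move from the current state $q$ after $\mathcal{I}$ has committed to an input letter $a$ consists of a time scale $\Delta$ together with a word $u\in UP_\cA(Q)$ that is a run of $\cA$ from $q$ under input $a$ (the successive states of $u$ being obtained by $\delta$-transitions on letters $(a,b_i)$ for suitable outputs $b_i$); by the arena's edge clause (2) this is exactly an arena move out of the $\mathcal{O}$-node $(q,a)$, namely the edge to $(q,a,u)$ together with $\Delta$. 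Dually, from $(q,a,u)$ an $\mathcal{I}$-move is either ``accept'' (ending the play) or an interruption at some $t_{i+1}>t_i$ with a fresh letter $b\ne a$; because the simple timed sequence generated from $u$ and $\Delta$ holds state $u(n)$ in the $n$-th cell, an interruption carries the run to the state $q_n:=u(n)$ current at $t_{i+1}$, and edge clause (3) supplies precisely the edge from $(q,a,u)$ to $(q_n,b)$. Thus $\mathcal{O}$-strategies (and plays) on the two sides correspond bijectively, and the interruption times $t_0<t_1<\cdots$, hence $\lim t_i$, are literally identical.

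Next I would verify that the winner agrees, which is where the priority labels on the arena edges do their work. In $\cG_\cA$ the $\mathcal{O}$-word $Y$ finally produced is the $\omega$-concatenation $C_0C_1C_2\cdots$ of the finite state-blocks $C_i$ consisting of exactly the states of $u_i$ realized on the bounded interval $[t_i,t_{i+1})$ (the last block being all of $u_i$ if $\mathcal{I}$ accepts); $Y$ is a run of $\cA$ by construction, so $\mathcal{O}$ wins (in a finite play, or in an infinite play with $\lim t_i=\infty$) iff $Y$ meets the acceptance condition of $\cA$, i.e.\ iff the maximal priority that appears infinitely often among the states of $Y$ is even. By the interruption rule of $\cG^{arena}_\cA$ the $i$-th edge traversed in the corresponding arena play is labelled with exactly the maximal priority among the states of $C_i$ (for the terminal block this information is instead recorded by membership of $(q,a,u)$ in $F$), and Lemmas~\ref{lem:up} and~\ref{lem:later-choice} guarantee that this label is well defined — in particular the unique ``big'' edge from $(q,a,u)$ to a given target carries the maximal priority of all of $u$. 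Hence the maximal priority occurring infinitely often along an infinite arena play equals the one occurring infinitely often in the associated $Y$, and for finite plays ``$(q,a,u)\in F$'' matches ``$\cA$ accepts $Y$''; together with the coincidence of $\lim t_i$ noted above, the ``duration finite'' escape clauses also line up. Therefore $\mathcal{O}$ wins a play of $\cG^{arena}_\cA$ iff $\mathcal{O}$ wins the corresponding play of $\cG_\cA$, and consequently $\mathcal{O}$ has a winning arena strategy iff $\mathcal{O}$ has a winning simple strategy in $\cG_\cA$.

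The move-by-move translation is entirely routine; the only point that deserves care is the priority accounting in the second paragraph, i.e.\ checking that the edge labels faithfully transport $\cA$'s parity condition onto the finite arena. This is precisely what the ``small''/``big'' classification of arena edges and Lemma~\ref{lem:later-choice} are designed for — the ``small'' case records a priority witnessed only inside the lag of $u$, the ``big'' case one witnessed after the periodic part has been entered — so unwinding these definitions suffices and no genuinely new argument is needed.
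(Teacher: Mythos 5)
Your proof is correct and is exactly the argument the paper has in mind: the paper states this lemma without proof (``The following is immediate''), and your move-by-move transliteration, together with the check that the small/big priority labels transport the parity condition and that the interruption times (hence $\lim t_i$) are preserved, is the natural unpacking of that claim. The only blemish is inherited from the paper itself, namely the mismatch between the min-parity acceptance condition in the preliminaries and the max-priority convention used in the arena games, which your write-up follows consistently.
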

Next, we are going to restrict further the strategies of  player \O in $ \cG^{arena}_\cA$.

 A strategy is called \textbf{almost positional}, if at every vertex of the arena it always chooses the same untimed move (time scales might be different).
%\marginpar{where to state determinacy}
 \begin{lem}[Almost positional winning strategy in $\cG^{arena}_\cA$]

   \O has a simple winning strategy iff \O has an almost positional simple winning strategy.
 \end{lem}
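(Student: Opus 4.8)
~\textbf{Plan.}
The goal is to show that if \O has a simple winning strategy in $\cG^{arena}_\cA$, it has one that is \emph{almost positional}, i.e.\ one that depends only on the current vertex when deciding which untimed $UP(Q)$-sequence to play (the time scales may still vary). The natural approach is to read off from the timed game on the finite arena an \emph{untimed} parity-like game and invoke positional determinacy there. Concretely, I would first observe that the untimed skeleton of $\cG^{arena}_\cA$ is a genuine finite-arena game whose winning condition for \O is: ``the play is finite and ends in $F$, or the play is infinite and the maximal priority occurring infinitely often (among the edge labels $p$) is even.'' This is an $\omega$-regular / parity condition on a finite graph, so by the Büchi–Landweber theorem (or classical parity-game positional determinacy) it is positionally determined; let $\sigma_0$ be a positional winning strategy for \O in this untimed game whenever \O wins it.

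Second, I would lift $\sigma_0$ back to a strategy $\sigma$ in the timed game: at each \O-vertex $(q,a)$ play the untimed move prescribed by $\sigma_0$, together with a time scale chosen exactly as in the proof of the reduction-to-simple-strategies lemma — e.g.\ at the $i$-th \O-move use scale $\Delta_i := \Delta_{i-1}/d_Q$ (or any geometrically shrinking schedule), so that in any play in which \I interrupts infinitely often the total duration stays finite. The key point is that \I's only way to avoid losing by the ``duration converges'' clause is to eventually stop interrupting, at which point the play becomes a fixed infinite (or finite, if \I accepts) path whose priority behaviour is governed entirely by $\sigma_0$; since $\sigma_0$ wins the untimed parity condition, \O wins. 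The matching of timed interruption points to positions of the ultimately periodic word — interruption in the lag part versus in the period, and the ``small/big'' edge labels — is handled exactly as in the previous lemma via Lemma~\ref{lem:later-choice} and Lemma~\ref{lem:up}: wherever \I interrupts $u$ at a state $q$, the edge taken, and hence the priority recorded, is determined by $u$ and $q$ alone (big interruptions give the fixed maximal priority $p'$ of $u$, small ones a priority among the lag).

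Third, for the converse direction — an almost positional simple winning strategy is in particular a simple winning strategy — there is nothing to prove.

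\textbf{Main obstacle.} The subtle point is that the timed game is not literally a finite-arena game: each player has uncountably many moves because of the time choices, and the winning condition mixes a parity condition with the ``$\lim t_i<\infty$'' escape clause, which is not itself $\omega$-regular. So one cannot simply quote parity positional determinacy for $\cG^{arena}_\cA$ directly. The resolution is to argue that the time component is \emph{irrelevant to who should play what untimed move}: fixing any admissible time schedule (the shrinking one above) turns the residual game into the untimed parity game, and a positional strategy there suffices because (i) if \I interrupts infinitely often the duration clause makes \O win regardless, and (ii) if \I interrupts only finitely often the tail is a genuine run of $\cA$ whose acceptance is decided by the positional parity strategy. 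Making this ``decoupling'' precise — in particular checking that the positional untimed strategy remains winning against \emph{all} of \I's timed behaviours, using the I-last state / I-recurrent states / I-duration invariants from the preceding proof — is where the real work lies, but it introduces no new ideas beyond those already deployed in the reduction-to-simple-strategies lemma.
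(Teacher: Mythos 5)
Your reduction to an untimed finite parity game breaks down at exactly the point you flag as the ``main obstacle,'' and the resolution you sketch does not work. The claim~(i) in your last paragraph --- that if \I interrupts infinitely often, a geometrically shrinking time schedule forces $\lim t_i<\infty$ so that \O wins by the duration clause regardless of priorities --- is false. The time scale $\Delta_i$ chosen by \O only fixes the spacing of the sample points of the ultimately periodic word $u$; \I chooses the interruption \emph{time} $t_{i+1}$, and since $u$ is ultimately periodic, every state of the period recurs at arbitrarily late positions (this is precisely Lemma~\ref{lem:later-choice}). Hence \I can take a ``big'' edge to any desired target state while making $t_{i+1}-t_i\geq 1$ at every round, so the duration diverges no matter how fast \O shrinks her scales. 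Consequently an infinite play is won by \O by the duration clause only when, from some point on, all interruptions land in the lag parts (``small'' edges); otherwise it is decided by the parity of the edge labels. The untimed game you would have to solve therefore has the winning condition ``eventually only small edges, \emph{or} the maximal priority occurring infinitely often is even,'' not a plain parity condition, and your reduction is also unsound in the other direction: a strategy $\sigma$ admitting a reachable odd cycle consisting only of small edges loses your untimed parity game but wins the timed game (cf.\ Lemma~\ref{lem:decide-strategy-iswinning}, where \I needs an odd cycle \emph{containing a big edge}). One could try to repair your plan by observing that this disjunction is a Rabin condition for \O and invoking positional determinacy of Rabin games, but that is a different argument from the one you wrote, and you would still have to justify the passage from a positional strategy for that untimed objective back to an almost positional simple strategy in the timed game.

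The paper avoids quoting any off-the-shelf determinacy result for the mixed (non-$\omega$-regular) winning condition. It argues directly by induction on $\sum_s |\mathit{Consistent}(\sigma,s)|$, the total number of distinct untimed moves a given simple winning strategy $\sigma$ uses: at a vertex $s$ with two or more consistent moves it splits the arena into $G_u$ (only the move $u$ allowed at $s$) and $G_{\neg u}$ (the remaining moves allowed at $s$), and shows that if \I won both restricted games then the two \I strategies could be combined, via the ``view'' construction, into an \I strategy winning the full game --- checking by hand that a play switching infinitely often between the two sub-arenas has infinite duration and odd limsup priority. This edge-splitting induction is the standard proof scheme for positional determinacy, instantiated so that the duration clause and the small/big distinction are handled explicitly. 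If you want to keep your architecture, you must (a) replace the parity objective of the untimed game by the Rabin objective above, and (b) prove the transfer of positional strategies between the untimed and timed games in both directions; as written, the proposal has a genuine gap.
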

\begin{proof}
First, note that  $\cG^{arena}_\cA$ game  is determined, i.e., one of the player has a winning strategy.

Let $\sigma$ be a simple strategy for \Op.
For every vertex $s$ in  $Q_{\mathcal O}$,
%let $out^\sigma_s$ be the cardinality of the set of possible (untimed) moves from $s$  which are  consistent with $\sigma$.
let $\mathit{Consistent} (\sigma,s)$ be   the set of possible (untimed) moves from $s$  which are  consistent with $\sigma$, i.e., $u\in\mathit{Consistent} (\sigma,s)$ if 
there is a (finite) run
$\pi:=s_0s_1\dots s_n$ consistent with $\sigma$  such that $s=s_n$  and  $\sigma$ prescribes to choose $u'$ after $\pi$, where
$u=untimed(u')$. 
Let $out^\sigma_s$ be the cardinality of $\mathit{Consistent} (\sigma,s)$. 

Assume $\sigma$ is winning.  We prove  by  induction on $\sum out^\sigma_s$ that \O has an almost positional simple winning strategy..

Basis.    $out^\sigma_s=1$  for every $s$.  Hence,  $\sigma$ is almost positional.

Inductive step. Assume $out^\sigma_s>1$. Let $u$ be a move from $s$ consistent with $\sigma$.

%
%We consider the arena $G_u$ where only $u$ moves (with different scales) are available from $s$ and the arena  $G_{\neg u}$ where $u$ moves are forbidden and only the   moves from $\mathit{Consistent} (\sigma,s)\setminus \{u\}$ are   allowed from $s$. On other nodes and edges 

Let arena  $G_u$ be obtained from $\cG^{arena}_\cA$  when only $u$ moves (with different scales) are available from $s$,
 and on the other nodes $\cG^{arena}_\cA$  and $G_u$ 
coincide. 
Similarly, let  arena  $G_{\neg u}$ be obtained from $\cG^{arena}_\cA$   when $u$ moves are forbidden and only the   moves from $\mathit{Consistent} (\sigma,s)\setminus \{u\}$ are   allowed from $s$,   and on the other nodes $\cG^{arena}_\cA$  and $G_u$ 
coincide.

If \O wins in $G_u$ or in $G_{\neg u}$,   by the inductive hypothesis \O  has  an   almost positional winning strategy in $G_u$ or in $G_{\neg u}$, 
and this strategy will be winning in $\cG^{arena}_\cA$. 

Assume  \I wins in these games by strategies $\mu_u$ and $\mu_{\neg u}$.
We are going to show that this implies that \I has a winning strategy in $\cG^{arena}_\cA$  and this  contradicts that
$\sigma$ is a winning strategy for \Op.

Let $\pi v$ be a finite  play in $G$ from $s$.
We say that a move $v$ is a $ G_u$ move if the last move from $s$ in $\pi v$ was $u$, otherwise $v$ is a $G_{\neg u}$ move.
%
%Partition it into a play in $ G_u$ and a play in $G_{\neg u}$.
%A move $v$  is in $G_u$ iff  the last move from $s$ before  $v$ is $u$.

Let $View_u(\pi)$ (respectively, $View_{\neg u}(\pi)$) be  the sequence of   $ G_u$ (respectively,  $G_{\neg u}$) moves in $\pi$.

Define a strategy $\mu$ for \I as follows. Assume $\pi$ ends  in a position for \Ip, then
$$
\mu(\pi)  =\begin{cases}
			\mu_u(View_u(\pi) ) & \text{if the last move of $\pi$ is a $ G_u$ move}\\
            \mu_{\neg u} (View_{\neg u}(\pi) )& \text{otherwise}
		 \end{cases}
$$

  Consider a play $\pi$  which is consistent with $\mu$. 
  If from some moment on $\pi$ stays in $ G_u$ (respectively,  $G_{\neg u}$), then from this moment it is consistent with 
  $\mu_u$ (respectively, with $\mu_{\neg u}$), and therefore it is winning for \Ip.
  
 Assume $\pi$ infinitely often  switches between  $ G_u$  and $G_{\neg u}$.
Since  $View_u(\pi)$ (respectively, $View_{\neg u}(\pi)$) is consistent with $\mu_u$ (respectively, with $ \mu_{\neg u} $) we obtain that 
  $View_u(\pi)$   and $View_{\neg u}(\pi)$ are winning plays for \Ip. Hence, their durations are infinite.
 The maximal priority that occurs infinitely often in  $View_u(\pi)$  is odd and the maximal priority that occurs infinitely often  in  $View_{\neg u}(\pi)$   is odd. Hence, the maximal priority that occurs infinitely often in $\pi$ is odd, and $\pi$ is winning play for \Ip.
   %such that  if \I moves after  a subplay $\pi_n$ and the last move from $s$ in $\pi_n$ was $u$, then \I plays according to $\tau_u$ 
%  in $View_u(\pi_n)$. Otherwise,  the last move from $s$ in $\pi_n$ was not $u$, then player \I plays according to  $\tau_{\neg u}$  
%  in  $View_{\neg u}(\pi_n)$).
%  
%   Recall that  $\tau_u$  and  $\tau_{\neg u}$  are winning strategies for \Ip. If from some time on 
%   we make only $u$ move from $s$, then from   the duration  of this play is unbounded,
%  and the parity is good for Player \Ip. Hence, $\pi$ is winning for \I - contradiction.
  \end{proof}

Finally, we restrict/eliminate  the time scale that  Player \O chooses.
   \begin{lem}[$2^{-i}$ scale]
    \O has an almost positional simple winning strategy if and only if
    \O has an almost positional simple winning strategy when the  $i$-th time move has the scale $2^{-i}$.

 \end{lem}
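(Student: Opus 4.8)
The plan is to show that the specific choice of time scale is irrelevant for an almost positional simple strategy, so one direction is trivial and the interesting direction is to transform an arbitrary almost positional simple winning strategy into one using scale $2^{-i}$ at round $i$. First I would observe that the underlying arena $\cG^{arena}_\cA$ is finite and the winning condition is a Boolean combination of (i) ``the play is finite and ends in $F$'', (ii) ``the play is infinite and $\lim t_i < \infty$'', and (iii) ``the maximal priority occurring infinitely often is even''. Crucially, conditions (i) and (iii) depend \emph{only} on the untimed projection of the play, i.e., on the sequence of arena vertices and edges visited; the time scales enter only through condition (ii). Since an almost positional strategy always makes the same untimed move at each vertex, the set of untimed plays consistent with it is completely determined and does not change when we alter the scales. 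Hence replacing, at round $i$, whatever scale $\Delta_i$ the strategy would have chosen by the fixed scale $2^{-i}$ leaves the set of possible untimed plays unchanged.

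The only thing to check is that this substitution cannot hurt \Op\ with respect to condition (ii). I would argue: in any infinite play $\pi$ consistent with the modified strategy, the interruption times satisfy $t_{i+1} - t_i \le \Delta_i \cdot |u_i|$ where $u_i$ is the (fixed, finite) $UP(Q)$-word played at round $i$; bounding $|u_i|$ by $d_Q$ (the uniform bound from Notation following Lemma \ref{lem:up}), we get $t_{i+1} - t_i \le d_Q \cdot 2^{-i}$. Therefore $\lim t_i \le t_0 + d_Q \sum_{i\ge 0} 2^{-i} = 2 d_Q < \infty$ \emph{automatically}, for \emph{every} infinite play. In other words, with the $2^{-i}$ schedule, condition (ii) is always satisfied on infinite plays, so \Op\ wins every infinite play consistent with the modified strategy, and wins a finite play exactly when the original strategy did (same untimed behaviour). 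Thus the modified strategy is winning.

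For the converse direction there is nothing to do: a strategy that is winning when the $i$-th scale is forced to be $2^{-i}$ is in particular an almost positional simple winning strategy in $\cG^{arena}_\cA$ (where \Op\ is free to choose any scales), since using the specific scales $2^{-i}$ is one legal way to play. So the ``if'' direction is immediate and the ``only if'' direction follows from the two observations above.

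The main obstacle is really just getting the direction of the inequality right: one must note that forcing small scales can only make time \emph{converge} (never diverge), which \emph{helps} \Op\ — \Op\ wins any infinite play of finite duration — and that shrinking the scale never affects the untimed play, hence never affects the finite-play winning condition or the parity condition. Once that asymmetry is spelled out, the argument is a one-line estimate using the uniform period bound $d_Q$ from Lemma \ref{lem:up}; no genuinely new combinatorial input is needed beyond what has already been established about $UP(Q)$ and almost positional strategies.
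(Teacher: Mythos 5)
Your ``if'' direction is fine, and your observation that the set of \emph{untimed} plays consistent with an almost positional strategy is unaffected by the choice of scales is correct and relevant. But the core of your ``only if'' direction rests on a false estimate. You claim $t_{i+1}-t_i\le \Delta_i\cdot|u_i|\le d_Q\cdot 2^{-i}$, so that $\lim t_i\le 2d_Q<\infty$ for \emph{every} infinite play under the $2^{-i}$ schedule. This is wrong: $u_i\in UP(Q)$ is an ultimately periodic $\omega$-word $uv^\omega$, and $d_Q$ bounds only the lengths of the lag $u$ and the period $v$, not $|u_i|$, which is infinite. Player \I{} chooses the interruption time $t_{i+1}>t_i$ freely and may interrupt arbitrarily deep into the periodic part, i.e.\ at $t_i+\Delta_i\cdot n$ for $n$ as large as she likes, so $t_{i+1}-t_i$ is unbounded no matter how small $\Delta_i$ is. The bound $t_{i+1}-t_i< d_Q\cdot 2^{-i}$ holds only when \I{} interrupts inside the \emph{lag}. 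Consequently condition (ii) is \emph{not} automatic on infinite plays, and your conclusion that the rescaled strategy wins every infinite play would, if true, make the parity condition irrelevant --- contradicting Lemma \ref{lem:decide-strategy-iswinning}, where \I{} defeats a $2^{-i}$-scaled strategy precisely by cycling through a ``big'' (period-part) edge with odd maximal priority while letting time diverge.

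What is actually needed is a simulation in the contrapositive direction: from a play $\pi^{exp}$ against the rescaled strategy $\sigma^{exp}$ that is winning for \I, build a play $\pi$ against the original $\sigma$ that is also winning for \I. The untimed matching you set up is the right start, but the substance is the timing: when \I{} interrupts $\sigma^{exp}$'s move inside the period, one uses Lemma \ref{lem:later-choice} to find a $\equiv_{state}$-equivalent interruption point arbitrarily \emph{late} in $\sigma$'s corresponding move, so that if the duration of $\pi^{exp}$ diverges (which it must, since \I{} wins and infinitely many interruptions then fall in periods), the duration of $\pi$ can be made to diverge as well, while the visited states and priorities agree by state-equivalence. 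That argument --- not a uniform bound on $t_{i+1}-t_i$ --- is what carries the lemma.
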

  \begin{proof}
    Let $\sigma$ be an almost positional simple winning strategy for \O and
    $\sigma^{exp}$ be the corresponding  almost positional simple   strategy for \O when  the  $i$-th time move has  scale~$2^{-i}$.

    Assume $\pi^{exp}$ is a play consistent with $\sigma^{exp}$ which is winning for \I.
    Toward a contradiction, we are going to construct $\pi$ consistent with $\sigma $ which is winning for \Ip. % - contradiction.

    We will keep the following invariant:
    After the $i$-th move of \I the corresponding subplays  $\pi_i$ and $\pi^{exp}_i$  (1) have the same last state, (2)  pass through the same set of states, and  (3) infinitely often the duration  of  $\pi_i$   is greater than 
    the duration  of   $\pi^{exp}_i$. Moreover,  (4) if $\pi_{i+1}=\pi_iw_{i+1}$ and $\pi^{exp}_{i+1}=\pi^{exp}_iw'_{i+1}$, then $w_{i+1}$ is state equivalent
    to $w'_{i+1}$.

    Let $u^{exp}$ be the $i+1$-th  move of \Op. Assume \I interrupts it at a state $q$ at time $t$. If $q$ is inside the period,
    we can find $q$ which   at time $t'>t$ (far enough in the $u$ move) and interrupts $u$ at the same $q$ with the same past.
    If $q$ is inside the prefix lag, we pick up the corresponding $q$ in the prefix of $u$. Hence, $\pi_{i+1}=\pi_iw_{i+1}$ and $\pi^{exp}_{i+1}=\pi^{exp}_iw'_{i+1}$ and $ w_{i+1}$ is state equivalent
    to $w'_{i+1}$.

%    It is clear that the first two invariants are preserved. 
It is clear that invariants (1), (2) and (4)  are preserved. 
    If the duration of $\pi^{exp}$ is infinite, then
    there are infinitely many interruptions inside the periods of \O moves. Hence, for each such interruption at $n$-th move
    we can choose the corresponding  interruption in $u$ that ensures that the duration of $\pi_n$ is greater than the duration of $\pi^{exp}_n$.  
    We assumed that $\pi^{exp}$ is winning for \Ip. Therefore, the duration of $\pi^{exp}$ is infinite and   the third property  holds.

    To sum up, we proved that if there is a play consistent with $\sigma^{exp}$  which is winning for \Ip,
    then there is a play consistent with $\sigma $  which is winning for \I a  contradiction to the assumption that
    $\sigma $  is a winning strategy for \Op.
  \end{proof}
\subsection{Decidability}
We proved that \O has a winning strategy iff
 \O has an almost positional simple winning strategy where  the $i$-th time move has the scale $2^{-i}$.
 There are finitely many such strategies.
 Hence, to decide whether \O has a winning strategy it is enough to check
 whether such a strategy is winning for \Op.

 Let $\sigma$ be an almost positional  strategy for \O with the time scale $2^{-i}$ on the $i$-th move.
It chooses exactly one edge from every vertex $(q,a)$.
Consider the sub-graph $G_\sigma$ of $\cG^{arena}_\cA$  where from $(q,a) $ only the edge specified by $\sigma$ is chosen (all other edges from $(q,a)$ are removed).
This is an arena for one player \Ip.
% We slightly modify its edges, in order to avoid multiple priorities on edges.
Recall  that the edges from the nodes of the form $(q,a,u)$ are labeled (see the definition of the  arena on page \pageref{arena}).
%Let $u=q_1\cdots q_n\cdots$. 
The edge which corresponds to the move from $(q,a,u) $ when  \I interrupts at $q_n$  is  assigned a priority $p$, where $p$ is 
  the maximal   priority   assigned to the states $q_1,\cdots, q_n$; this edge   is small % (respectively, big)
if $q_1 \cdots q_n$ is   inside lag  of $u$; otherwise, it is big. The edge   
% Hence, if there are $k$ priorities then from every node of the form $(q,a,u)$   exit at most $2k$ edges and each edge
will carry a label $(p,small)$   or $(p,big)$.  
Note ``big'' and ``small'' are just names that indicate whether an interruption  appears
in the period part or in the lag part, and they do not
affect the order of priorities.

 %If an edge is labeled by $k$  priorities $p_1,\dots ,p_k$ we replace it by $k$ parallel edges
%labeled by $p_1$ to $p_k$ respectively. Moreover, if $p_i$ is both big and small priority, we replace an edge labeled by $p_i$
%by two parallel edges: one labeled by $p_i$ big, and the other labelled by $p_i$ small (big and small are just names that indicate whether a label appears
%in the period part or in the lag part, and they do not
%affect the order).
% Hence, if there are $k$ priorities then from every node of the form $(q,a,u)$ will exit at most $2k$ edges and each edge
%will carry label $(p,big)$ or $(p,small)$ for $p\leq k$.
%%
%Construct an untimed arena for every positional quasiperiodic strategy $\sigma$ (time is ignored).
%
%There is a short edge from $s$ to $s'$ labeled by a set $T\subseteq Q$ if $\sigma(s)=u$ there is a prefix of $u$ (which ends before the period) labeled by $T$ which leads  to $s'$.
%
%There is a long edge from $s$ to $s'$ labeled by a set $T\subseteq Q$ if $\sigma(s)=u$ there is a prefix of $u$  which ends inside  the period and it  is  labeled by $T$ and  leads  to $s'$.

\begin{lem}[Deciding whether a strategy is winning] \label{lem:decide-strategy-iswinning} Let $\sigma$ be an almost positional winning strategy for \O with the time scale $2^{-i}$ on the $i$-th move.
Player \I wins against $\sigma$ if and only if in $G_\sigma$  either (A) there is a reachable  node $(q,a,u)\not \in F$  or (B)
there is a reachable cycle $C$  with at least one  edge labeled by a big priority and
the maximal priority  in $C$ is odd. % edge with the  priority labeling good for \I.
\end{lem}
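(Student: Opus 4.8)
The plan is to view $G_\sigma$ as a one-player arena controlled by \Ip, the only remaining choices being \Ip's: she picks the initial input symbol, and at each node $(q,a,u)$ she either \emph{accepts} (which ends the play) or \emph{interrupts}, thereby selecting an outgoing edge to some $(q',b)$. A play of $\cG^{arena}_\cA$ consistent with $\sigma$ is thus exactly a path in $G_\sigma$ from $q_{\new}$ together with \Ip's choice of interruption times; since the $i$-th \O-segment has scale $2^{-i}$, these times are essentially pinned down, except that when \I traverses a \emph{big} edge she is free to place the interruption arbitrarily deep inside the period of the relevant $UP(Q)$-sequence, and by Lemma \ref{lem:later-choice} all such placements yield the \emph{same} edge with the \emph{same} priority label. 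Two elementary facts do most of the work. (i) If from some round $N$ on \I uses only small edges, then $\lim t_i<\infty$: a small interruption at round $i$ happens inside the lag of the chosen $UP(Q)$-word, so it adds at most $d_Q\cdot 2^{-i}$ to the elapsed time, and $\sum_i d_Q 2^{-i}<\infty$. (ii) Whenever \I traverses a big edge at round $i$, she may place the interruption at a period position far enough out carrying the prescribed target state (such positions exist by periodicity and Lemma \ref{lem:later-choice}); then at least one unit of time is gained in round $i$.

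For the direction ``(A) or (B) $\Rightarrow$ \I wins'': if (A) holds, \I drives the play along $G_\sigma$ to the node $(q,a,u)\notin F$ (realizing the finitely many interruptions arbitrarily) and then accepts; the play is finite and its last state is not in $F$, so \O loses. If (B) holds, \I first reaches the cycle $C$ and then loops around it forever; by fact (ii) she times the (at least one) big edge of $C$ on each loop so that at least one unit of time is gained per loop, hence $\lim t_i=\infty$; and eventually only the edges of $C$ are traversed, each infinitely often, so the maximal priority occurring infinitely often equals the maximal priority in $C$, which is odd. In both cases \I wins.

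For the converse I prove the contrapositive: assuming neither (A) nor (B) holds, I show that $\sigma$ is winning for \Op, so \I does not win. Let $\pi$ be an arbitrary play consistent with $\sigma$. If $\pi$ is finite, then \I accepted at some reachable node $(q,a,u)$, and by the negation of (A) this node lies in $F$, so \O wins. If $\pi$ is infinite and \I uses big edges only finitely often, then fact (i) gives $\lim t_i<\infty$ and \O wins. The remaining case is $\pi$ infinite with infinitely many big edges; if $\lim t_i<\infty$ then \O wins, so assume $\lim t_i=\infty$. Let $E_\infty$ be the set of edges traversed infinitely often in $\pi$ and $V_\infty$ the set of vertices visited infinitely often. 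Then $(V_\infty,E_\infty)$ is reachable and strongly connected — from any late traversal of an edge $e\in E_\infty$ the play later traverses $e$ again, so the target of $e$ reaches the source of $e$ within $E_\infty$ — it contains a big edge (by finiteness of the arena some big edge is traversed infinitely often), and it contains an edge whose priority $p^*$ equals the maximal priority occurring infinitely often in $\pi$. Splicing, inside $(V_\infty,E_\infty)$, a cyclic walk through the big edge with one through the $p^*$-edge produces a reachable cycle with a big edge all of whose priorities are $\le p^*$ and one of which is $p^*$; i.e., a reachable cycle with a big edge and maximal priority $p^*$. By the negation of (B), $p^*$ is even, hence \O wins $\pi$. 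This exhausts all cases.

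The main obstacle is this last case. One must be careful that the parity part of the winning condition bites only on plays of infinite duration, so the divergence $\lim t_i=\infty$ caused by infinitely many big interruptions cannot be taken for granted and has to be separated out; and one must extract from the asymptotic behaviour of a single play a \emph{reachable} cyclic structure carrying simultaneously a big edge and the dominating odd priority, which is exactly where the strong connectivity of $E_\infty$ and Lemma \ref{lem:later-choice} (big interruptions at different depths are interchangeable and keep the same priority) enter.
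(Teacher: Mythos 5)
Your proposal is correct and follows essentially the same route as the paper: the forward direction uses the same two facts (big interruptions can be delayed arbitrarily within the period to force $\lim t_i=\infty$, and small-only tails bound the added time by $\sum d_Q 2^{-i}$), and the converse extracts a reachable cycle containing a big edge and the dominant odd priority from the infinitely-often-traversed edges. Your contrapositive framing and the explicit strong-connectivity argument for $(V_\infty,E_\infty)$ are just a more detailed rendering of the paper's ``take a sub-play that starts and ends at $e_p$ and passes through $e_b$'' step.
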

\begin{proof}
  It is clear that if (A) holds, then \I wins against $\sigma$. She only needs to reach such a  node $(q,a,u)\not \in F$ and then stop playing.
If (B) holds, then \I can reach such a cycle $C$ and then move along its edges. Moreover, when she  goes over an edge labeled by a big priority $p$,
she can choose a time $t$ which is at distance at least one from the time when the corresponding edge to $(q,a,u)$ is taken (by Lemma \ref{lem:later-choice}, she can choose times as big as she wishes).
This will ensure that every round along $C$ takes at  least one time unit; therefore, $\lim t_i=\infty$. Since the maximal priority on $C$ is odd, this play is winning for \Ip.

Now, we assume that \I wins and show that (A) or (B) holds.
If (A) does not hold, then a winning play for \I must be infinite.
We are going to find a cycle  as described in (B).
If  only small priorities appear infinitely often, then
the duration of the play is finite and it cannot be winning for \Ip.
Indeed, assume that after  $t_j$  only small priorities appear in the play. Recall that $d_Q$ is an upper bound on the length of lags
for $UP(Q)$ moves. Then, $t_{i+1}-t_i< d_Q\times 2^{-i}$ for all $i\geq j$.
Hence, $\lim t_i \leq t_j+2d_Q<\infty$ and \I loses.

Therefore, there is an edge  $e_b$  with a big priority that appears infinitely often.
Let $p$ be the maximal priority that appears infinitely often (it is odd). Then, some edge $e_p$ with $p$ appears infinitely often,
and there is a sub-play that starts and ends at $e_p$ and passes through $e_b$. This sub-play is a cycle that passes through an edge with a big
priority and the maximal priority of its edges is    $p$ and $p$ is odd.
%
% If  $\pi$ is a winning play, then it contains an edge $e$ with such that the edge appears in $\pi$ infinitely often and its priority $p$ is odd and the  maximal between priorities that appears infinitely often in $\pi$.
%If from some point  on all edges in $\pi$ have a small priority, then $t_{i+1}-t_i< m_Q\times 2^{-i}$ for all $i$ after this point. Therefore,
%$\lim t_i <\infty$ and this contradicts that $\pi$ is a winning play for $I$.
\end{proof}

There are only finitely many   almost positional   strategies for \O with the time scale $2^{-i}$ on the $i$-th move (in $\cG^{arena}_\cA$), and  by Lemma
\ref{lem:decide-strategy-iswinning},  it is decidable  whether such a  strategy is winning,  and \O wins  iff it has such a winning strategy. Hence, we obtain:
\begin{lem}
  It is decidable whether \O has a wining strategy $\cG^{arena}_\cA$.
\end{lem}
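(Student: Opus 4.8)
The plan is to assemble the chain of reductions established above into an effective procedure. Recall that, by Lemma~\ref{lem:red} and Lemma~\ref{lem:eq-aut-game}, \O wins $\cG_\cA$ iff there is a C-operator implementing $\Psi$, and by the successive reduction lemmas this holds iff \O has a simple winning strategy in $\cG_\cA$, iff \O wins $\cG^{arena}_\cA$, iff \O has an almost positional simple winning strategy in $\cG^{arena}_\cA$, and finally iff \O has such a strategy in which the $i$-th timed move has scale $2^{-i}$. So it suffices to decide the existence of an almost positional simple winning strategy for \O on $\cG^{arena}_\cA$ with this fixed time scale.

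Next I would observe that the arena $\cG^{arena}_\cA$ is a finite, effectively computable object: its node set $\{q_{\new}\}\cup Q\times\Sigma_{in}\cup Q\times\Sigma_{in}\times UP(Q)$ is finite because $UP(Q)$ is finite and computable from $\cA$ (hence from $\Psi$), and its labeled edge set is finite and constructible, as noted where the arena is defined. An almost positional strategy for \O is precisely a choice, at each \O-node $(q,a)$, of one of its finitely many outgoing edges; hence there are only finitely many almost positional strategies and they can all be listed. The $2^{-i}$ time schedule is fixed and contributes nothing further to this enumeration.

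Finally, for each candidate strategy $\sigma$ I would form the one-player subarena $G_\sigma$ by keeping, at each \O-node, only the edge prescribed by $\sigma$ (retaining all inherited priority/big-small edge labels), and then invoke Lemma~\ref{lem:decide-strategy-iswinning}: \I beats $\sigma$ iff in $G_\sigma$ there is a reachable node $(q,a,u)\notin F$, or a reachable cycle carrying at least one big-labeled edge whose maximal priority is odd. Both are standard conditions on a finite directed graph---reachability of a vertex, and existence of a reachable cycle meeting a priority/label constraint---and are decidable by routine graph search. Thus for each of the finitely many $\sigma$ we can decide whether it is winning, and \O wins $\cG^{arena}_\cA$ iff some $\sigma$ is winning, which settles decidability.

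Since every genuinely hard step has already been discharged by the preceding lemmas, I do not expect a real obstacle in this last step; the only points to be careful about are bookkeeping---confirming that the set of almost positional strategies is genuinely finite (which follows from finiteness of $UP(Q)$ and of the arena) and that the graph conditions of Lemma~\ref{lem:decide-strategy-iswinning} are checked on the correct subarena $G_\sigma$ with its inherited labels rather than on $\cG^{arena}_\cA$ itself.
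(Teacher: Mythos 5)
Your proposal is correct and takes essentially the same route as the paper: reduce to the finitely many almost positional simple strategies with the fixed $2^{-i}$ time scale, enumerate them on the finite computable arena, and test each one via the graph-theoretic criterion of Lemma~\ref{lem:decide-strategy-iswinning} on the one-player subarena $G_\sigma$. The only cosmetic difference is that you restate the whole reduction chain up front, whereas the paper invokes only the final equivalence with well-scaled almost positional strategies; the substance is identical.
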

Recall  \O has a winning strategy in $\cG^{arena}_\cA$ iff \O has a winning strategy in $\cG_\cA$ iff \O has a winning strategy in $\cG_\Psi$ iff
there is a causal operator that implements $\Psi$.  Therefore, we obtain:
\begin{thm}[Decidability of Synthesis for RC Signals]\label{th:dec-synth-rc}
  The Church synthesis problem over RC signals  is decidable.
\end{thm}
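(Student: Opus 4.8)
The statement is the endpoint of the chain of reductions built in this section, so the proof is essentially bookkeeping: one traces the equivalences. Given $\Psi$, first compute (by Büchi's theorem) a deterministic parity automaton $\cA$ equivalent to $\Psi^D$. Then, in order: there is a C-operator implementing $\Psi$ over RC signals iff \O wins $\cG_\Psi$ (Lemma~\ref{lem:red}); iff \O wins $\cG_\cA$ (Lemma~\ref{lem:eq-aut-game}); iff \O has a \emph{simple} winning strategy in $\cG_\cA$ (the reduction-to-simple-strategies lemma); iff \O wins the finite-arena timed game $\cG^{arena}_\cA$ (the following lemma); iff \O has an almost positional simple winning strategy there (the almost-positional lemma); iff \O has such a strategy in which the $i$-th time move uses scale $2^{-i}$ (the $2^{-i}$-scale lemma).

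Since $\cG^{arena}_\cA$ is a finite graph, an almost positional strategy fixes one outgoing edge at each \O-node, and the time scales are now prescribed, there are only finitely many candidate strategies, each computable from $\cA$. For a fixed candidate $\sigma$ it is decidable whether $\sigma$ is winning: by Lemma~\ref{lem:decide-strategy-iswinning}, \I beats $\sigma$ iff the one-player subarena $G_\sigma$ has either a reachable node $(q,a,u)\notin F$ or a reachable cycle containing a big-priority edge whose maximal priority is odd, and both are decidable finite-graph reachability properties. Testing all candidates therefore decides whether \O wins $\cG^{arena}_\cA$, hence whether $\Psi$ is implementable by a C-operator over RC signals, which is the assertion of the theorem.

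I expect no real obstacle at this final step; the substance is entirely in the lemmas invoked above. If forced to name the delicate points, they are: the passage from simple to almost positional strategies, where \I's winning strategies $\mu_u$ and $\mu_{\neg u}$ in two auxiliary arenas are interleaved and one must argue that an infinitely alternating play still has odd maximal recurring priority; and, in Lemma~\ref{lem:decide-strategy-iswinning}, the duration argument showing that a play using only small priorities cofinally satisfies $\lim t_i<\infty$, so that any infinite \I-winning play must traverse a big-priority edge. Once those are granted, assembling them into the decidability statement is immediate.
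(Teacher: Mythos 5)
Your proposal is correct and follows exactly the paper's own argument: the same chain of equivalences (operator existence $\Leftrightarrow$ \O wins $\cG_\Psi$ $\Leftrightarrow$ $\cG_\cA$ $\Leftrightarrow$ simple $\Leftrightarrow$ $\cG^{arena}_\cA$ $\Leftrightarrow$ almost positional $\Leftrightarrow$ $2^{-i}$-scaled), followed by enumerating the finitely many candidate strategies and testing each via Lemma~\ref{lem:decide-strategy-iswinning}. The delicate points you flag (the interleaving of $\mu_u$ and $\mu_{\neg u}$, and the small-priority duration bound) are indeed where the paper's work lies.
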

\section{The Church Synthesis Problem over FV Signals is Decidable}\label{sect:ch-fv}
In this section a proof of the decidability of Church's synthesis problem over FV signals is outlined.
\synth*
The proof is very similar to the proof of Theorem \ref{th:dec-synth-rc}, and does not require  novel insights or techniques. Here, we explain   the differences.

We start with a game.
Consider the following two-player game $\cG_\Psi$.  During the play  \I constructs a (timed $\om$-sequence for)  input signal  $X$ and \O constructs  an output $Y$.
 
Round 0. Set $t_0:=0$.
\begin{enumerate}\item

 \I chooses a value    $a_0\in \Sigma_{in}$  which holds at $t_0$.  % and for a while after  $t_0$.
 \item  \O responds  with $b_0\in \Sigma_{out}$.
 \item \I chooses $a^d_0\in \Sigma_{in}$ to hold for a while after $t_0$  (here and below the superscript
 $d$ in $b^d$ indicates that the duration of  $b$   is non-singular, i.e., this input $b$ lasts for   a positive amount of time).
           \item \O responds with a timed $\omega$-sequence (equivalently, with a signal) $Y_0$ on $(t_0,\infty)$.

      Now, \I has two options:

      A) \I accepts. In this case, the play ends. \I has  constructed $X$ which is  $a_0$ at $t_0=0$ and $a^d_0$ at  all $t> 0$, and \O has constructed $Y$ which is $b_0$ at $t_0$ and $Y_0$   -  the signal chosen by \O in $(t_0,\infty)$.

 B)  \I interrupts at some $t_1>t_0$ by choosing left or right discontinuity in the input.
  \begin{description}
      \item[Discontinuity from the left] \I chooses $a_1\neq a^d_0$  and the game proceeds  to step (1) of round 1, or  %\I  sets $a_1= a^d_0$  and 
      \item[Discontinuity from the right]  \I  sets $a_1= a^d_0$  and   $a^d_1\neq  a^d_0$ and the game proceeds  to step (3) of round~1.
      \end{description}

    %  B)  \I interrupts at some $t_1>t_0$ by choosing
%    %  \begin{enumerate}
%     % \item
%     $a_1\neq a_0$ which holds at $t_1$ and for a while after  $t_1$.  % and going to step 1 of round 1.
%      %\item $a^d_1\neq  a^d_0$ and going to step 3 of round 1.
%      %\end{enumerate}
%      $X'_0$ constructed by \I is $a_0$   in  $[t_0,t_1)$  and is $a_1$ at $t_1$,
%       %while in (b) $a^d_0$ at $t_1$ and $a^d_1$ for a while after $t_1$;
%        $Y'_0$ constructed by \O  is  the restriction of $Y_0$
%        on the interval $[t_0,t_1)$. Now,  proceed  to step 1 of round 1.
%
\end{enumerate}
Round $i+1$.
\begin{enumerate}
\item
\O chooses $b_{i+1}\in \Sigma_{out}$ to hold at $t_{i+1}$.
\item
\I
 chooses $a^d_{i+1}$ to hold for     a positive amount of time  after $t_{i+1}$.
\item
  \O responds with a timed $\omega$ sequence $Y'_{i+1}$  (equivalently, with a signal) on $(t_{i+1},\infty)$.
\item Now, \I has two options.

% A) \I accepts. In this case the play ends. The signal $X$ constructed by \I coincides with  $X'_i$ on $[0,t_{i+1})$ and equals $a_{i+1}$ at $t_{i+1}$ and equals  $a^d_{i+1}$ on   $(t_{i+1},\infty)$. The signal $Y$ constructed by \O coincides with $Y'_j$ on $(t_j,t_{j+1})$ equals $b_{i+1} $ at $t_{i+1}$ and coincides  with $Y'_{i+1}$ on $[t_{i+1},\infty)$.

A) \I accepts. In this case the play ends. The signal $X$ constructed by \I 
is $a_j$ at $t_j$, is $a^d_j$ on $[t_j,t_{j+1}) $ for $j\leq i$,  is $a_{i+1} $ at $t_{i+1}$ and is equal $a^d_{i+1}$ on  $(t_{i+1},\infty)$.
%and coincides with  $X'_i$ on $[0,t_{i+1})$ and equals $a_{i+1}$ at $t_{i+1}$ and equals  $a^d_{i+1}$ on   $(t_{i+1},\infty)$. 
The signal $Y$ constructed by \O is defined similarly, by concatenating the signals constructed by \O at rounds $0,\dots, i+1$.
%coincides with $Y'_j$ on $(t_j,t_{j+1})$ equals $b_{i+1} $ at $t_{i+1}$ and coincides  with $Y'_{i+1}$ on $[t_{i+1},\infty)$.

       B) \I  interrupt at some $t_{i+2}>t_{i+1}$ by choosing:
      \begin{enumerate}
      \item $a_{i+2}\neq a^d_{i+1}$  in this case round $i+2$ starts at step (1) (in this case $X$ is left discontinuous at $t_{i+1}$) or by
      \item    $a_{i+2}=a^d_{i+1}$  and    $a^d_{i+2}\neq  a^d_{i+1}$. In this case round $i+2$ starts at step (3)  (in this case $X$ is right discontinuous at $t_{i+2}$).
      \end{enumerate}
%\end{enumerate}
\end{enumerate}  

  The  signal $X$ constructed by \I  corresponds to the timed sequence $(a_0, a^d_0,0)(a_1,a^d_1, t_1) \dots $ $(a_i,a^d_i,t_i)\dots $,
     and  the signal $Y$   constructed by  \O is the concatenatin of the signals constructed by \O in each round..

\emph{Winning conditions.}     \O wins the play iff $\Psi(X,Y)$ holds or \I interrupts infinitely often and
the duration of the play is finite, i.e., $\lim t_n< \infty$, where $t_n$ is the time of $n$-th interruption.

Similarly, to Lemma \ref{lem:red}, we have
 \begin{lem}[Reduction to games] \label{lem:red-fv}
    \O wins $\cG_\Psi$  if and only if there is a C-operator that implements $\Psi$.
  \end{lem}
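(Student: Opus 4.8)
The plan is to follow the proof of Lemma~\ref{lem:red} essentially verbatim, the only new work being to accommodate the richer round structure of $\cG_\Psi$ over FV-signals, in which \I separates a singular value $a_i$ (holding at the isolated point $t_i$) from a durational value $a^d_i$ (holding on $(t_i,t_{i+1})$) and interrupts either by a left- or by a right-discontinuity. The first thing I would record is an observation that will be used in both directions: playing against a fixed FV-signal $X$ while ignoring \Op's moves, \I has \emph{exactly one} way to build $X$. Namely, \I interrupts precisely at the discontinuity points $0<\tau_1<\tau_2<\cdots$ of $X$; at the step corresponding to $\tau_i$ she sets $a_i:=X(\tau_i)$, $a^d_i:=X(\tau_i^{+})$, and declares a left-discontinuity when $X(\tau_i^{-})\neq X(\tau_i)$ and a right-discontinuity otherwise; and, if $X$ is eventually constant, she accepts at the round following its last discontinuity. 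Finite variability forces the $\tau_i$ to be discrete and unbounded, so this play $\pi_X$ either has $\lim t_n=\infty$ or is finite, and it is the unique play in which \I constructs $X$.

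For the implication ``\O wins $\cG_\Psi\Rightarrow$ there is a C-operator implementing $\Psi$'', let $\sigma$ be an \Op-winning strategy and put $F(X):=Y_X$, the signal \O constructs in $\pi_X$ when \O follows $\sigma$; one checks this is a genuine FV-signal on all of $[0,\infty)$, since only finitely many rounds contribute to any bounded interval and each contributes finitely many discontinuities. Because $\pi_X$ has $\lim t_n=\infty$ or is finite, the auxiliary winning clause ``infinitely many interruptions and finite duration'' never applies to $\pi_X$, so \Op's victory forces $\Psi(X,Y_X)$. Causality of $F$ is immediate: the value of $Y_X$ at $t$ is determined by \Op's response ($b_i$ or $Y_i$) in the round containing $t$, which depends only on the play-history up to that round, hence only on $X\restriction[0,t]$ (the durational value of that round being the constant value that $X$ takes on an interval around $t$). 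Thus $F$ is a C-operator and $\Sig\models\forall X.\,\Psi(X,F(X))$.

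For the converse, given a C-operator $F$ with $\Sig\models\forall X.\,\Psi(X,F(X))$, \O plays by simulating $F$ on its ``current hypothesis''. Whenever a round has fixed the play up to and including the isolated point $t_i$ (so $X\restriction[0,t_i]$ is determined), \O outputs $b_i:=F(\hat X)(t_i)$ for an arbitrary FV-extension $\hat X$ of $X\restriction[0,t_i]$ — well-defined by causality; once \I has committed the durational value $a^d_i$, \O outputs $Y_i:=F(X_i)\restriction(t_i,\infty)$, where $X_i$ agrees with the play so far on $[0,t_i]$ and equals $a^d_i$ on $(t_i,\infty)$. Since the eventual input $X$ agrees with $X_i$ on $[0,t_{i+1})$, and even at $t_{i+1}$ when \I's next interruption is a right-discontinuity (then $X(t_{i+1})=a^d_i=X_i(t_{i+1})$), causality of $F$ gives that \Op's committed output coincides with $F(X)$ on $[t_i,t_{i+1})$, and also at $t_{i+1}$ in the right-discontinuity case. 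Taking limits, the signal \O constructs is exactly $F(X)$, so $\Psi(X,F(X))$ holds whenever the play yields a genuine signal (a finite play, or an infinite play with $\lim t_n=\infty$); and when the play has infinitely many interruptions but $\lim t_n<\infty$, \O wins by the second clause. Hence \O wins $\cG_\Psi$.

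The step I expect to require the most care is precisely this last consistency check: \O must commit to the entire output $Y_i$ on $(t_i,\infty)$ and to the singular value $b_i$ \emph{before} learning when — or whether, or of which type — \I will next interrupt, so one must verify that the simulation stays faithful no matter how \I acts afterward. As the argument indicates, this is exactly what causality of $F$ delivers, since the committed value $F(X)(s)$ at any $s$ in the committed region depends only on $X\restriction[0,s]$, which agrees with $X_i\restriction[0,s]$ (and with $\hat X$ at a right-discontinuity). Everything else — the forcedness of $\pi_X$, the case distinction between left and right discontinuities, and the handling of round $0$ — is routine bookkeeping modeled on the proof of Lemma~\ref{lem:red}.
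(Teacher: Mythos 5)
Your proposal is correct and follows the same route as the paper, which proves this lemma by the same simulation argument as Lemma~\ref{lem:red}: in one direction \I ignores \Op's moves to build an arbitrary $X$ and the resulting \Op-responses define the C-operator, and in the other \O plays $F$ on the current hypothetical extension of the input, with causality guaranteeing consistency across interruptions. Your additional care about the singular/durational split and the fact that the value at a right-discontinuity point is inherited from the previously committed signal (so causality is preserved) is exactly the bookkeeping the paper leaves implicit.
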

  Note that in this game \O has two types of moves.  In step (1) she chooses a letter in $\Sigma_{out}$ and in Step (3) she chooses a $\Sigma_{out}$-signal over open interval $(t,\infty)$.
 Similarly  to the proof of Theorem \ref{th:dec-synth-rc}, we are going to restrict the \O moves in step (3).
The first restriction is a time schedule.
When  \O chooses a timed sequence (equivalently a signal), we require that the distance between consecutive time points in the sequence  is the same.
We call such a sequence  \emph{simple timed}.
Another restriction is on the sequence of values.
We require that this sequence is ultimately periodic.
But first, we consider a game over an automaton in which \O chooses signals over states
instead of signals over $\Sigma_{out}$. She chooses   the run which  corresponds to a choice of an input  over $\Sigma_{out}$.

Consider the game $G_\cA$  over a deterministic parity automaton $\cA:=  (Q,\Sigma_{in}\times \Sigma_{out} ,\delta,$\linebreak$q_{init}, pr)$  equivalent to $  \Psi^D$.
  In this game \O
will choose % timed $\om$-sequences of states, instead of $\om$-sequences of $\Sigma_{out}$.
signals over $Q$ instead of signals over  $\Sigma_{out}$.

 \begin{center}
   Game $G_\cA$.
 \end{center}
Round 0. Set $t_0:=0$.
\begin{enumerate}\item

 \I chooses a value    $a_0\in \Sigma_{in}$  which holds at $t_0$.  % and for a while after  $t_0$.
 \item  \O responds  with $q_0:=\delta(q_{init},(a_0,b_0))$ for some $b_0\in \Sigma_{out}$.
 \item \I chooses $a^d_0\in \Sigma_{in}$ to hold for a positive amount of time  after $t_0$.
           \item \O responds with a  signal $Y_0$ on $(t_0,\infty)$ over $Q $ such that $Y_0$ is a run of $\cA$  (from $q_0$) when the signal over $\Sigma_{in}$ is fixed to be  $a^d_0$.

      Now, \I has two options.

      A) \I accepts. In this case, the play ends. \I has constructed $X$ which is  $a_0$ at $t_0=0$ and $a^d_0$ at  all $t> 0$, and \O has constructed $Y$ which is $q_0$ at $t_0$ and is $Y_0$   -  the signal chosen by \O in $(t_0,\infty)$.

 B)  \I interrupts at some $t_1>t_0$ by choosing left or right discontinuity in the input.
Set $q_1$ to be the left limit of $Y_0$ at $t_1$. 
  \begin{description}
      \item[Discontinuity from the left] \I chooses $a_1\neq a^d_0$ % and set $q_i$ to be the left limit of $Y_0$ at $t_i$, 
           and the game proceeds  to step (1) of round 1, or  %\I  sets $a_1= a^d_0$  and 
      \item[Discontinuity from the right]  \I  sets $a_1= a^d_0$  and  $a^d_1\neq  a^d_0$ and the game proceeds  to step (3) of round~1.
      \end{description}

    %  B)  \I interrupts at some $t_1>t_0$ by choosing
%    %  \begin{enumerate}
%     % \item
%     $a_1\neq a_0$ which holds at $t_1$ and for a while after  $t_1$.  % and going to step 1 of round 1.
%      %\item $a^d_1\neq  a^d_0$ and going to step 3 of round 1.
%      %\end{enumerate}
%      $X'_0$ constructed by \I is $a_0$   in  $[t_0,t_1)$  and is $a_1$ at $t_1$,
%       %while in (b) $a^d_0$ at $t_1$ and $a^d_1$ for a while after $t_1$;
%        $Y'_0$ constructed by \O  is  the restriction of $Y_0$
%        on the interval $[t_0,t_1)$. Now,  proceed  to step 1 of round 1.
%
\end{enumerate}
Round $i+1$.
\begin{enumerate}
\item
\O updates  $q_{i+1} :=\delta(q_{i+1},(a_{i+1},b_{i+1}))$ for some $b_{i+1}\in \Sigma_{out}$ to hold at $t_{i+1}$.
\item
\I
 chooses $a^d_{i+1}$ to hold for a positive amount of time  after $t_{i+1}$.
\item
  \O responds  with a signal $Y_{i+1}$ over $Q$ on $(t_{i+1},\infty)$
  which is a run of $\cA$ from $q_{i+1}$ when the signal over $\Sigma_{in}$ is fixed to be $a^d_{i+1}$.
\item Now, \I has two options:

 A) \I accept. In this case the play ends. 
 %The signal $X$ constructed by \I coincides with  $X'_i$ on $[0,t_{i+1})$ and equal to $a_{i+1}$ at $t_{i+1}$ and equal to $a^d_{i+1}$ in   $(t_{i+1},\infty)$.
  The signal $Y$ constructed by \O coincides with $Y_l$ on $(t_l,t_{l+1})$,  is equal to $q_{l+1} $ at $t_{l+1}$ (for $l\leq i$),  and coincides  with $Y_{i+1}$ on $(t_{i+1},\infty)$.

       B) \I  interrupt at some $t_{i+2}>t_{i+1}$. Set $q_{i+2}$ to be the left limit of $Y_{i+1}$ at $t_{i+2}$. \I chooses either
      \begin{enumerate}
      \item $a_{i+2}\neq a^d_{i+1}$  and then round $i+2$ starts at step (1) (in this case $X$ is left discontinuous at $t_{i+2}$) or by
      \item    $a_{i+2}=a^d_{i+1}$  and    $a^d_{i+2}\neq  a^d_{i+1}$. In this case round $i+2$ starts at step (3)   (in this case $X$ is right discontinuous at $t_{i+2}$).
      \end{enumerate}
%\end{enumerate}
\end{enumerate}  
\emph{Winning conditions.}  \O wins if either the sequence of states produced by \O is accepted by $\cA$, or there were infinitely many moves, but $\lim t_i$ is finite.

Similarly, to Lemma \ref{lem:eq-aut-game}, we have
\begin{lem}[$\cG_\psi$ and $\cG_\cA$ are equivalent]
  \O wins in $\cG_\Psi$ if and only if \O wins in $\cG_\cA$.
\end{lem}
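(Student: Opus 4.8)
The plan is to exploit that $\cG_\Psi$ and $\cG_\cA$ coincide on player \Ip's moves and on the ``divergence'' disjunct of the winning condition, and differ only in what \O outputs ($\Sigma_{out}$ versus $Q$) and in the remaining disjunct; so I would set up an explicit move-by-move correspondence between \O's plays in the two games and verify that it preserves winning. Concretely, in $\cG_\Psi$ player \O contributes at each round a point value $b_i\in\Sigma_{out}$ at $t_i$ and a $\Sigma_{out}$-signal on the following open interval, whose concatenation is the output $Y$; in $\cG_\cA$ she instead contributes the state $q_i=\delta(q_{i-1},(a_i,b_i))$ reached under some output letter $b_i$, together with a $Q$-signal on the next interval that is the forced run of $\cA$ while the input is held at $a^d_i$. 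The translation I have in mind replaces each $\Sigma_{out}$-output by the run of $\cA$ it induces, and conversely reads off, transition by transition, a witnessing output letter from each state change.

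First I would record that, because $\cA$ is deterministic, the input $X$, the output $Y$, and the sample sequence $\widetilde{\tau}$ that the play induces (the interruption times $t_i$ together with the sample points of \O's interval-signals) determine a unique $\omega$-word representing $(X,Y)$ and a unique run $\rho$ of $\cA$ on it; and conversely any $Q$-valued play of \O admits a witnessing output $Y$, since \O is only permitted to choose states consistent with some $b_i$. This gives a translation in both directions between \Op-strategies sending a play producing $(X,Y)$ to the play producing $\rho$, leaving $X$ and the interruption schedule untouched. I would then match the winning conditions. The divergence disjunct ($\lim t_n<\infty$ after infinitely many interruptions) is verbatim identical and is preserved because the times $t_i$ are unchanged. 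For the other disjunct I would invoke that $\cA$ recognizes $\Psi^D$, which by Theorem~\ref{thm:LangToSig} defines $D(L)$ for $L$ the FV-signal language defined by $\Psi$; since $L$ is $\MLO$-definable it is speed independent, so an $\omega$-word representing $(X,Y)$ lies in $D(L)$ iff $(X,Y)\in L$, i.e. iff $\Psi(X,Y)$ holds. Hence $\rho$ is accepting iff $\Psi(X,Y)$. Combining the two disjuncts shows a play is winning for \O in $\cG_\Psi$ exactly when the corresponding play is winning in $\cG_\cA$, which yields the equivalence in both directions.

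The step I expect to require the most care is the claim that whether $\rho$ is accepting depends only on the signal $(X,Y)$ and not on the particular, possibly over-sampled (stuttering), $\omega$-word the play happens to produce --- in particular, in the backward direction different witnessing outputs $Y$ may share the same state run. I would discharge this by using precisely that $\Psi^D$ is $D(L)$, the union of all $D$-representations of members of $L$, together with speed-independence, so that all samplings of $(X,Y)$ lie simultaneously inside or outside $\Psi^D$, and so that outputs sharing a run share their $\Psi$-value. The only genuinely FV-specific bookkeeping, as opposed to the right-continuous argument of Lemma~\ref{lem:eq-aut-game}, is keeping the point values $a_i,b_i$ aligned with the interval values $a^d_i$ and with \Ip's left/right-discontinuity choices across the two games; since the $D$-representation already records both a point value and an interval value at each sample, this alignment is routine.
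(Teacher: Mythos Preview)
Your proposal is correct and is essentially the argument the paper has in mind; the paper simply declares the lemma ``immediate'' (in the RC case it writes ``The following is immediate'' and in the FV case ``Similarly, to Lemma~\ref{lem:eq-aut-game}, we have'') and gives no proof, so your write-up is a faithful unpacking of that one-line claim rather than a different approach. The one point you flag as delicate---that acceptance of the run depends only on $(X,Y)$ and not on the particular sampling or on which witnessing output letters were chosen---is exactly the content of $\cA$ recognising $\Psi^D=D(L)$, and you resolve it correctly.
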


Exactly as in  Subsection \ref{sub:ult-periodic} we define $\equiv_{state}$-equivalence over finite strings and 
an equivalence over $\omega$-strings. We  define a finite set of ultimately periodic
$\om$-strings $UP(Q)$ ($=UP_\cA(Q)$)
and   say that  a move $(q_0,t_0)(q_1.t_1)  \dots (q_i,t_i) \dots $  is simple if    $q_0q_1 \dots q_i \dots\in UP(Q)$ and $t_{i+1}=t_i+\Delta$ for  some $\Delta$. A strategy is simple if it uses
only simple moves.
We have
\begin{lem}[Reduction to simple strategies]
  \O has a winning strategy if and only if \O has a simple winning strategy.
\end{lem}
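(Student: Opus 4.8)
The statement to prove is the Reduction to simple strategies lemma for FV signals: Player \O has a winning strategy in $\cG_\cA$ if and only if \O has a simple winning strategy. As the paper itself says, the proof is essentially the same as the corresponding lemma for RC signals (``Reduction to simple strategies in $\cG_\cA$''), so I would proceed by adapting that argument. One direction is trivial: a simple winning strategy is in particular a winning strategy. For the nontrivial direction, let $\sigma$ be an arbitrary winning strategy for \O in the FV game; I want to build a simple winning strategy $\tau$ by replacing each of \Op's ``step (4)''/``step (3)'' signal-valued moves $u$ (a timed $\om$-sequence over $Q$ on an open interval $(t,\infty)$) by its simple normalization $u^s$. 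Following the RC template, $u^s$ is the simple timed $UP_\cA(Q)$-sequence whose untimed part is the representative of the $\equiv_{state}$-class of $untimed(u)$, rescaled so that its lag fits inside the same initial segment that $u$ used — concretely with scale $first(u)/d_Q$, where $d_Q$ bounds the lag length. Note that \Op's step (1)/(2) moves (choosing a single letter $b_i\in\Sigma_{out}$, equivalently a single state update) are already finite and need no modification; only the interval-signal moves are normalized.

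\textbf{Key steps.}
First I would set up the bookkeeping: for a play $\pi$ consistent with $\tau$, I construct a shadow play $\pi^\sigma$ consistent with $\sigma$, maintaining the three invariants from the RC proof — (I-last state) the current states of $\pi$ and $\pi^\sigma$ agree (for the FV game ``current state'' means the state reached, and after an interruption the left-limit state must also match); (I-recurrent states) the set of states traversed in the segment since the last \I move is the same in both plays; and (I-duration) the elapsed time in $\pi^\sigma$ is $\geq$ that in $\pi$. When $\sigma$ prescribes an interval move $u$ and $\tau$ plays $u^s$: if \I accepts, both plays terminate with the same $\Inf$-set of states, so one is accepting iff the other is; if \I interrupts $u^s$ at a state $q$ at time $t$, there are two cases exactly as in the RC proof — (A) the interruption falls inside the period of $u^s$, and then by Lemma \ref{lem:up} and Lemma \ref{lem:later-choice} we can find a later position in $u$ realizing the same state $q$ with a $\equiv_{state}$-equivalent prefix, so we interrupt $u$ there, gaining duration and preserving the recurrent-state invariant; (B) the interruption falls inside the lag of $u^s$, so $t<first(u)$, and we interrupt $u$ at the corresponding lag position, which occurs at time $\geq first(u)$, again preserving all invariants. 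The one genuinely new wrinkle relative to the RC case is the left/right discontinuity distinction in how \I may interrupt: when \I chooses a right discontinuity (step 3 of the next round) versus a left discontinuity (step 1), the shadow play must mirror the same choice and the same left-limit state $q_{i+1}$; since $u$ and $u^s$ are $\equiv_{state}$-equivalent and both are runs of $\cA$, the left-limit state at any realizable interruption position is available in both, so this matching goes through. Finally, I would argue that $\pi$ is winning for \O whenever $\pi^\sigma$ is: for finite plays it is immediate from the $\Inf$-set agreement (a finite play's acceptance is determined by its final state, equivalently by the ultimately periodic tail, which is preserved); for infinite plays, (1) the recurrent-state invariant gives $\Inf$-set equality hence equal acceptance status, and (2) the duration invariant gives that if time converges in $\pi^\sigma$ to some $t^\sigma<\infty$ then in $\pi$ time converges to a value bounded by $t^\sigma<\infty$, so the ``Zeno'' winning clause transfers as well. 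Since $\sigma$ wins all its plays, $\tau$ wins all of its, i.e. $\tau$ is a simple winning strategy.

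\textbf{Main obstacle.}
The conceptual content is entirely inherited from the RC proof, so the only real work is checking that nothing breaks under the two FV-specific complications: (i) \Op's moves now come in two flavors (single-letter step-(2) updates and interval-valued step-(4) signals), and I must be careful that only the latter are normalized while the former are copied verbatim; and (ii) \Op's interruptions now carry left/right-discontinuity data, and I must confirm that the left-limit state $q_{i+1}$ recorded at an interruption of $u^s$ is always realized at a matching position of $u$ with an $\equiv_{state}$-equivalent prefix. Both of these reduce to the structural facts already established — Lemma \ref{lem:up}, Lemma \ref{lem:later-choice}, and the congruence properties of $\equiv_{state}$ in Lemma \ref{lem:prop-equiv} — together with the observation that all strings in $UP_\cA(Q)$ are genuine runs of $\cA$ when they need to be, so left-limit states are determined identically. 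I expect the write-up to be a routine but somewhat lengthy transcription of the RC argument with these two checks inserted, which is exactly why the paper says it ``does not require novel insights or techniques.''
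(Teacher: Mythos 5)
Your proposal matches the paper's treatment: the paper gives no separate proof for the FV version of this lemma, explicitly deferring to the RC argument, and your adaptation — normalizing only the interval-valued moves to $u^s$ with scale $first(u)/d_Q$, maintaining the last-state/recurrent-states/duration invariants via the period-vs-lag case split, and checking that the left/right-discontinuity data and the single-letter moves of \O carry over — is exactly the intended argument. No gaps.
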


Now, we will play explicitly on a   finite (node) arena. This game corresponds to the game when \O is restricted to use only simple strategies.

Let $\cA:=(Q,\Sigma_{in}\times \Sigma_{out}, \delta, q_{\mathit{init}},pr)$ be a parity  automaton.
We construct the following arena. It has more states than an arena for right continuous game, because we need to deal also with signals which are discontinuous from the right.

The nodes  (vertex) of \O are of the form $(q,a)$ or $(q,\dag,a)$ where $q\in Q_\cA$ and $a\in \Sigma_{in}$.
At a moment $t$ a  play is in $(q,a)$  if the automaton is in state $q$ and \I chooses $a$ to hold at the moment $t$.
From this state \O should  choose an output  at $t$ and move to a node of the form $(q',\dag)$.  When a play is at 
$(q,\dag,a)$  at $t$ it indicates that $\cA$ was at state $q$ and \I chooses for $a\in \Sigma_{in}$ to holds for an  interval of  positive length after $t$.
From this node \O should respond with a signal on $(t,\infty)$. Our arena $\cG^{arena}_\cA$ is described below.

\fbox{
\begin{minipage}{0.9\linewidth}

\begin{center}
  Arena $\cG^{arena}_\cA$
\end{center}

\textbf{Nodes:} $\{q_{\new} \}\cup Q\times \Sigma_{in}\cup Q\times\{\dag\} \cup Q\times \{\dag\}\times\Sigma_{in} \cup  Q\times \Sigma_{in}\times UP(Q)$, where $Q_{\mathcal I}:=\{q_{\new}\}  \cup Q\times \{\dag\}\times\Sigma_{in} \cup  Q\times \Sigma_{in}\times UP(Q)$ is  the set of \I nodes,  and the rest is the set of \O nodes.

The set $F$  of final nodes  is $\{(q,\dag,u)\mid $ the maximal priority that appears infinitely often in $u$ is
even$\}$.

\textbf{Edges:}
\begin{enumerate}
  \item $q_{new} $ to $(q_{init},a)$ for every $a\in \Sigma_{in}$.

  \item $(q,a)$ to $(q',\dag) $ if $q'=\delta(q,(a,b))$ for some $b\in \Sigma_{out}$. 
  \item   $(q,\dag) $ to $(q ,\dag,a)$  for every $a\in \Sigma_{in}$ (here \I chooses $a$ to hold for a while).
   \item $(q_0,\dag, a)$ to  $(q_0,a,u)$, where   $ u=(q_1, q_2,\dots)\in UP(Q)$  such that $\exists b_i\in \Sigma_{out}. ~q_{i+1}=\delta(q_i,(a,b_i))$ for all $i$.
  \item There is an edge from $(q,a, u) $ to $(q',b)$ for $u\in UP(Q)$, 
   if $b\neq a$  and $q'=u(n)$ for $n=2m+1$ (this corresponds to discontinuity from the left).
   \item There is an edge from $(q,a, u) $ to $(q',\dag,b)$ for $u\in UP(Q)$, 
   if $b\neq a$  and $q'=u(n)$ for $n=2m$ (this corresponds to discontinuity from the  right).

\end{enumerate}
In (5) and (6) the edge will be labeled  by $p$ - the maximal priority among   $u(1),\dots , u(n)$  and it  will be called small if   $u(1),\dots , u(n)$  is in the lag part of $u$; otherwise, it will be called big.
The nodes also have priorities which are inherited from their $Q$ components, i.e.,
$pr(q,a)=pr(q,a,u)=pr(q,\dag)=pr(q,\dag,a)=pr_\cA(q)$.

\end{minipage}
}

\bigskip
Now, consider the following game.

\bigskip
\fbox{

\begin{minipage}{0.9\linewidth}

\begin{center}
  Timed game on  $\cG^{arena}_\cA$
\end{center}

Initial step. \I  chooses $a\in \Sigma_{in}$ and the play  proceeds to $(q_{init},a)$;  $t_0$ is set to 0.

When the play is at  node $(q,a)$,  \O chooses an edge to $(q',\dag)$ and the play is now at $(q',\dag)$.

From  $(q,\dag)$,  \I can move to  $(q,\dag,a)$.

From   $(q,\dag, a)$,  \O chooses an edge to $(q,a,u )$ and a time scale $\tau_i$. It defines a signal  $Y_i$ on $(t_i,\infty)$ as
$Y(t_i+\tau_i\times n)=u(2n)$ and $Y_i(t)=u(2n+1)$  if $t\in (t_i+\tau_i\times n, t_i+\tau_i\times (n+1))$.

From  a state $ (q,a, u)$  at $i+1$-move Player \I either (A) accepts and the play  ends, %It is winning for \O iff $ (q,a ,u)\in F$, 
or (B)
interrupts at $t_{i+1}>t_i$ by choosing  $b\neq a$ and the left or the right discontinuity at $t_{i+1}$. 
Let $q$ be  the left limit of $Y_i$ at $t_{i+1}$.

If the left discontinuity is chosen, then the play moves to $(q,b)$. 
If the right discontinuity is chosen, the play moves to $ (q, \dag,b)$. %A new round starts.

  \emph{Winning Condition.}
  \O wins a play $\pi$  if either $\pi$ is finite (this happens if \I has chosen option (A)) and its last state is in $F$ or $\pi$ is infinite and $\lim t_i$ is finite, or
  the maximal  priority that appears infinitely often %at the edges or the nodes of the play 
  is even.
  \end{minipage}
  }
  
   \begin{lem}\label{lem:fv-arena}
    \O wins $\cG^{arena}_\cA$ if and only if \O has a winning   simple strategy for $\cG_\cA$-game.
  \end{lem}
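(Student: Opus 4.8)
The plan is to check that the timed game on the finite arena $\cG^{arena}_\cA$ is just a bookkeeping reformulation of $\cG_\cA$ in which \O is forced to play \emph{simple} moves, so that its plays are in a winner-preserving bijection with those plays of $\cG_\cA$ in which \O plays only simple moves; transporting this bijection to \Op-strategies then yields both directions of the lemma. Recall that a simple \O-move in $\cG_\cA$ is a time scale $\Delta$ together with an ultimately periodic state-word $u\in UP(Q)$ that is a legal run of $\cA$ under the fixed input letter, and this is exactly the data \O must supply on leaving an arena node of the form $(q,\dag,a)$. The finitely many node types of the arena merely record the phase of the current round: $(q,a)$ records that $\cA$ is in state $q$ and \I has announced the singular input letter $a$ holding at the current discontinuity point; $(q,\dag)$ that \O has emitted a singular output, so that $\cA$ has moved (edge (2)); $(q,\dag,a)$ that \I has fixed the interval input letter $a$ (edge (3)); and $(q,a,u)$ that \O has responded with the simple-timed run $u$, a time scale being chosen simultaneously (edge (4)).

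First I would spell out the dictionary in both directions. Given a play of $\cG_\cA$ in which \O uses a simple strategy, reading off its state sequence and its interruption times $t_i$ produces, round by round, a path through the arena. The two kinds of interruption available to \I in $\cG_\cA$, namely forcing a discontinuity of the input from the left ($a_{i+1}\neq a^d_i$, so that the next round restarts at its singular-output step and \O must emit a fresh letter) or from the right ($a_{i+1}=a^d_i$ but $a^d_{i+1}\neq a^d_i$, so that the next round restarts at its signal step), correspond respectively to the edge families (5), to $(q',b)$, and (6), to $(q',\dag,b)$. The state $q'$ carried into the next round is read off the arena as $u(n)$, where $n$ is even precisely when \I interrupts \Op's simple signal at one of its sampling instants (the value of the signal there being carried over) and odd when \I interrupts strictly inside one of its constant sub-intervals (the left limit being carried over); this split is matched with \I declaring, respectively, a right or a left discontinuity of the input. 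The label of the traversed edge is the maximal priority among $u(1),\dots,u(n)$, equivalently the maximal $\cA$-priority of the states visited by \Op's signal on the traversed portion, and the edge is big or small according to whether the interruption lies in the period or in the lag of $u$; here Lemmas \ref{lem:up} and \ref{lem:later-choice} guarantee that an interruption at a period position of $u$ can equally be realised at arbitrarily later positions with $\equiv_{state}$-equivalent past, which is what makes the big edges well defined and independent of the exact interruption position. Conversely, the scales \O attaches to its moves unwind any arena play into a genuine play of $\cG_\cA$ in which \O uses a simple strategy.

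Finally I would verify that this bijection preserves the winner. On a finite play (option (A)), \O wins in the arena iff the terminal node $(q,\dag,u)$ lies in $F$, i.e.\ iff the maximal priority occurring infinitely often in $u$ is even, which is exactly the condition that the $\om$-sequence of $\cA$-states produced by \O in the corresponding $\cG_\cA$-play is accepted by $\cA$. On an infinite play, the set of states, hence of priorities, seen infinitely often is literally the same on the two sides, and the escape clause ``$\lim t_i<\infty$'' is identical, since the $t_i$ are transported verbatim. Hence a winning \Op-strategy on either side transports to a winning one on the other. The only point demanding genuine care, and hence the main obstacle, is precisely the combinatorics of this left/right-discontinuity encoding: aligning the parity of the interruption index and the big/small labelling of edges (5) and (6) with the limit behaviour of the simple-timed output signal and with the $\equiv_{state}$-data provided by Lemmas \ref{lem:up} and \ref{lem:later-choice}. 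Everything else repeats the right-continuous argument, the analogue of this lemma in Section \ref{sect:ch-rc} on the way to Theorem \ref{th:dec-synth-rc}, with the auxiliary $\dag$-nodes inserted to carry the singular output letters, and it requires no new ideas.
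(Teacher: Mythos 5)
Your proposal is correct and takes essentially the approach the paper intends: the paper states this lemma without proof (its right-continuous analogue is likewise declared immediate), treating the timed game on $\cG^{arena}_\cA$ as a direct bookkeeping reformulation of $\cG_\cA$ with \O restricted to simple moves, which is exactly the winner-preserving play-by-play dictionary you spell out. The details you supply about the left/right-discontinuity encoding via the parity of the index in $u$, and the well-definedness of the big edges via Lemmas \ref{lem:up} and \ref{lem:later-choice}, fill in precisely what the paper leaves to the reader.
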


Next, like in the proof of Theorem \ref{th:dec-synth-rc}, we are going to restrict further the strategies of \O  in $ \cG^{arena}_\cA$.

 A strategy is called almost positional, if at every node  it always chooses the same untimed move (time scales might be different).

 \begin{lem}[Almost positional winning strategy in $\cG^{arena}_\cA$]

   \O has a simple winning strategy if and only if \O has an almost positional simple winning strategy.
 \end{lem}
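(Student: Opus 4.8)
The plan is to repeat, essentially verbatim, the argument proving the identically named lemma for the right-continuous arena in Section~\ref{sect:ch-rc}, adjusting only the bookkeeping to the larger FV arena $\cG^{arena}_\cA$. The ``only if'' direction is immediate, since an almost positional simple strategy is in particular simple, so I would fix a simple winning strategy $\sigma$ for \O and build an almost positional one. As a preliminary I would note, exactly as in the RC case, that the timed game on the finite arena $\cG^{arena}_\cA$ is determined; this is what makes the combining step below produce a contradiction rather than merely a stalemate.

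The core is an induction. For an \O-node $s$ --- here $s$ ranges over the node shapes $(q,a)$, $(q,\dag)$, and $(q,\dag,a)$ --- let $\mathit{Consistent}(\sigma,s)$ be the set of untimed moves from $s$ that occur along some play consistent with $\sigma$ reaching $s$, and $out^\sigma_s:=|\mathit{Consistent}(\sigma,s)|$. I would recall that ``almost positional'' means the same untimed move is always taken at every node: a fixed outgoing edge at the $(q,a)$- and $(q,\dag)$-nodes (where no scale is involved), and a fixed $u\in UP(Q)$ at the $(q,\dag,a)$-nodes, with scales still allowed to vary. The induction would be on $\sum_s out^\sigma_s$: if every $out^\sigma_s=1$ then $\sigma$ is already almost positional; otherwise pick $s$ with $out^\sigma_s>1$ and a move $u\in\mathit{Consistent}(\sigma,s)$, and split $\cG^{arena}_\cA$ into $G_u$, where only $u$-moves (with arbitrary scales) are permitted from $s$, and $G_{\neg u}$, where only the moves of $\mathit{Consistent}(\sigma,s)\setminus\{u\}$ are permitted from $s$, the two arenas agreeing elsewhere. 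If \O wins $G_u$ or $G_{\neg u}$, the inductive hypothesis supplies an almost positional simple winning strategy there, which is also winning in $\cG^{arena}_\cA$. Otherwise \I wins both, say by $\mu_u$ and $\mu_{\neg u}$, and I would combine them into an \I strategy $\mu$ on $\cG^{arena}_\cA$: tag each \I move following a visit to $s$ as a ``$G_u$ move'' or a ``$G_{\neg u}$ move'' according to which move \O last played from $s$, let $View_u$ and $View_{\neg u}$ project a play onto its tagged subsequences, and let $\mu$ answer by $\mu_u(View_u(\pi))$ after a $G_u$ move and by $\mu_{\neg u}(View_{\neg u}(\pi))$ otherwise. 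A play $\pi$ consistent with $\mu$ either eventually stays in one sub-arena --- then it is eventually consistent with the corresponding winning \I strategy, hence won by \I --- or switches infinitely often, in which case $View_u(\pi)$ and $View_{\neg u}(\pi)$ are both infinite and consistent with $\mu_u$, $\mu_{\neg u}$, hence winning for \I, so each has divergent time and the maximal priority recurring infinitely often in it is odd; then $\lim t_i$ along $\pi$ is at least that along $View_u(\pi)$ and is thus infinite, while the priorities seen infinitely often along $\pi$ are the union of those of the two views, whose maximum is again odd. So \I wins $\pi$, contradicting that $\sigma$ is winning for \O, and the induction goes through.

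The differences from the RC proof are purely notational. The FV arena carries the extra nodes $(q,\dag)$ and $(q,\dag,a)$ that encode left/right discontinuities, and an interruption of a $u$-move now lands either in a node $(q',b)$ (left discontinuity, $u$ read at an odd position) or in a node $(q',\dag,b)$ (right discontinuity, even position), instead of always in a $(q',b)$-node. None of this touches the combinatorial skeleton --- the measure $\sum_s out^\sigma_s$, the $G_u/G_{\neg u}$ split, the view-based \I strategy, and the priority and duration accounting all carry over unchanged. I do not expect a genuine obstacle; the only point that needs a moment's care is to keep ``almost positional'' coherent across the two kinds of \O-node, and correspondingly to define a ``move'' at a $(q,\dag,a)$-node as a pair consisting of an element of $UP(Q)$ together with a scale, so that ``$u$-moves'' in the definition of $G_u$ unambiguously means all pairs whose first coordinate is $u$.
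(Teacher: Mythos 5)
Your proposal is correct and follows the paper's route exactly: the paper itself gives no separate proof of this lemma in the FV section, instead appealing to the proof of the identically named lemma in the right-continuous case, which is precisely the induction on $\sum_s out^\sigma_s$ with the $G_u$/$G_{\neg u}$ split and the view-based combination of the two \I strategies that you reproduce. Your remarks on the extra node shapes $(q,\dag)$ and $(q,\dag,a)$ and on keeping ``almost positional'' coherent across them are exactly the (purely notational) adjustments the paper intends.
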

 
Finally, we restrict  the time scale that  Player \O chooses.
We say that a strategy of \O is well-scaled if the  $i$-th time move from  a state of the form $(q,\dag,a)$ has the scale $2^{-i}$.
   \begin{lem}[$2^{-i}$ scale]
    \O has an almost positional simple winning strategy if and only if
    \O has a well-scaled almost positional simple winning strategy.

 \end{lem}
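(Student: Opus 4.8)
The plan is to follow, almost verbatim, the proof of the ``$2^{-i}$ scale'' lemma for right-continuous signals in Section~\ref{sect:ch-rc}. One direction is immediate: a well-scaled almost positional simple winning strategy is in particular an almost positional simple winning strategy. For the converse, let $\sigma$ be an almost positional simple winning strategy for \O in $\cG^{arena}_\cA$, and let $\sigma^{exp}$ be obtained from $\sigma$ by retaining every untimed choice and, on the $i$-th move made from a node of the form $(q,\dag,a)$, replacing the time scale by $2^{-i}$; this is well defined because $\sigma$ is almost positional, and it is still simple. I claim $\sigma^{exp}$ is winning. Towards a contradiction, suppose $\pi^{exp}$ is a play consistent with $\sigma^{exp}$ that is winning for \Ip; I will build a play $\pi$ consistent with $\sigma$ that is also winning for \Ip, contradicting the choice of $\sigma$.

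The play $\pi$ is constructed round by round, maintaining after the $i$-th move of \I the invariants: $\pi_i$ and $\pi^{exp}_i$ (1) end in the same node; (2) visit the same set of $Q$-states; (3) satisfy that infinitely often the duration of $\pi_i$ exceeds that of $\pi^{exp}_i$; and (4) if $\pi_{i+1}=\pi_i w_{i+1}$ and $\pi^{exp}_{i+1}=\pi^{exp}_i w'_{i+1}$ then $w_{i+1}\equiv_{state} w'_{i+1}$. The ``singular'' portions of a round — \Op's choice of $b\in\Sigma_{out}$ moving $(q,a)\to(q',\dag)$ and the step $(q,\dag)\to(q,\dag,a)$ — carry no time scale, so they are copied verbatim from $\pi^{exp}$ to $\pi$ and preserve all invariants trivially. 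For a signal round, $\sigma$ and $\sigma^{exp}$ prescribe the same $UP(Q)$-string $u$; when, in $\pi^{exp}$, \I interrupts $u$ at a position $n$ (choosing a left or a right discontinuity, i.e.\ reading off $u(n)$ with $n$ odd or even), Lemma~\ref{lem:later-choice} provides for $u$ under $\sigma$ a position $n'$ of the same parity as $n$ with $u(n')=u(n)$ and $q_1\cdots q_{n'}\equiv_{state} q_1\cdots q_n$ (for an interruption inside the lag we simply reuse the lag position). We let \O in $\pi$ play $u$ with a large scale and have \I interrupt at $n'$; this preserves (1), (2), (4), and makes this round's duration in $\pi$ at least that of the corresponding round of $\pi^{exp}$, and strictly larger whenever the interruption is ``big'' (inside the period).

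Finally I verify $\pi$ is winning for \Ip. If $\pi^{exp}$ is finite it ends outside $F$ (else \O wins), and by (1) so does $\pi$, which is then a finite play winning for \Ip. If $\pi^{exp}$ is infinite and winning for \Ip, then $\lim t^{exp}_i=\infty$ and the maximal priority occurring infinitely often is odd (else \O wins); since all interruptions in the lag contribute at most $d_Q\cdot 2^{-i}$ to the elapsed time, divergence of $\pi^{exp}$ forces infinitely many ``big'' interruptions, so by (3) the duration of $\pi$ is infinite as well, and by (4) each round of $\pi$ visits the same $Q$-states as the corresponding round of $\pi^{exp}$, so the maximal priority occurring infinitely often in $\pi$ is the same, odd. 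Hence $\pi$ is winning for \Ip, the desired contradiction. The only point requiring care — and it is purely bookkeeping, not a new idea — is that the $UP(Q)$-strings in the FV arena interleave the sample-point states $u(2n)$ with the open-interval states $u(2n+1)$, so the later-position substitution of Lemma~\ref{lem:later-choice} must respect this parity (left versus right discontinuity) together with the lag/period (``small''/``big'') classification and the induced edge priorities; all of this is immediate from $v\equiv_{state}vv$ and $u\equiv_{state}uv$ for the decomposition $w=uv^\omega\in UP(Q)$.
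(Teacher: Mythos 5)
Your proof is correct and follows essentially the same route as the paper: the paper leaves this FV-version lemma without an explicit proof, deferring to the proof of the corresponding ``$2^{-i}$ scale'' lemma in the right-continuous case, and your argument is precisely that proof (same $\sigma^{exp}$ construction, same four invariants, same lag-versus-period case split and the observation that lag interruptions contribute at most $d_Q\cdot 2^{-i}$). Your added bookkeeping about the parity of positions in the $UP(Q)$-strings (sample points at even indices, open-interval values at odd indices) and about the untimed singular moves is exactly the adaptation the paper intends when it says the FV proofs ``do not require novel insights.''
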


\begin{lem}[Deciding whether a well-scaled  strategy is winning] Let $\sigma$ be a well-scaled  almost positional winning strategy for \Op.
It is decidable whether $\sigma$ is winning.
\end{lem}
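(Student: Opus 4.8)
The plan is to transport the right-continuous argument, i.e.\ the proof of Lemma~\ref{lem:decide-strategy-iswinning}, to the richer arena $\cG^{arena}_\cA$ that now carries the extra node types for right discontinuities and the two kinds of interruption edges (those going to a node $(q',b)$ and those going to a node $(q',\dag,b)$). So fix a well-scaled almost positional strategy $\sigma$ for $\mathcal{O}$; we want to decide whether $\sigma$ is winning. Since $\sigma$ is almost positional it selects, at every $\mathcal{O}$-node, a unique untimed successor; deleting all other outgoing edges of $\mathcal{O}$-nodes yields a finite graph $G_\sigma$ which is an arena for the single player $\mathcal{I}$, and all time scales are already fixed by well-scaling ($2^{-i}$ on the $i$-th move out of a node $(q,\dag,a)$). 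Every play consistent with $\sigma$ is, up to the timing chosen by $\mathcal{I}$, a path through $G_\sigma$, and conversely.

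The claim to be proved is the exact analogue of Lemma~\ref{lem:decide-strategy-iswinning}: player $\mathcal{I}$ wins against $\sigma$ if and only if in $G_\sigma$ either (A) some node $(q,a,u)\notin F$ is reachable, or (B) some reachable cycle $C$ contains at least one edge labeled by a big priority and the maximal priority occurring on $C$ is odd. For the ``if'' direction: under (A), $\mathcal{I}$ drives the play to $(q,a,u)$ and accepts, producing a finite play whose last node is outside $F$, which $\mathcal{O}$ loses; under (B), $\mathcal{I}$ reaches $C$ and loops along it forever, and whenever she traverses a big-labeled edge she chooses, by Lemma~\ref{lem:later-choice}, an interruption point far out in the periodic part so that that round lasts at least one time unit --- hence $\lim t_i=\infty$, the play is infinite, and since its maximal recurrent priority is odd, $\mathcal{O}$ loses. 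For the ``only if'' direction: if (A) fails then every finite ($\mathcal{I}$-accepted) play ends at a node in $F$, so any winning play for $\mathcal{I}$ must be infinite; an infinite path in the finite graph $G_\sigma$ eventually stays in one strongly connected set, from which we extract a cycle $C$ traversed infinitely often. If only small-labeled edges recurred, then from some move $j$ on each round would last less than $d_Q\cdot 2^{-i}$, so $\lim t_i<\infty$ and $\mathcal{O}$ would win --- contradiction; hence a big edge recurs, and for $\mathcal{I}$ to win the maximal recurrent priority must be odd (otherwise $\mathcal{O}$ wins by the parity condition irrespective of whether time converges). Thus a cycle as in (B) is reachable.

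Decidability then follows: both (A) and (B) are effectively checkable on the finite graph $G_\sigma$ --- (A) by ordinary reachability, (B) by computing the reachable strongly connected components and, within each, testing for a returning cycle that passes through a big edge and has odd maximal priority. Since there are only finitely many well-scaled almost positional strategies for $\mathcal{O}$ and each is testable this way, it is decidable whether $\sigma$ is winning, and hence (combined with the earlier reductions from $\cG_\Psi$ to $\cG_\cA$ to $\cG^{arena}_\cA$ and the reduction to simple, almost positional, well-scaled strategies) whether $\mathcal{O}$ has a winning strategy, i.e.\ whether $\Psi$ is implementable over FV-signals.

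I expect the only real work to be the bookkeeping rather than any new idea: one must check that $G_\sigma$ faithfully records $\mathcal{I}$'s options --- in particular that a left interruption moves to $(q',b)$ and a right interruption to $(q',\dag,b)$ with the left-limit state computed correctly --- and that the small/big classification of edges, together with well-scaling, yields exactly the time-divergence dichotomy used above (small edges only $\Rightarrow$ geometric convergence of $\sum(t_{i+1}-t_i)$; a recurrent big edge $\Rightarrow$ $\mathcal{I}$ can force divergence). Conceptually this is just Lemma~\ref{lem:decide-strategy-iswinning} carried over to the two-sided-discontinuity arena, as the section's preamble already anticipates.
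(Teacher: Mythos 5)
Your proposal is correct and follows exactly the route the paper intends: the paper states this lemma without proof, deferring to the right-continuous case, and your argument is precisely the transport of Lemma~\ref{lem:decide-strategy-iswinning} (the (A)-or-(B) characterization on $G_\sigma$, big/small edges, and the $d_Q\cdot 2^{-i}$ convergence bound) to the enlarged arena with the $(q,\dag,a)$-nodes and left/right-discontinuity edges. The only bookkeeping worth making explicit is that in the FV arena priorities live on nodes as well as on interruption edges, so ``maximal priority on $C$'' must account for both.
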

Since there are only finitely many well-scaled  almost positional winning strategies, we obtain that
 \begin{lem}\label{lem:fv-last} 
It is decidable whether \O has a winning strategy in  timed game on $\cG^{arena}_\cA$.
\end{lem}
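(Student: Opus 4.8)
The plan is to read off Lemma~\ref{lem:fv-last} from the reductions already set up in this section, turning the question ``does \O have a winning strategy?'' into a finite enumeration. First I would record the chain of equivalences: \O has a C-operator implementing $\Psi$ iff \O wins $\cG_\Psi$ (Lemma~\ref{lem:red-fv}), iff \O wins $\cG_\cA$ (the equivalence of $\cG_\Psi$ and $\cG_\cA$), iff \O wins the timed game on $\cG^{arena}_\cA$ (combining the reduction to simple strategies with Lemma~\ref{lem:fv-arena}), iff \O has an almost positional simple winning strategy there (the ``almost positional'' lemma), iff \O has a \emph{well-scaled} almost positional simple winning strategy (the ``$2^{-i}$ scale'' lemma). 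So it is enough to decide whether a well-scaled almost positional simple winning strategy exists, and this --- together with Lemma~\ref{lem:red-fv} --- is precisely what is needed to deduce Theorem~\ref{mainthm}.

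The key observation for the enumeration is that there are only finitely many well-scaled almost positional simple strategies for \Op, and that they are effectively listable. The arena $\cG^{arena}_\cA$ is a finite directed graph; each \O node has finitely many outgoing edges; ``almost positional'' forces the untimed choice to depend only on the current node; and ``well-scaled'' pins the time scale used at the $i$-th move to $2^{-i}$. Hence such a strategy is literally a function assigning to each \O node one of its finitely many outgoing edges, and there are finitely many of those. For each candidate $\sigma$, the preceding lemma (``Deciding whether a well-scaled strategy is winning'') says it is decidable whether $\sigma$ is winning. The algorithm for Lemma~\ref{lem:fv-last} therefore enumerates all candidates, runs that test on each, and answers ``yes'' exactly when one of them passes; correctness is immediate from the equivalences of the first paragraph.

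Consequently the only place where real work is hidden is that preceding lemma, and that is the step I expect to be the obstacle --- although, as the section preamble warns, it is a routine adaptation of Lemma~\ref{lem:decide-strategy-iswinning} from the right-continuous case and needs no new idea. Given $\sigma$, one forms the one-player sub-arena $G_\sigma$ for \I by keeping at each \O node only the $\sigma$-prescribed edge; then \I beats $\sigma$ iff $G_\sigma$ has a reachable ``bad'' node $(q,a,u)\notin F$ (\I stops there and wins a finite play) or a reachable cycle that uses at least one \emph{big}-labelled edge and whose maximal priority is odd (\I loops: a big edge lets her delay long enough that each traversal eats a fixed positive amount of time, so $\lim t_i=\infty$ fails for \Op, while the odd top priority swings the parity condition her way). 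Both are decidable reachability/cycle conditions on the finite graph $G_\sigma$. The extra care relative to the right-continuous proof is bookkeeping only: the FV arena carries the extra $\dag$-nodes and two kinds of interruption edges --- $(q,a,u)\to(q',b)$ for a left discontinuity (odd position of $u$) and $(q,a,u)\to(q',\dag,b)$ for a right discontinuity (even position) --- and one checks that the time-convergence and parity analysis survive these, which it does without new phenomena.
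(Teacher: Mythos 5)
Your proposal is correct and follows essentially the same route as the paper: the paper likewise deduces the lemma in one line from the finiteness of the set of well-scaled almost positional simple strategies, the decidability of whether any single such strategy is winning, and the preceding chain of equivalences reducing the original game to this restricted class. Your additional sketch of how to test a fixed strategy (reachable node outside $F$, or a reachable odd-maximal-priority cycle through a big edge in the one-player sub-arena $G_\sigma$) matches the paper's Lemma \ref{lem:decide-strategy-iswinning} argument from the right-continuous case, which the FV section indeed only adapts.
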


Finally, Theorem \ref{mainthm} follows from Lemmas  \ref{lem:red-fv}-\ref{lem:fv-arena} and Lemma \ref{lem:fv-last}.

\section{Conclusion} \label{sect:conc}

We investigated  a generalization of the Church synthesis problem to  the continuous time domain  of the non-negative reals.
 We demonstrated that
in    continuous time there are phenomena  which are  very different from
the canonical discrete time domain of
 natural numbers. We proved  the decidability
the Church synthesis problem when the specifications are given by  $\MLO$ formulas. 

One can consider a specification language stronger than $\MLO$.  For example, let
 ${ \mathcal M}  : = \langle \reals,SIG,<_R, P_1,\dots ,P_k\rangle$ be an extension of $Sig$ structure by unary predicates $P_1,\dots, P_k$.
  If $MSO[<,P_1,\dots,P_k]$ decidable, then the   Church synthesis problem for specifications given
 by    $MSO[<,P_1,\dots,P_k]$  formulas is decidable. 

The extension of   $\MLO$  by $+1$ function is undecidable. Hence, the    Church synthesis problem for $MSO[<,+1]$  is undecidable.  There are decidable fragments of  $MSO[<,+1]$  which use metric constraints in a more restricted
form. The most widespread such  decidable metric language is Metric Interval Temporal Logic (MITL) \cite{AFH96}.
 We do not know whether the Church problem is decidable for MITL specifications.
 Proof techniques from \cite{DGRR09} might be usefull to show decidability/undecidability of the  Church synthesis problem for sub-logics of
 MITL.
 
 The satisfiability problem for $\MLO$ is non-elementary. The complexity of our   algorithm for the Church synthesis problem  is non-elementary because  
    any translation
of    $\MLO$-formulas  to  equivalent parity automata  is non-elementary.
  Another reason for the high complexity of the algorithm is that \O chooses her  moves from $UP(Q)$ and the size of $UP(Q)$ is triple exponential  (in the size of $Q$).
 
% 
%Fattal considered specifications (of signal languages)  given by deterministic parity automata \cite{Fat23}.
%He reduced the Church problem to  games in which \O chooses on her move  only one letter from $\Sigma_{out}$ and its duration.
%He provided an NP upper bound for the synthesis problem when the specifications  are provided by deterministic parity automata.
%For specifications given by deterministic Büchi automata  he provided a polynomial algorithm.
%Unfortunately, even for specifications given by deterministic automata of  parity 3 we do not know
%whether there exists  a polynomial algorithm (in the discrete case for every fixed bound $p$ on the priorities of a parity automaton, there s a polynomial algorithm).
Fattal considered specifications of signal languages given by deterministic parity automata \cite{Fat23}. He reduced the Church problem to games in
 which  
 \O
  selects, on her move, a single letter from the output alphabet 
   along with its duration. He provided an NP upper bound for the synthesis problem when the specifications are given by deterministic parity automata.
For   deterministic Büchi automata specifications, Fattal found a polynomial-time algorithm. However, even for specifications given  by deterministic parity automata with  three  priorities, it remains unknown whether a polynomial-time algorithm exists. (In the discrete case, 
for every fixed bound $p$ on the priorities of a parity automaton, there is a polynomial algorithm.)

\section*{Acknowledgement}
%I would like to thank anonymous referees for their insightful comments.

We  would like to express our  gratitude to Ori Lahav and the anonymous referees for their insightful comments.

\bibliographystyle{alphaurlsortkey}
\bibliography{MyBib1}
 
\appendix  % I moved the appendix to the end of the document. If you do not agree with this, let us know
\section{Proof of  Lemma \ref{lem:prop-equiv}(3)}
%\input{append}
%A coloring of a set $I$ is a function $f$ from the set of unordered pairs of $I$ into a finite set $T$ of colors
A coloring of  $\om$ is a function $f$ from the set of unordered pairs of distinct elements of  $\om$ into a finite set $T$ of colors.
For $x<y$, we will write $f(x,y)$ instead of $f(\{x,y\})$.
The  coloring $f$ is additive if $f(x,y)=f(x',y') $ and $ f(y,z)=f(y',z') $  imply
that $f(x,z)=f(x',z')$.  In this case a partial function $+$ is defined on $T$ such that 
if $x<y<z$, then $f(x,y)+f(y,z)=f(x,z)$.
A set $J$ is $t$-homogeneous (for $f$ and the  color $t$) %if there is a color $t$ such that $f(x,y)=t$ foe every $x<y\in J$.
if  $f(x,y)=t$ for every $x<y\in J$.

Ramsey's theorem for additive coloring \cite{She75} states:
\begin{thm}\label{th:ramsey} If $f$ is an additive coloring of $\om$, then there is $t\in T$ and  an infinite $t$-homogeneous subset $J$  of $\om$.
Moreover, there is $t_0\in T$ such that $f(0,j)=t_0$
for every $j\in J$.
%if $j$ is an  minimal element of $J$,  then $f(0,j)

\end{thm}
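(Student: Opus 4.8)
The plan is to derive Theorem~\ref{th:ramsey} from the classical infinite Ramsey theorem for pairs with finitely many colours, followed by one extra pigeonhole refinement; the additivity hypothesis is used only at the end, to upgrade the homogeneous colour to an idempotent one (the property that is actually needed when the theorem is applied to prove Lemma~\ref{lem:prop-equiv}(3)).

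First I would recall the standard construction, so that the appendix stays self-contained. Start from the infinite set $A_0:=\{1,2,3,\dots\}$, deliberately leaving out $0$. Inductively, given an infinite $A_i\subseteq\om$, set $n_i:=\min A_i$; since $T$ is finite there is a colour $c_i\in T$ and an infinite $A_{i+1}\subseteq A_i\setminus\{n_i\}$ with $f(n_i,m)=c_i$ for every $m\in A_{i+1}$. This produces $n_0<n_1<\dots$ with $f(n_i,n_j)=c_i$ whenever $i<j$. As $T$ is finite, some colour $t$ satisfies $c_i=t$ for infinitely many $i$; put $H:=\{n_i : c_i=t\}$, an infinite $t$-homogeneous set not containing $0$.

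Next I would secure the ``moreover'' clause by one more pigeonhole step, now on the map $j\mapsto f(0,j)$ restricted to $H$ (meaningful precisely because $0\notin H$): there are $t_0\in T$ and an infinite $J\subseteq H$ with $f(0,j)=t_0$ for all $j\in J$. Since every subset of a $t$-homogeneous set is $t$-homogeneous, this single set $J$ witnesses both conclusions at once. Finally, writing $J=\{j_1<j_2<j_3<\dots\}$, additivity gives $f(j_1,j_3)=f(j_1,j_2)+f(j_2,j_3)=t+t$ while homogeneity gives $f(j_1,j_3)=t$, hence $t+t=t$; this idempotency is exactly what lets the downstream factorisation of an $\om$-string into $\equiv_{state}$-equivalent blocks go through, taking $f(i,j)$ to be the $\equiv_{state}$-class of the infix between positions $i$ and $j$ (whose additivity is Lemma~\ref{lem:prop-equiv}(2)).

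I do not anticipate a genuine obstacle. The only points needing care are: arranging $0\notin H$ so that $f(0,\cdot)$ is defined on it; carrying out the two pigeonhole refinements in the correct order, homogeneity first and the base-point colour second, so that $J$ retains both properties; and noting that the partial operation $+$ is indeed defined on the pair $(t,t)$, since $(t,t)=(f(j_1,j_2),f(j_2,j_3))$ arises from an actual increasing triple in $J$.
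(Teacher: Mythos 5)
Your proof is correct. Note that the paper does not actually prove this theorem: it is quoted from Shelah's paper \cite{She75} and used as a black box in the appendix, so there is no internal proof to compare against. Your argument is the standard one --- the classical infinite Ramsey theorem for pairs via the nested-subsets construction, followed by a pigeonhole refinement on $j\mapsto f(0,j)$ to obtain the ``moreover'' clause --- and every step checks out, including the care taken to keep $0$ out of the homogeneous set so that $f(0,\cdot)$ is defined on it, and the observation that a subset of a $t$-homogeneous set remains $t$-homogeneous, so a single $J$ witnesses both conclusions. Your remark that additivity is not needed for the statement itself, only to conclude $t+t=t$ (the idempotency that drives the factorisation in Lemma \ref{lem:prop-equiv}(3)), is accurate and in fact clarifies how the theorem is deployed in the appendix.
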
 
Now, we are ready to prove  Lemma \ref{lem:prop-equiv}(3).

Let the set of colors   be the set of $\equiv_{state}$-equivalence classes. Given $\pi$.
Define coloring $f$ as $f(i,j):=$ the equivalence class of the substring of $\pi$ on the interval $[i,j)$.
By Lemma \ref{lem:prop-equiv}(2), this is an additive coloring.
Let $J=\{j_0<j_1<\cdots\} $ be an infinite homogeneous subset of color $v$ (such a set exists by Theorem \ref{th:ramsey}).

 Define $u_0$ to be the substring of $\pi$ over $[0,j_0)$, and $u_j$ to be the substring of $\pi$ over $[u_j,u_{j+1})$
 for $j>0$. Let $v_0$ be the equivalence class
of $u_0$.

 By Theorem \ref{th:ramsey} we obtain  $u_i$ and $u_iu_{i+1}$ and $u_j$  are all in the $v$-equivalence class for all $i,j>0$,
and $u_0\equiv_{state} u_0u_1$ are both in $v_0$ equivalence class.
Therefore,
  $u_i\equiv_{state} u_j \equiv_{state} u_iu_{i+1}$ for all $i,j>0$ and
   $u_0\equiv_{state} u_0u_1=u_0u_1\cdots u_j$. fot all $j\geq 0$.

\end{document}